\newcommand{\para}[1]{\paragraph{#1}}
\newcolumntype{L}[1]{>{\raggedright\let\newline\\\arraybackslash\hspace{0pt}}m{#1}}
\newcolumntype{C}[1]{>{\centering\let\newline\\\arraybackslash\hspace{0pt}}m{#1}}
\newcolumntype{R}[1]{>{\raggedleft\let\newline\\\arraybackslash\hspace{0pt}}m{#1}}
\newcommand{\MULcites}{\citep[\dots]{cao2015towards,ginart2019making,ullah2021machine,bourtoule2021machine,sekhari2021remember,gupta2021adaptive}\xspace}
\definecolor{ForestGreen}{rgb}{0.13, 0.55, 0.13}
\definecolor{Brown}{rgb}{0.59, 0.29, 0.0}
\newcommand{\greencheck}{{\textcolor{ForestGreen}{\ding{51}}}}
\newcommand{\overrestrictive}{{\textcolor{red}{{$\bm{\top}$}}}}
\newcommand{\underrestrictive}{{\textcolor{red}{{$\bm{\bot}$}}}}
\newcommand{\na}{{\textcolor{red}{\ding{55}}}}
\newtheorem{theorem}{Theorem}[section]
\newtheorem{claim}[theorem]{Claim}
\newtheorem{proposition}[theorem]{Proposition}
\newtheorem{lemma}[theorem]{Lemma}
\newtheorem{definition}[theorem]{Definition}
\newtheorem{remark}[theorem]{Remark}
\newtheorem{corollary}[theorem]{Corollary}
\newtheorem{example}[theorem]{Example}
\newcommand{\nt}[1]{{\color{blue}{#1}}}
\newcommand{\ot}[1]{{\color{red}{\sout{#1}}}}
\renewcommand{\ot}[1]{}%
\newcommand{\reals}{\mathbb{R}}
\newcommand{\eps}{\epsilon}
\newcommand{\cX}{{\mathcal{X}}}
\newcommand{\approxequiv}[1]{\ {\stackrel{{\tiny #1}}{\approx}}\ }
\newcommand{\cA}{{\mathcal{A}}}
\newcommand{\cB}{{\mathcal{B}}}
\newcommand{\cL}{{\mathcal{L}}}
\newcommand{\cY}{{\mathcal{Y}}}
\newcommand{\cZ}{{\mathcal{Z}}}
\newcommand{\bbN}{{\mathbb{N}}}
\newcommand{\set}[1]{{\left\{ {#1} \right \}}}
\newcommand{\abs}[1]{{\left| {#1} \right |}}
\newcommand{\norm}[2]{{\left\| {#1} \right \|_{#2}}}
\newcommand{\paren}[1]{{\left( {#1} \right )}}
\newcommand{\Bparen}[1]{{\Big( {#1} \Big )}}
\newcommand{\party}[1]{\ensuremath{\mathsf{#1}}\xspace}
\newcommand{\subj}{\party{Y}}
\newcommand{\env}{\party{E}}
\newcommand{\dummy}{\party{D}}
\newcommand{\cont}{\party{C}}
\newcommand{\simulator}{\party{Sim}}
\newcommand{\Sim}{\simulator}
\newcommand{\adv}{\party{A}}
\newcommand{\subjSilent}{\ensuremath{\mathcal{Y}_{\mathsf{silent}}}\xspace} %
\newcommand{\subjLift}{\ensuremath{\mathcal{Y}_{\mathsf{lift}}}\xspace} %
\newcommand{\subjDummy}{\ensuremath{\mathcal{Y}_{\mathsf{dummy}}}\xspace} %
\newcommand{\seq}{\ensuremath{\sigma}\xspace} %
\newcommand{\ins}[1]{\idkeyword{insert}(#1)}
\newcommand{\del}[1]{\idkeyword{delete}(#1)}
\newcommand{\dict}{\ensuremath{\mathcal{D}}\xspace}
\newcommand{\idkeyword}[1]{{\text{\sf {#1}}}\xspace}
\newcommand{\delete}{\idkeyword{delete}}
\newcommand{\predict}{\idkeyword{predict}}
\newcommand{\cid}{\ensuremath{\mathsf{cID}}\xspace}
\newcommand{\fail}{\ensuremath{\mathsf{fail}}\xspace}
\newcommand{\finish}{\texttt{finish}\xspace}
\newcommand{\s}{\ensuremath{\mathsf{state}}\xspace}
\newcommand{\Rand}{\ensuremath{R}\xspace}
\newcommand{\rand}{\ensuremath{r}\xspace}
\newcommand{\distrand}{\ensuremath{\mathcal{U}}\xspace} %
\newcommand{\out}{\ensuremath{\mathsf{out}}}
\newcommand{\Out}{\ensuremath{\overrightarrow{\mathsf{out}}}} %
\newcommand{\id}{{\ensuremath{id}}\xspace}
\newcommand{\msg}{\ensuremath{\mathsf{msg}}\xspace}
\newcommand{\queries}{\ensuremath{\vec{\mathsf{q}}}\xspace}
\newcommand{\answers}{\ensuremath{\vec{\mathsf{a}}}\xspace}
\newcommand{\sender}{\ensuremath{\mathsf{sender}}\xspace}
\newcommand{\receiver}{\ensuremath{\mathsf{receiver}}\xspace}
\newcommand{\trans}{\tau}
\newcommand{\exec}[1]{\ensuremath{\langle #1 \rangle}\xspace}
\newcommand{\view}{\ensuremath{V}\xspace}
\newcommand{\real}{\ensuremath{\mathit{real}}\xspace}
\newcommand{\ideal}{\ensuremath{\mathit{ideal}}\xspace}
\newcommand{\bit}[1]{\ensuremath{\{0,1\}^{#1}}\xspace}
\newcommand{\Ds}{\ensuremath{\mathbf{x}}\xspace} %
\newcommand{\ds}{\ensuremath{x}\xspace} %
\newcommand{\mech}{\ensuremath{\mathcal{M}}\xspace} %
\newcommand{\universe}{\ensuremath{\mathcal{X}}\xspace}
\newcommand{\Lap}{\mathsf{Lap}}
\newcommand{\contdp}{\ensuremath{\cont_{\mech}^{\mathsf{batch}}}}
\newcommand{\contpp}{\ensuremath{\cont_{\mech}^{\mathsf{pp}}}}
\newcommand{\datum}{x}
\newcommand{\data}{{\vec x}}
\newcommand{\univ}{\cZ}
\newcommand{\pred}{\psi}
\newcommand{\modelcoll}{\vec{\theta}}
\newcommand{\upreq}{\text{\sf UpReq}}
\newcommand{\supp}{\text{supp}}
\newcommand{\canon}[1]{\left[ {#1} \right]}
\newcommand{\IN}{\ensuremath{\mathtt{in}}\xspace}
\newcommand{\OUT}{\ensuremath{\mathtt{out}}\xspace}
\newcommand{\adt}{\ensuremath{\mathsf{ADT}}}
\newcommand{\impl}{\ensuremath{\mathsf{Impl}}\xspace}
\newcommand{\subscriptLogical}{\mathsf{adt}}
\newcommand{\subscriptPhysical}{}
\newcommand{\op}{\ensuremath{\mathsf{op}}\xspace}
\newcommand{\code}{\ensuremath{\mathsf{code}}\xspace}
\newcommand{\stateLog}{\ensuremath{{s}_{\subscriptLogical}}}
\newcommand{\statePhys}{\ensuremath{{s}_{\subscriptPhysical}}}
\newcommand{\outLog}{\ensuremath{\mathsf{out}_\subscriptLogical}}
\newcommand{\outPhys}{\ensuremath{\mathsf{out}}}
\newcommand{\logequiv}{\ensuremath{\stackrel{L}{\equiv}}\xspace}
\newcommand{\ahiJointDist}{\ensuremath{\mathfrak{D}}}
\newcommand{\side}{\ensuremath{\mathtt{side}}\xspace}
\newcommand{\chall}{\text{\sf challenge}\xspace}
\newcommand{\intru}{\text{\sf intrusion}\xspace}
\newcommand{\reg}{\text{\sf regular}}
\newcommand{\N}{\mathbb{N}}
\newcommand{\X}{\mathcal{X}}
\newcommand{\true}{\text{\sf true}\xspace}
\newcommand{\false}{\text{\sf false}\xspace}
\newcommand{\tick}{\text{\sf tick}\xspace}
\newcommand{\learn}{\mathsf{Learn}}
\newcommand{\unlearn}{\mathsf{Unlearn}}
\newcommand{\model}{\ensuremath{h}}
\title{Control, Confidentiality, and the Right to be Forgotten}
\author{Aloni Cohen\thanks{Department of Computer and Data Science, University of Chicago. \texttt{aloni@uchicago.edu}} \and Adam Smith\thanks{Department of Computer Science, Boston University. \texttt{ads22@bu.edu}}  \and Marika Swanberg\thanks{Department of Computer Science, Boston University. \texttt{marikas@bu.edu}}  \and Prashant Nalini Vasudevan\thanks{Department of Computer Science, National University of Singapore. \texttt{prashant@comp.nus.edu.sg}}}
\begin{document}
\maketitle

\begin{abstract}
    Recent digital rights frameworks give users the right to \emph{delete} their data from systems that store and process their personal information (e.g., the ``right to be forgotten'' in the GDPR).     
    
    How should deletion be formalized in complex systems that interact with many users and store derivative information?
    We argue that prior approaches fall short. Definitions of \emph{machine unlearning} \cite{cao2015towards} are too narrowly scoped and do not apply to general interactive settings. The natural approach of \emph{deletion-as-confidentiality} \cite{GGV20} is too restrictive: by requiring secrecy of deleted data, it rules out social functionalities.

    We propose a new formalism: \emph{deletion-as-control}. 
    It allows users' data to be freely used before deletion, while also imposing a meaningful requirement after deletion---thereby giving users more control.
    
    Deletion-as-control provides new ways of achieving deletion in diverse settings.
    We apply it to social functionalities, and give a new unified view of various machine unlearning definitions from the literature. This is done by way of a new adaptive generalization of {history independence}.

    Deletion-as-control also provides a new approach to the goal of \emph{machine unlearning}, that is, to maintaining a model while honoring users' deletion requests. We show that publishing a sequence of updated models that are differentially private under continual release satisfies deletion-as-control. 
    The accuracy of such 
    an algorithm  does not depend on the number of deleted points, in contrast to the machine unlearning literature.

\end{abstract}

\newpage 
{\footnotesize \tableofcontents}

\newpage

    \epigraph{Someday, this baby and other babies from her cohort will be 30 and there will be an absolutely bananas cache of data about what form and hue their poops took. Maybe someone will hack it and it will derail a presidential campaign news cycle.}{\textit{Alexandra Petri, from ``What I've been up to the last four months'' }} %

\section{Introduction}
The long-term storage of modern data collection carries serious risks, including often-surprising disclosures \citep{data-breaches,carlini2021extracting,JASONreport,Cohen_Nissim_2020,DHS-location-data}, manipulation, and epistemic bubbles.
The permanence of our digital footprints can also chill expression, with every word weighed against the risk of out-of-context blowback in the future.

Data protection laws around the world have begun to challenge this permanence. {The} EU's General Data Protection Regulation provides an individual \emph{data subject} the right to request ``the erasure of personal data concerning him or her'' and delineates when a \emph{data controller} must oblige.  
California followed suit in 2020, and similar rights take effect in Virginia, Colorado, Connecticut, and Utah in 2023~\cite{statePrivLaws}.

In the modern data ecosystem, however, it is not easy to articulate what constitutes the ``erasure'' of personal data. Data is not merely stored in databases---it is used to train machine learning models, compute and publish statistics, and drive decisions.
Such complexity and nuance challenges simplistic thinking about erasure, and the sheer number of ways data are used precludes case-by-case reasoning about erasure compliance. 

Giving users more control over data is today a central policy goal.
Decades of cryptography has given us good definitional tools for reasoning about non-disclosure of data\nt{---}enabling the development of technical solutions, informing policy decisions, and influencing practice. But we lack similar tools for reasoning about control over data and, in particular, deletion. While there has been a flurry of recent work on so-called \emph{machine unlearning}~\MULcites, often directly motivated by legal compliance,  there remain basic gaps in our understanding.

This paper sheds light on what data deletion means in complex data processing scenarios, and how to achieve it.
We provide a formulation that unifies and generalizes previous technical approaches that only partially answer this question.  Though we make no attempt to strictly adhere to any specific legal right to erasure, we aim to incorporate more of its contours than prior technical work on erasure.

Our new formulation, called \textit{deletion-as-control}, requires that after an individual Alice requests erasure, the data controller's future behavior and internal state should not depend on Alice's data \emph{except insofar as that data has already affected other parties in the world}. In this way, Alice's autonomy need not require secrecy; she has a say about how her data is used regardless of past or future disclosure.

Our definition meaningfully captures a variety of data controllers, including ones facilitating social interactions and maintaining accurate predictive models.
In contrast, prior approaches yield a patchwork of, at times, contradictory and counterintuitive interpretations of erasure.

\subsection{Touchstone Examples}
\label{sec:touchstone}

In order to understand our and prior approaches to data deletion, it is helpful to have in mind a few concrete examples of functionalities that separate our approaches from prior ones.
We describe four touchstone functionalities, then briefly discuss how they relate to prior approaches and our new notion.

    \begin{description}[leftmargin=12pt]

    \item [Private Cloud Storage.] 
    Users can upload files for cloud storage and future download. Only the originating user may download a file and the files are never used in any other way. The existence of files is only ever made known to the originating user, and the controller publishes no other information.

    \item [Public Bulletin Board.] Users can submit posts to the public bulletin board. The bulletin board simply displays all user posts currently in the controller's internal storage, with no other functionalities (e.g. responding, messaging, etc.).

    \item [Batch Machine Learning.] Users contribute data during some collection period. At the end of the period, the data controller trains a predictive model on the resulting dataset. The data controller then publishes the model.

    \item[Public Directory + Usage Statistics.]
    Users upload their name and phone number to be listed in a directory.
    The data controller allows anybody to search for a listing in the directory, and each week reports a 
    count of the number of distinct users that have looked up a phone number so far. (Other statistics are possible too---the weekly count of new users, say.) 
    
\end{description}

\para{The touchstone examples and prior approaches}
Figure~\ref{fig:touchstone_prior_defs} summarizes how the touchstone functionalities fare under deletion-as-control and under three prior approaches to defining deletion: \textit{deletion-as-confidentiality}, \textit{machine unlearning}, and \textit{simulatable deletion} (Section~\ref{sec:prior-work}).\footnote{We introduce the terms \emph{deletion-as-confidentiality} and \emph{simulatable deletion} for the definitions of  \cite{GGV20} and \cite{godin2021deletion}, respectively, to more clearly distinguish them from each other and from deletion-as-control.  \cite{GGV20} use the term \emph{deletion-compliance}; \cite{godin2021deletion} use \emph{strong deletion-compliance} and \emph{weak deletion-compliance}, respectively.}
Together, the touchstone functionalities illustrate that prior approaches constitute an inconsistent patchwork, each falling short on at least two of the examples.

Deletion-as-confidentiality \citep{GGV20} is over-restrictive. Briefly, it requires that third parties cannot distinguish whether a data subject Alice requested erasure from the controller or simply never interacted with the controller in the first place. This implies, among other things, that Alice's data is kept confidential from all other parties even if Alice never requests its erasure. This confidentiality-style approach is well-suited for Private Cloud Storage, but deletion-as-confidentiality precludes inherently social functionalities, like the Bulletin Board and Directory. No controller implementing these functionalities could ever satisfy deletion-as-confidentiality. Why would Alice post messages if they could never be made public?

On the other hand, machine unlearning---even in its strongest incarnation, due to \citet{gupta2021adaptive}---is too narrowly scoped. It is specialized to the setting of machine learning and does not consider general interactive functionalities. The definition is not applicable to  the Cloud Storage, Bulletin Board, and Directory functionalities. Definitions from the machine unlearning literature are meaningful for the Batch Machine Learning functionality, where they correspond to versions of \emph{history independence}.

Even where multiple definitions are meaningful, they may impose different requirements. For example, both deletion-as-confidentiality and deletion-as-control admit implementations of Batch Machine Learning that are \emph{persistent} in that the published models never need to be updated. On the other hand, history independence requires that any useful model be
updated after enough deletion requests.

\para{The touchstone examples and deletion-as-control}
We show that each of the touchstones can be implemented in a manner that satisfies our new notion, deletion-as-control.

    \begin{description}[leftmargin=12pt]

    \item[Private Cloud Storage.] To remove a user, the controller deletes all the user's files from its internal storage. Such a controller satisfies deletion-as-control if its data structures are \textit{history independent} (Corollary~\ref{cor:priv_cloud_pub_bboard}).

    \item [Public Bulletin Board.]{To remove a user, the controller deletes all of the user's posts from its internal storage and, as a result, from the public-facing bulletin board.} As with cloud storage, such a controller satisfies deletion-as-control if its data structures are history independent (Corollary~\ref{cor:priv_cloud_pub_bboard}).

    \item [Batch Machine Learning.] Deletion-as-control is achieved if the dataset is deleted after training and training is done with differential privacy, e.g.,  using DP-SGD \cite{bassily2014private} (Corollary~\ref{cor:dp_ml}).  To remove a user, the controller does nothing---it simply ignores deletion requests. The resulting  deletion guarantee is parameterized by the privacy parameters $\eps$ and $\delta$. 

    \item[Public Directory + Usage Statistics.] Deletion-as-control can be achieved by combining differential privacy and history independence (Corollary~\ref{cor:directory_DP_Stats}). The statistics are computed using a mechanism that satisfies a stringent form of DP---pan-privacy under continual release---while the public directly is implemented using a history independent data structure.
    To remove a user, the controller {deletes their listing from the public directory and its associated data structures,} but leaves the data structures for the DP statistics unaltered.
\end{description}

\begin{figure*}
    \centering
    \footnotesize
    \begin{tabular}{L{4.5cm}  C{2.5cm} C{2.5cm} C{2.5cm} C{2.5cm}}
         \toprule
          & Private Cloud Storage  & Public Bulletin Board &  Batch Machine Learning  & Public Directory + Usage Stats   \\
         \midrule[\heavyrulewidth]
          Deletion-as-Confidentiality \citep{GGV20} & {\Large\greencheck}  &  {\Large\overrestrictive} & {\Large\greencheck}& {\Large\overrestrictive}\\
          \midrule
          Machine unlearning \newline\citep[\dots]{gupta2021adaptive}& {\Large\na} & {\Large\na} & {\Large\greencheck} & {\Large\na}  \\
        
          \midrule
         Simulatable deletion  \newline \citep{godin2021deletion}&  {\Large\greencheck} & {\Large\underrestrictive} & {\Large\underrestrictive}& {\Large\underrestrictive} \\
         \midrule
         
         Deletion-as-control  & 
         {\Large\greencheck}  & 
         
         {\Large\greencheck} &          %
         {\Large\greencheck}  &         %
         {\Large\greencheck}  \\ 
        (this work) 
          & (Corollary \ref{cor:priv_cloud_pub_bboard}) & (Corollary \ref{cor:priv_cloud_pub_bboard}) & (Corollary \ref{cor:dp_ml}) & (Corollary \ref{cor:directory_DP_Stats})\\
         \bottomrule
    \end{tabular}
     \begin{tablenotes}
        \item \greencheck: Definition is satisfied by implementations with meaningful deletion guarantees.%
        \item \underrestrictive: Definition is under-restrictive: allows \emph{vacuous} implementations with no meaningful deletion of any kind.
        \item \overrestrictive: Definition is over-restrictive: no implementation of the functionality satisfies the definition.
        \item \na: Definition does not apply to the functionality.
    \end{tablenotes}
    \caption{Application of deletion definitions to touchstone functionalities.}
    \label{fig:touchstone_prior_defs}
\end{figure*}

\subsection{Contributions}

    \para{Defining deletion-as-control}
    Our primary contribution is a formalization of \emph{deletion-as-control}, an important step towards providing individuals greater control over the use of personal data.
    The new notion applies to general data controllers and interaction patterns among parties, building on the modeling of \citep{GGV20}. 
    As described below, it unifies existing approaches within a coherent framework and captures all the touchstone examples of Section~\ref{sec:touchstone}.
 
    The goal embodied by deletion-as-control is not so much to hide data from others as to exercise control over how the data is used. Until Alice requests erasure, deletion-as-control should not limit the controller's usage of her data. But after erasure, the data controller's future behavior and internal state should not depend on Alice's data \emph{except insofar as that data has already affected other parties in the world}. 
    In this way, Alice's autonomy need not imply secrecy; she has a say about how her data is used regardless of past or future disclosure. 

\bigskip\noindent
Our approach provides new ways of achieving meaningful deletion in diverse settings.

\para{Capturing social functionalities via history independence}
    Deletion-as-control applies to a wide range of controllers that provide ``social'' functionalities {where prior approaches fall flat (e.g., the Public Bulletin Board touchstone)}.
    {Along the way, we give a new unified view of the various machine unlearning definitions from the literature.}
    
    {Both flow from a theorem roughly stating that deletion-as-control is implied by} \emph{adaptive history independence}, a generalization of the cryptographic notion of \textit{history independence}~\citep{micciancio1997oblivious, naor2001anti} that we introduce. 
    An implementation of a data structure is history independent if its memory representation reveals nothing more than the logical state of the data structure.
    That history independence is related to deletion is intuitive, and appears in \citep{GGV20, godin2021deletion}.
    Machine unlearning imposes a similar requirement in the specific context of machine learning. Oversimplifying, a learned model (akin to the memory representation) must reveal nothing more than a model retrained from scratch (akin to the logical state).
    {We make these connections precise.}

    \para{New algorithms for machine learning via differential privacy}
Deletion-as-control provides a new approach for machine learning in the face of modern data rights. Very roughly, differential privacy (DP) provides deletion-as-control for free. Intuitively, if a person has (approximately) no impact on a trained model, mitigating that impact is trivial.
In particular, if using an adaptive pan-private algorithm to maintain the model, it does not need to be updated in response to deletion requests, unlike machine unlearning algorithms. 
For the first time, this approach enables a meaningful deletion guarantee while bounding the worst-case loss compared to deletion-free learning.

Specifically, we describe two ways of compiling DP mechanisms into  controllers satisfying
deletion-as-control. 
The first applies to DP mechanisms that are run in a batch setting on a single, centralized dataset  (e.g., the Batch Machine Learning touchstone). The second applies to mechanisms satisfying an adaptive variant of \emph{pan-privacy under continual release} \citep{chan2011private,dwork2010pan,jain2022price}, including controllers that periodically update a model on an ongoing basis. 
The compilation from differential privacy is by way of deletion-as-confidentiality \citet{GGV20}, which we prove implies deletion-as-control.

We combine this result with existing algorithms for private learning under continual release \citep{KairouzM00TX21} to obtain new controllers that maintain a model with accuracy essentially identical to that of a model trained on the entire set of added records. 
    
    \para{Capturing complex mechanisms via composition} 
    We show that deletion-as-control captures more functionalities than those collectively captured by history independence, differential privacy, and deletion-as-confidentiality. Specifically, we show how to implement the Public Directory + Usage Statistics touchstone, a functionality that cannot satisfy any of the above three properties. To do so, we prove that deletion-as-control enjoys a limited form of \emph{parallel composition}.

\begin{figure}   
    \centering
    \begin{minipage}[c]{0.55\textwidth}
\includegraphics[width=\textwidth]{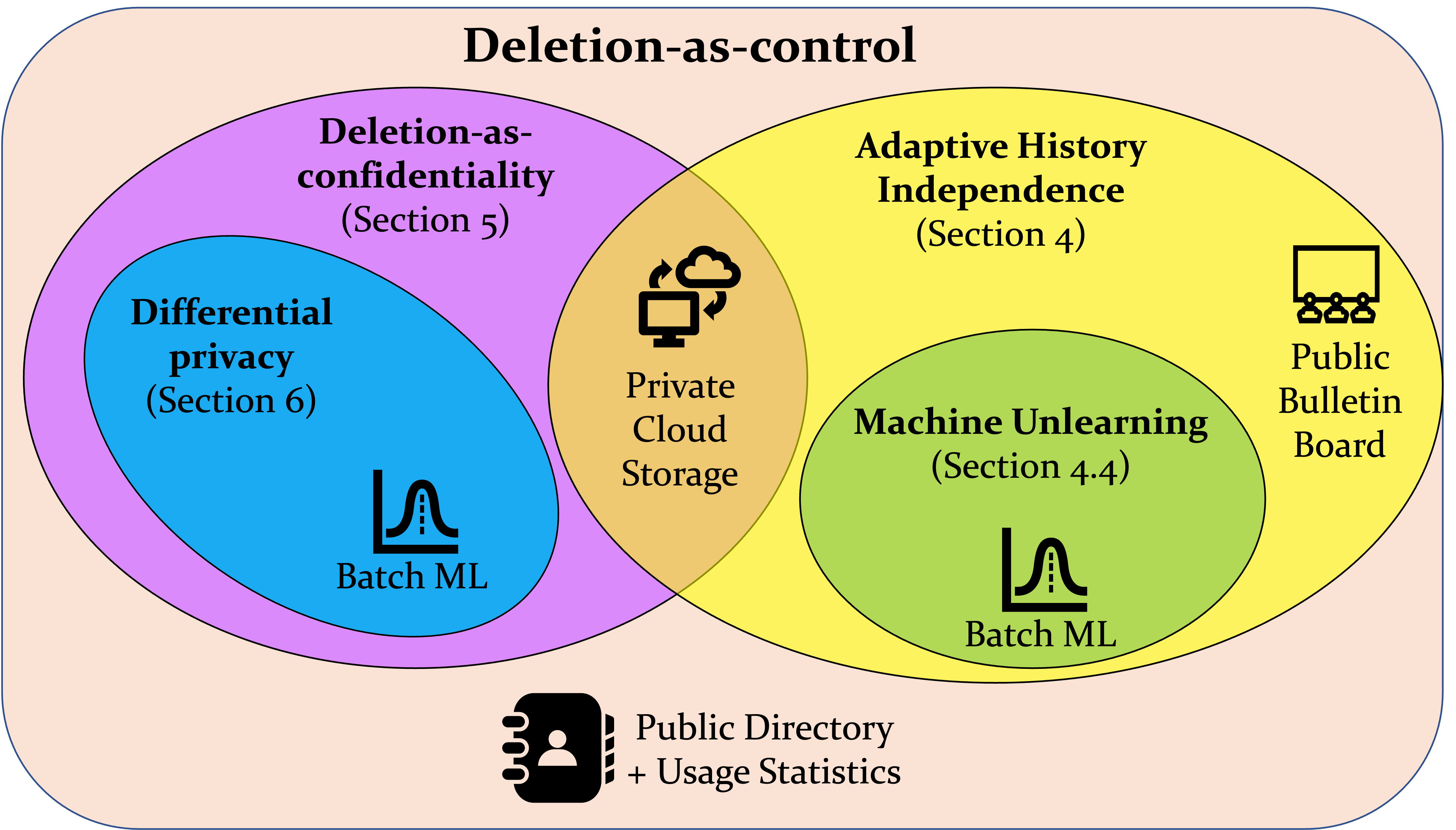}    
\end{minipage} \hfill
\begin{minipage}[c]{0.4\textwidth}
\caption{\small Overview of the relationships between definitions and the roles of specific example controllers. Each oval represents a definition, and each icon represents an implementation of a touchstone functionality satisfying the corresponding definition. 
    Oval containment corresponds to an implication between the definitions, though only  roughly and subject to technicalities that we do not attempt to capture in this figure.}
    \label{fig:overview}
\end{minipage}
\hfill

\end{figure}

\subsection{Defining deletion-as-control}
We define deletion-as-control in a way that allows arbitrary use of a person's data before deletion, but not after. 
The challenge is to provide meaningful privacy guarantees despite this feature, even in a general setting with an adaptive and randomized data controller, data subject (Alice), and an environment (representing all parties other than the data controller and Alice). 

Our proposal is that after deletion, the data controller should be able to produce a plausible alternate explanation for its current state without appealing to Alice's participation.
More specifically, \emph{the state of the controller after deletion can be plausibly attributed to the interaction between the controller and the environment alone}. By this we mean that the state is about as likely---with probability taken over the controller's random coins---in the real world as in the hypothetical `ideal' world where the environment's messages to the controller are unchanged but where Alice does not interact with the controller whatsoever. The result is that, after Alice's deletion, the controller's subsequent states depend on the interaction with Alice only insofar as that interaction affected other parties' interactions with the controller. 

Importantly, we only require that the controller's state is plausible in the ideal world \emph{given} the messages sent by the environment. \emph{We do not require the environment's messages themselves to be plausible in the ideal world.}
For example, suppose that on a public bulletin board, Bob simply copies and reposts Alice's posts. Bob's messages in the ideal world would still contain the content of Alice's posts, even though Alice is completely absent in the hypothetical ideal world. 
This is unavoidable---we want to allow the controller and the environment to use the subject's data arbitrarily before deletion. So the environment's queries may depend on the data subject's inputs, directly or indirectly.

In a bit more detail, our definition compares real and ideal worlds defined in a non-standard way.%
        \footnote{The non-real world is more `hypothetical' or `counter-factual' than `ideal.' Regardless, we use the term `ideal world' for continuity with \citep{GGV20} and decades of cryptography.}
The real world execution, denoted $\exec{\cont,\env,\subj}$, involves three parties: a data controller \cont, a special data subject \subj, and an environment \env representing all other parties. (The notation $\exec{\cdots}$ denotes the transcript of an execution between the parties listed.)
The execution ends after $\subj$ requests deletion and $\cont$ processes the request.
The real execution specifies (i) \cont's state $\s_\cont$, (ii) the \emph{queries} $\queries_\env$ sent by $\env$ to $\cont$, and (iii) the randomness $\Rand_\cont$ used by $\cont$. 
The ideal world execution, denoted  $\exec{\cont(\Rand'_\cont), \dummy(\queries_\env)}$, involves the  same controller \cont and a dummy environment \dummy that simply replays the queries  $\queries_\env$. 
The controller's ideal world randomness 
 $\Rand'_\cont$ is sampled by a simulator $\Sim(\queries_\env, \Rand_\cont, \s_\cont)$.
The ideal execution specifies (i) \cont's ideal state $\s'_\cont$, and (ii) the simulated randomness $\Rand'_\cont$ used by $\cont$.

\begin{definition}[$(\eps,\delta)$-deletion-as-Control (simplified)]\label{def:TPDP-del-as-cont}
Given $\cont$, $\env$, $\subj$, and $\Sim$, consider the following  experiment. Sample $\Rand_\cont \gets \bit{\bbN}$; run the real execution $(\queries_\env, \s_\cont) \gets \exec{\cont(\Rand_\cont), E, \subj}$;  sample $\Rand'_\cont \gets \Sim(\queries_\env, \Rand_\cont, \s_\cont)$; and run the ideal execution $(\queries'_\env,\s'_\cont) \gets \langle \cont(\Rand'_\cont), \dummy(\queries_\env)\rangle$.

We say a controller $\cont$ is \emph{$(\eps, \delta)$-deletion-as-control compliant}
if
there exists $\Sim$ such that
for all $\env$ and $\subj$: (i) $\Rand'_\cont \approxequiv{\eps,\delta} \Rand_\cont$; (ii)  $\s'_\cont = \s_\cont$ with probability at least $1-\delta$. 
\end{definition}

The notation $\approxequiv{\eps,\delta}$ denotes approximate indistinguishability parameterized by $\epsilon$ and $\delta$ (as in the definition of differential privacy). We give a complete version of Definition~\ref{def:TPDP-del-as-cont} in Section~\ref{sec:del-as-cont}.

\subsection{Prior Work}
\label{sec:prior-work}
We give a brief discussion of prior definitions of deletion. Machine unlearning and deletion-as-confidentiality are discussed in detail in Sections \ref{sec:MUL-2-AHI} and \ref{sec:ggv}, respectively.

\para{Deletion-as-confidentiality} \citet{GGV20} define \emph{deletion-as-confidentiality}.\footnote{Deletion-as-confidentiality is called \emph{deletion-compliance} in \cite{GGV20} and \emph{strong deletion-compliance} in \cite{godin2021deletion}.}
It requires that the deleted data subject Alice leaves (approximately) no trace: the whole view of the environment along with the state of the controller after deletion should be as if Alice never existed. 
As a result, no third party may ever learn of Alice's presence --- even if she never requests deletion.
The strength of this definition is its strong, intuitive, interpretable guarantee. 

But deletion-as-confidentiality is too restrictive (Figure~\ref{fig:touchstone_prior_defs}). The stringent indistinguishability requirement precludes any functionality where users learn about each other.
Implementations of the Private Cloud Storage and Batch Machine Learning functionalities can satisfy deletion-as-confidentiality, using history independence (cf. Section~\ref{sec:HI}). But the Bulletin Board and Directory are ruled out. If Bob ever looks up Alice's messages, the confidentiality required by the definition is impossible.

\para{Simulatable deletion} \citet{godin2021deletion} introduce \emph{simulatable deletion} as a relaxation of deletion-as-confidentiality, motivated by the observation that deletion-as-confidentiality rules out social functionalities.\footnote{Simulatable deletion is called \emph{weak deletion-compliance} in \cite{godin2021deletion}.}
Roughly, simulatable deletion requires that after a data subject Alice is deleted, the resulting state of the controller is simulatable given the environment's view. This means that any information about Alice that is present in the controller's state is already present in the view of other parties. 

Simulatable deletion is too permissive. A controller may indefinitely retain any information that has ever been shared with any third party.
For example, the Public Bulletin Board need not delete Alice's posts if they have ever been read!\footnotemark~
As a result, simulatable deletion is essentially vacuous for functionalities where the controller's state need not be kept secret. The controller can simply publish its state, making simulation trivial.
Turning to our touchstone examples (Section~\ref{sec:touchstone}), while simulatable deletion imposes a meaningful requirement for the Private Cloud Storage functionality, it allows implementations with no meaningful deletion of any kind for the other three functionalities. 

    \footnotetext{\citet{godin2021deletion} actually consider a version of the Bulletin Board functionality for which simulatable deletion is meaningful. Crucially, their functionality does not track whether a post has been read. Hence their controller must actually delete Alice's posts. But if the bulletin board keeps read receipts or actively pushes new messages out to users, say, it would not have to delete these posts.}

\para{Machine unlearning} This recent line of work specializes the question of deletion to the setting of machine learning~\MULcites.
Given a model $\modelcoll \gets \learn(\data)$ and a data point $\datum^*\in\data$, that literature requires sampling a new model $\theta'\gets \unlearn(\modelcoll, \datum^*, \data)$ approximately from the distribution $\learn(\data\setminus\{x^*\})$ (i.e., approximating retraining from scratch). As we explain in Section~\ref{sec:MUL-2-AHI}, these definitions correspond to versions of \emph{history independence}.

A drawback of machine unlearning is that it specialized to the setting of machine learning. It does not apply to general data controllers and interaction patterns among parties, including the Cloud Storage, Bulletin Board, and Directory functionalities (Figure~\ref{fig:touchstone_prior_defs}).

History independence-style definitions of machine unlearning do impose a meaningful requirement for the Batch Machine Learning functionality. In fact history independence is a conceptually stricter requirement than deletion-as-control, setting aside many technicalities (Section~\ref{sec:MUL-2-AHI}). Differentially private (DP) learning illustrates the difference. Roughly, DP learning provides deletion-as-control for free; the resulting model can be published once and never  updated.
In contrast, history independence \textit{require} updating the model when there are many deletions.
To see why, consider $\data$ of size $n$ and suppose all $n$ people request deletion. These definitions would require the final model $\theta^*$ be essentially trivial---it should perform about as well as a the model $\theta_0$ trained on an empty dataset. With the DP learner described above, $\theta^*$ performs just as well as the initial model $\theta$.

\subsection{Paper Structure}
In Section~\ref{sec:del-as-cont} we define deletion-as-control. In Section~\ref{sec:HI} we define Adaptive History Independence and prove that it implies deletion-as-control (Theorem~\ref{thm:AHI_to_del}). 
We also show that the machine unlearning definition of \cite{gupta2021adaptive} is a special case of Adaptive History Independence (Proposition~\ref{prop:AHI_MUL}). 
In Section~\ref{sec:ggv} we show that deletion-as-confidentiality is a strengthening of our definition (Theorem~\ref{thm:conf-cont}). In Section~\ref{sec:DP}, we relate differential privacy to our definition  (Theorem~\ref{thm:PPshi_filter}) and in the process we define adaptive pan-private in the continual release model. Lastly, in Section~\ref{sec:simul-comp} we prove a narrow composition result for our definition.  %

\section{Deletion-as-control}
\label{sec:del-as-cont}

We define deletion-as-control in a way that allows arbitrary use of a person's data before deletion, but not after.
Under such a definition, an adversary might completely learn the data before it is deleted, and even make it available after it is deleted!
The challenge is to provide a meaningful guarantee despite this limitation, even in a general setting with adaptive and randomized data controllers, data subjects, and environments (representing all parties other than the data controller and distinguished data subject).

\label{sec:prelims}
\subsection{$(\eps,\delta)$-indistinguishability}

We consider a notion of similarity of distributions closely related to differential privacy (Appendix~\ref{app:dp-prelims}).

We present some further technical tools in the appendices: a novel coupling lemma for this notion of indistinguishability (Appendix~\ref{app:coupling-lemma}) as well as background  on differential privacy (Appendix~\ref{app:dp-prelims}).

\begin{definition}
Given parameter $\eps\geq 0$ and $\delta\in [0,1)$, we say two probability distributions $P$ and $Q$ on the same set $\cX$ (with the same $\sigma$-algebra of events $\Sigma_\cX$) are \emph{$(\eps,\delta)$-indistinguishable} and write $P\approxequiv{\eps, \delta} Q$ if, for every event $E \in \Sigma_\cX$, 
 \begin{equation*}
 P(E)\leq e^\eps Q(E)+\delta\   \text{and}\   Q(E)\leq e^\eps P(E) + \delta \, .
 \end{equation*}
 Slightly overloading this notation, we say two random variables $X$ and $Y$ taking values in the same measurable space $(\cX,\Sigma_\cX$) are \emph{$(\eps, \delta)$-indistinguishable}, denoted $X \approxequiv{\eps, \delta} Y$ if, for every event $E \in \Sigma_\cX$, 
 \begin{align*}
 &\Pr[X \in E]\leq e^\eps \Pr[Y \in E] +\delta\   \text{and,}\\
 &\Pr[Y \in E]\leq e^\eps \Pr[X \in E] + \delta \, .
 \end{align*}
\end{definition}

Because algorithms in our model run in unbounded time, their (countably infinite) random tapes belong to an uncountably infinite set. This means that not all sets of random tapes have well-defined probability. The $\sigma$-algebra $\Sigma_\cX$ captures the set of events $E$ for which $P(E)$ and $Q(E)$ are defined. This issue does not affect most proofs and definitions; we only make the $\sigma$-algebra of events explicit when necessary.

In our case, the set $\cX$ of random tapes for a single machine is $\bit{\bbN}$. Any execution that terminates reads only a finite prefix of the tape, and so the natural $\sigma$-algebra $\Sigma_\cX$ is the smallest one containing the sets $E_w = \left\{w \| x: x\in \bit{\bbN}\right\}$ for all $w\in \bit{*}$, where $\|$ denotes string concatenation (that is, $E_w$ is the set of infinite tapes with a particular finite prefix $w$). $\Sigma_\cX$ contains every event $E$ that depends on only finitely many bits of the tape. This is the standard $\sigma$-algebra for an infinite set of fair coin flips~(see, e.g., \cite{sigmaalgebras}).

\subsection{Parties and simplified execution model}
Our definition uses the real/ideal cryptography, though not with a typical indistinguishability criterion. See Section~\ref{sec:execution-model-detailed} for complete details.

The real world execution, denoted $\exec{\cont,\env,\subj}$, involves three (possibly randomized) parties: a data controller \cont, an environment \env, and a special data subject \subj. The real interaction is arbitrary, limited only by the execution model described below. Parties have authenticated channels over which they may interact freely. While $\subj$ has only a single channel to $\cont$, the environment has an unbounded number of channels (representing unbounded additional parties). While these channels are authenticated, the controller cannot distinguish the single channel to $\subj$ from those to $\env$. The interaction continues until the data subject $\subj$ requests deletion, ending after the data controller $\cont$ processes the request. We denote by $\s_\cont$ the final state of the controller.

The ideal world execution, denoted $\exec{\cont,\dummy}$, involves the interaction of the same controller \cont as well as a dummy environment \dummy. $\dummy$ takes as input the transcript from the real execution, denoted $\tau$, and simply replays only $\env$'s queries, denoted $\queries_\env$. 
If $\queries_\env$ is empty, then $D$  terminates without sending messages.
Observe that $\cont$'s responses and state in the ideal world are not fixed. They depend on \cont's ideal-world randomness, denoted $\Rand'_\cont$. Moreover, the ideal interaction is defined relative to a particular instantiation of the real world interaction. In particular, the queries $\queries_\env$ may depend on \cont's real-world randomness, denoted $\Rand_\cont$.

The controller's real-world randomness $\Rand_\cont$ consists of infinitely-many random bits sampled uniformly at random from $\bit{\bbN}$. We denote this distribution $\distrand$. 
In the ideal world, a simulator \Sim takes as input $(\queries_\env, \Rand_\cont, \s_\cont)$ and generates \cont's ideal-world randomness $\Rand'_\cont$. When we wish to emphasize the controller's randomness in the execution, we write $\exec{\cont(\Rand_\cont),\env,\subj}$ and $\exec{\cont(\Rand'_\cont), \dummy}$.

An execution involves parties sending messages to each other until some termination condition is reached.
Starting with $\env$ (real) or \dummy (ideal), parties get activated when they receive a message, and deactivated when they send a message. Only a single party is active at a time.
Parties communicate over authenticated channels. Because $\env$ represents all users besides the distinguished data subject $\subj$, $\env$ has many distinct channels to $\cont$.
Importantly, authentication allows parties to know on which channel a message was received, but not which party (i.e., $\env$ or $\subj$) is on the other end of that channel.
Each party is initialized with a uniform random tape which may only be read \emph{once} over the course of the whole execution. If a party wishes to re-use bits from its randomness tape, it must store them in its internal $\s$.

The real execution $\exec{\cont,\env,\subj}$ ends when $\subj$ requests deletion from $\cont$. The data subject's \delete message activates the controller, who can then remove \subj's data.
The ideal execution $\exec{\cont(\Rand'_\cont), \dummy(\queries_\env)}$ ends after \dummy sends its last query from $\queries_\env$ to \cont.
In both cases, the execution ends after the final activation of \cont. We consider the controller's state at the end of the execution: $\s_\cont = \cont(\queries_\env;\Rand_\cont)$ in the real world, and $\s'_\cont = \cont(\queries_\env;\Rand'_\cont)$ in the ideal world. If an execution never ends, the state is defined to be $\bot$ and the transcript $\tau$ and its subset $\queries_\env$ are defined to be empty. For example, the real execution ends if and only if $\subj$ requests deletion.

\begin{remark}[Keeping time]\label{remark:env-ticks}
In Section~\ref{sec:DP}, we need a global clock. Balancing modelling simplicity with generality, we allow the environment to control time. Specifically, we introduce a special query $\tick$ that $\env$ can send to $\cont$ thereby incrementing the clock. We do not allow $\subj$ to query $\tick$.
\end{remark}

\subsection{Defining Deletion-as-Control}

We require that the internal state of the controller is about as likely in the real world and the ideal world, where probability is taken over \cont's random coins. Let $\s_\cont$ and $\s'_\cont$ be the internal states in the real and ideal executions. Consider a random variable $\Rand'_\cont$ which is sampled uniformly conditioned on $\s'_\cont = \s_\cont$ in the ideal execution where $\cont$ uses randomness $\Rand'_\cont$.
Informally, our definition requires that the distributions of $\Rand'_\cont$ and $\Rand_\cont$ are close: $\Rand'_\cont \approxequiv{\eps,\delta} \Rand_\cont$.
We do not require that the real and ideal executions are themselves $(\eps,\delta)$-indistinguishable. Instead, we require that the ``explanations'' in the real and ideal executions are $(\eps,\delta)$-indistinguishable---viewing the controller's randomness as the explanation of its state (relative to the environment's queries $\queries_\env$).

We extend this idea by considering ways of sampling $\Rand'_\cont$ other than the conditional distribution described above (which may not always be defined). In general, we allow a simulator $\Sim$ to sample $\Rand'_\cont$ as a function of the queries $\queries_\env$ from $\env$ to $\cont$, the real-world randomness $\Rand_\cont$, and the real-world state $\s_\cont$. (Although we view the simulator's output as an infinite-length bit sequence,  it actually only needs to output a finite prefix.) We require that $\Rand'_\cont \approxequiv{\eps,\delta} \Rand_\cont$ and that $\s'_\cont = \s_\cont$ (or $\s_\cont = \bot$) except with probability $\delta$. 
Sampling $\Rand'_\cont$ conditioned on $\s'_\cont = \s_\cont$ is a useful default simulation strategy that we use throughout the paper,
but there are sometimes much simpler ways to sample $\Rand'_\cont$.
\begin{figure*} \label{fig:real_ideal_diagrams}
    \centering
    \includegraphics[scale=0.3]{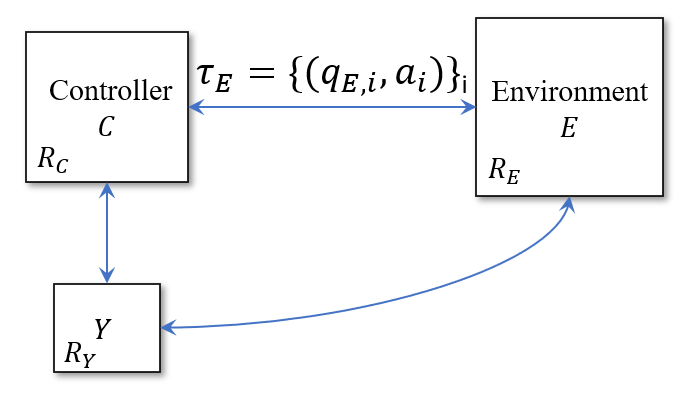}
    \hspace{0.2in}\rule{0.5pt}{1.7in}\hspace{0.2in}
    \includegraphics[scale=0.3]{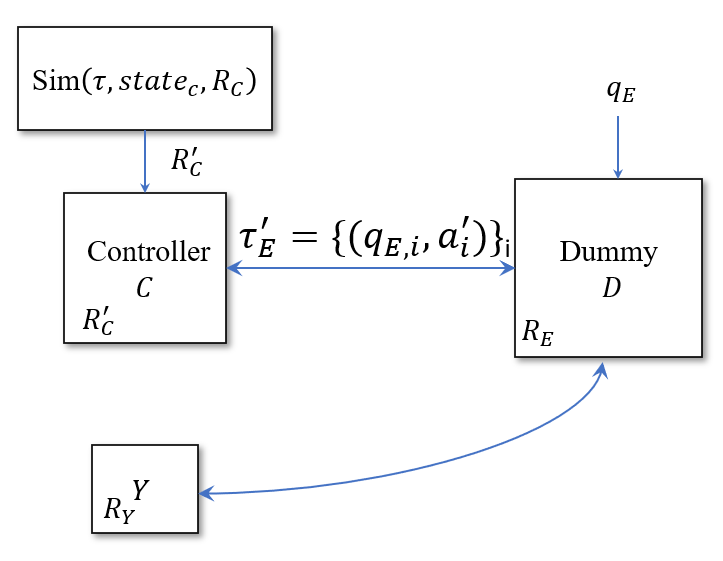}

    \caption{The real (left) and ideal (right) executions. The communications with the controller \cont are meant to be ``public API'' calls (e.g. accessing a public website, possibly using login credentials). Notice that the dummy environment in the ideal world sends message $\queries_{\env, i}$ regardless of what the previous responses were.}
\end{figure*}

\begin{definition}[Deletion as control]\label{def:del-as-cont}
Given a controller $\cont$, an environment $\env$, a data subject $\subj$ and a simulator $\Sim$, we consider the following experiment: 
\begin{itemize}[itemsep=0.3pt, topsep=4pt, partopsep=2pt]
    \item $\Rand_\cont \gets \distrand$ ($\Rand_\cont$ is a uniform random tape)
    \item $(\trans, \s_\cont) \gets \exec{\cont(\Rand_\cont), E, \subj}$,  where $\queries_\env \subseteq \trans$ are the messages from $\env$ to $\cont$. 
    \item $\Rand'_\cont \gets \Sim(\queries_\env, \Rand_\cont, \s_\cont)$
    \item $(\trans',\s'_\cont) \gets \langle \cont(\Rand'_\cont), \dummy(\queries_\env)\rangle$.
\end{itemize}
We say a controller $\cont$ is \emph{$(\eps, \delta)$-deletion-as-control compliant}
if
there exists $\Sim$ such that
for all $\env$ and $\subj$ and for $\Rand'_\cont$, $\Rand_\cont$ sampled as above:
\begin{enumerate}[itemsep=0.3pt, topsep=4pt, partopsep=2pt]
    \item $\Rand'_\cont \approxequiv{\eps,\delta} \Rand_\cont$ (i.e., $\Rand_\cont$ and $\Rand_\cont'$ are similarly distributed), \\ and \label{enum:del_cond_1}
    \item With probability at least $1-\delta$,  either\\ $\s'_\cont = \s_\cont$ or $\s_\cont = \bot$. \label{enum:del_cond_2}
\end{enumerate}
For a particular class of data subjects $\mathcal{Y}$, we say a controller $\cont$ is \emph{$(\eps, \delta)$-deletion-as-control compliant for $\mathcal{Y}$} if there exists \Sim such that the above holds for all \env and for all $\subj \in \mathcal{Y}$.
\end{definition}

\begin{example}[XOR Controller]
\label{example:XOR}
\newcommand{\opRead}{\mathsf{read}}
\newcommand{\opWrite}{\mathsf{write}}
Consider a controller $\cont_\oplus$ which maintains a $k$-bit state $\s\in\{0,1\}^k$ which is initialized uniformly at random. Upon receiving a message $x \in \{0,1\}^k$, $\cont_\oplus$ updates its state $\s\gets \s \oplus x$, sending nothing in return. If it receives any other message, including $\delete$, it does nothing. 

$\cont_\oplus$ satisfies $(0,0)$-deletion-as-control. To see why, consider an execution of the real world, which ends when the data subject sends $\delete$. At this point, $\s = R \oplus x_\subj \oplus x_\env$, where $R\in \{0,1\}^k$ is the random initialization, $x_\subj$ is the XOR of all messages $x$ sent by $\subj$, and $x_\env$ is the XOR of all messages $x$ sent by $\env$. 
Let $\Sim$ compute $x_\env$ from the queries $\queries_\env$, and output $\Rand' = \s\oplus x_\env = \Rand\oplus x_\subj$. This satisfies the definition: (1) $\Rand'$ is uniformly distributed because $x_\subj$ is independent of $\Rand$; (2) $\s' = \Rand' \oplus x_\env  = \s$.
\end{example}

\begin{definition}[Default simulator]
\label{def:default-sim}
The default simulator $\Sim^*$ samples $\Rand'_\cont$ as follows:
\begin{align*}
    &\Sim(\queries_\env, \Rand_\cont, \s_\cont):\\
    &\qquad\text{Return }\Rand'_\cont \sim \distrand\big|_{\cont(\queries_\env;\Rand') = \s_\cont} \text{if
    such an $\Rand'$ exists;}\\
    &\qquad\text{Otherwise, return $\Rand'_\cont = \Rand_\cont$}.
\end{align*}
\end{definition}

\noindent
As we show next, \env and \subj can be assumed to be deterministic without loss of generality.

\begin{lemma}[Deterministic environments and subjects]\label{lem:derandomize}
Consider a controller \cont. Suppose that for some $\eps,\delta \geq 0$, $\cont$ satisfies Definition~\ref{def:del-as-cont} for all \textit{deterministic} environments $\env$ and data subjects $\subj$ with simulator $\Sim$. Then $\cont$ satisfies Definition~\ref{def:del-as-cont} (even for randomized $\env$ and $\subj$) with the same simulator $\Sim$.
\end{lemma}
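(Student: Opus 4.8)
The plan is to treat randomized $\env$ and $\subj$ as mixtures of deterministic ones, obtained by fixing their random tapes, and then to show that each of the two conditions in Definition~\ref{def:del-as-cont} survives the averaging. Write $\rho_\env$ and $\rho_\subj$ for the random tapes of $\env$ and $\subj$, and let $\env_{\rho_\env}$ and $\subj_{\rho_\subj}$ denote the deterministic parties obtained by hardwiring these tapes. The crucial structural observation is that $\Rand_\cont \gets \distrand$ is sampled uniformly and independently of $(\rho_\env,\rho_\subj)$, so conditioning the randomized experiment on a fixed pair $(\rho_\env,\rho_\subj)$ yields \emph{exactly} the deterministic experiment run with $\env_{\rho_\env}$ and $\subj_{\rho_\subj}$ --- and with the \emph{same} simulator $\Sim$, since $\Sim$ only ever sees $(\queries_\env, \Rand_\cont, \s_\cont)$ and never the tapes themselves. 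By hypothesis, both conditions of the definition hold for each such deterministic pair.

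For condition~\ref{enum:del_cond_2} I would simply average. Conditioned on any $(\rho_\env,\rho_\subj)$, the event $\{\s'_\cont = \s_\cont \text{ or } \s_\cont = \bot\}$ has probability at least $1-\delta$ over $\Rand_\cont$ and $\Sim$'s coins. Integrating this bound against the distribution of $(\rho_\env,\rho_\subj)$ preserves it, so the unconditioned probability is also at least $1-\delta$.

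For condition~\ref{enum:del_cond_1} I would invoke closure of $(\eps,\delta)$-indistinguishability under mixtures. For each fixed $(\rho_\env,\rho_\subj)$ the conditional laws satisfy $\Rand'_\cont \approxequiv{\eps,\delta} \Rand_\cont$, where the law of $\Rand_\cont$ is uniform for \emph{every} fixing (by independence). The unconditioned law of $\Rand'_\cont$ is the mixture of the conditional laws over $(\rho_\env,\rho_\subj)$, while the law of $\Rand_\cont$ equals its own mixture, namely uniform. Since the defining inequality $P(E) \le e^\eps Q(E) + \delta$ is linear in $P$ and in $Q$, integrating the per-fixing inequalities over $(\rho_\env,\rho_\subj)$ yields the same $(\eps,\delta)$-bound for the mixtures in both directions; this is the standard joint convexity of $(\eps,\delta)$-indistinguishability, extended from sums to integrals. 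Crucially, the bound does not degrade because the ``$Q$-side'' $\Rand_\cont$ is identical across all fixings.

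The only place requiring genuine care --- and the step I expect to be the main obstacle --- is making the reduction-by-conditioning fully rigorous in the measure-theoretic setting of the model, since $(\rho_\env,\rho_\subj)$ ranges over an uncountable space and the relevant mixtures are integrals over the $\sigma$-algebra $\Sigma_\cX$ rather than finite sums. Once the experiment is correctly disintegrated over $(\rho_\env,\rho_\subj)$ with $\Rand_\cont$ independent, both conditions reduce to linearity of the defining inequalities, and no further structure of $\cont$ or $\Sim$ is used; in particular the same $\Sim$ works verbatim, as required.
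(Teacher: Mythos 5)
Your proposal is correct and follows essentially the same route as the paper's proof: condition on the tapes $(\rho_\env,\rho_\subj)$ to reduce to the deterministic case (valid because $\Sim$ sees only $(\queries_\env,\Rand_\cont,\s_\cont)$), then average the state condition and invoke convexity of $(\eps,\delta)$-indistinguishability for the randomness condition. The paper phrases the convexity step as an indistinguishability of the joint tuples $(\Rand_\env,\Rand_\subj,\Rand'_\cont)$ and $(\Rand_\env,\Rand_\subj,\Rand_\cont)$ followed by dropping components, but this is the same mixture argument you describe.
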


\begin{proof}
Fix  a controller $\cont$, and let $\Sim$ be the simulator which show that  $\cont$ satisfies Definition~\ref{def:del-as-cont} for all deterministic environments and data subjects. Now consider a pair of randomized ITMs $\env$ and $\subj$ with random coins denoted by $\Rand_\env$ and $\Rand_\subj$.

For a fixed string $\rand$, let $E_{\rand}$ denote the deterministic environment in which $\env$'s random tape is fixed to $\Rand_\env = \rand$; similarly for $\subj_{\rand}$. Given strings $\rand_\env$ and $\rand_\subj$, we can consider the execution of the deletion game in Definition~\ref{def:del-as-cont} with the deterministic machines $\env_{\rand_\env}$ and $\subj_{\rand_\subj}$. Let $\Rand'_{\cont, \rand_\env, \rand_\subj}$ denote the random coins output by the simulator in that game, and let $P_{\rand_\env, \rand_\subj}$ denote the distribution of $\Rand'_{\cont, \rand_\env, \rand_\subj}$. By hypothesis, $\Rand'_{\cont, \rand_\env, \rand_\subj} \approx_{\eps,\delta} \Rand_\cont$ for every $\rand_\env$ and $ \rand_\subj $.

Now consider the deletion game with the randomized machines $\env,\subj$. Let $\Rand'_\cont$ denote the output of the simulator in that game. Since the simulator $\Sim$ does not depend on the environment or data subject, the distribution of $\Rand'_\cont$ conditioned on the event that $\Rand_\env = \rand_\env$ and $\Rand_\subj = \rand_\subj$ is exactly $P_{\rand_\env, \rand_\subj}$. Thus, we have
$$ (\Rand_\env, \Rand_\subj, \Rand'_\cont) \equiv (\Rand_\env, \Rand_\subj, \Rand'_{\cont, \Rand_\env, \Rand_\subj}) \approx_{\eps,\delta} (\Rand_\env, \Rand_\subj, \Rand_\cont)\, , $$
where $\equiv$ denotes that two random variables have identical distributions and we have used the convexity of the $\approx_{\eps,\delta}$ relation. Dropping the first two components, we get that $\Rand'_C \approx_{\eps,\delta} \Rand_\cont$, as desired.

Furthermore, by hypothesis, we have that $\s = \s'$ with probability at least $1-\delta$ for each setting of $\rand_\env$ and $\rand_\subj$ (since $\s$ depends only on $\Rand_\env$ and $\Rand_\subj$ via the queries $\queries_\env$). Averaging over $\rand_\env$ and $\rand_\subj$ shows that the overall probability that $\s = \s'$ is also at least $1-\delta$.
\end{proof}

\subsection{Discussion of the definition}
\label{sec:definition:discussion}

\para{On constraining $\cont$'s state}
Our definition imposes a condition on the internal state of the controller at a moment in time. Namely that, immediately after the data subject $\subj$ is deleted, the actual state of the controller $\s_\cont$ can be plausibly attributed to the interaction between the controller and the environment alone.
This in turn provides a guarantee for anything the controller may do in the future. Namely,
if the real controller was replaced by the ideal controller at the moment of \subj's deletion, the environment would never know.

As an alternative, one might consider  restricting future behavior directly. We prefer to restrict the state as well. It is simpler to describe---for instance, because there is a natural termination condition. It is also future-proof: Any controller that satisfied the future-behavior version but not the state version could choose to, in the future, violate the guarantee by publishing the state at the time of \subj's erasure. Imposing the condition on the state directly makes it impossible for the future behavior of the real and ideal controllers to deviate, rather than merely possible for them not to.

\para{Differential privacy and deletion}
As we will see later in this paper, differential privacy (DP) can in some cases provide deletion-as-control almost automatically, with no additional action required of the data controller (Prop.~\ref{prop:one-shot-DP} and Thm.~\ref{thm:PPshi_filter}).
We believe that this makes sense both from the point of view of what DP means and 
the spirit of data protection regulations.
When greater protection is warranted, deletion-as-control should not serve as the sole basis of analysis---nor should, perhaps, a right to erasure.%

We're guided by a simple intuition: If a single individual has almost
no influence on the result of  data processing---the condition guaranteed by differential privacy---then nothing needs to be done to 
remove that individual's influence. 
This intuition closely tracks some 
prior approaches to deletion.
For instance, to show that  differentially private controllers satisfy  deletion-as-\textit{control}, we actually show that they meet  the much stricter requirements of (approximate) deletion-as-\emph{confidentiality}~\cite{GGV20}; see Lem.~\ref{lem:PP_to_conf}.
Some existing machine unlearning algorithms embody the same intuition, leaving the trained model unaltered as long as
the deleted data points had no effect on the resulting model \cite{ginart2019making}.\footnote{In contrast, \citet{thudi2022necessity} argue that any definition where a data controller ``do[es] not need to do anything and can claim the unlearning is done''---including machine unlearning definitions based on approximate history independence (Section~\ref{sec:MUL-2-AHI})---is ``not well-defined''. We disagree.}

The fact that DP can provide deletion-as-control fits well with the data protection regulations, like GDPR and CCPA, that inspire our work. Generally, these laws give individuals rights regarding the processing of \emph{personal data} relating to them.%
But these rights, including the right to erasure, do not extend to data that have been sufficiently anonymized.\footnote{Recital 26 states this explicitly: ``The principles of data protection should therefore not apply to anonymous information, namely information which does not relate to an identified or identifiable natural person or to personal data rendered anonymous in such a manner that the data subject is not or no longer identifiable.''}
If one believes that in some cases, DP anonymizes data for the purposes of GDPR, say, then in such cases the data controller need not take any further action when a data subject requests deletion.
Whether DP releases constitute personal data 
is explored in recent work bridging computer science formalisms with legal analysis \cite{nissim2017bridging,altman2021hybrid}. 
Though the general question remains unresolved, DP has been used to argue compliance
with privacy laws for several high-profile data releases, including by the U.S.\ Census Bureau~\cite{census-dp-tech-report}, Facebook~\cite{fb-urls-dataset}, and Google~\cite{google-mobility-reports}.

Of course, DP is not always the answer. For example, if a model was trained using data collected without proper consent, one might require that no benefit derived from the ill-gotten data remains. The Federal Trade Commission first adopted this type of \emph{algorithmic disgorgement} in a 2021 settlement with photo sharing app Everalbum \citep{slaughter2020algorithms}.
Differential privacy should not  shield against such algorithmic disgorgement.\footnote{\citet{achille2023ai} seem to disagree, writing: ``In many ways, differential privacy (DP) can be considered the `gold standard' of model disgorgement''.}

\para{Deleting groups}

Our definition provides a guarantee for an individual data subject. What about groups?
If many people request to be deleted, then each individual person enjoys the individual-level guarantee provided by deletion-as-control. %
But the group does not necessarily enjoy an analogous group-level guarantee. For example, the group-level deletion guarantee for the DP-based controllers in Section~\ref{sec:DP} decays linearly with the group size. For large groups, the group \emph{as a group} doesn't enjoy meaningful protection. 

This seems unavoidable in contexts where 
(useful) statistics are published once and not subsequently updated. It  reflects a fundamental difference between deletion-as-control and the history independence-style definitions in the machine unlearning literature, discussed in Section~\ref{sec:MUL-2-AHI}. 
Suppose, for example, that a controller trains a model $\theta$ using data from $n$ people. Then all $n$ people request deletion, leaving the controller with a model $\theta^*$. History independence would require that $\theta^*$ be essentially trivial: $\theta^*$ should perform about as well as a the model $\theta_0$ trained on an empty dataset (Section~\ref{sec:MUL-2-AHI}). On the other hand, 
the DP-based controllers in Section \ref{sec:DP} allow
$\theta^*$ to perform very well on the learning task.

An individual-level guarantee is in line with data privacy laws.
To whit, the GDPR grants a right to erasure to ``the data subject'' who is ``[a] natural person'' (Art.~4,~17). 
Even so, group-level deletion may be more appropriate in some settings (e.g., algorithmic disgorgement discussed above). Exploring 
group deletion is an important direction for future work.

\para{Composition}
Composition is an important property of good cryptographic definitions. We do not yet have a complete picture of how deletion-as-control composes. Theorem~\ref{thm:simul-comp} states a limited composition theorem that applies to parallel composition of two controllers at least one of which satisfies a very strong guarantee (specifically, it must implement a deterministic functionality with perfect, as opposed to approximate, deletion-as-control). By induction, this extends to the parallel composition of $k$ controllers if all but one satisfy the strong guarantee. This can be used to reason about complex interactive functionalities built from multiple strongly history-independent data structures. 

Proving more general composition for deletion-as-control is an important question for future work. Addressing it seems  challenging since it is closely related to still-open questions about composition for differential privacy. For example, it was shown  only very recently that differential privacy composes when mechanisms are run concurrently with adaptively interleaved queries~\citep{vadhan2022concurrent}. While that result allows adaptive query ordering, the dataset itself is fixed in advance. Deletion-as-control allows both  queries and  data to be specified adaptively. Proving composition of deletion-as-control seems only harder than the analogous question for differential privacy.

\para{Other limitations of our approach} 
We touch on two limitations of our approach.
First, there is no quantification of ``effort.'' 
The EU's right to be forgotten stems from \emph{Google v Costeja}, where the Court of Justice for the European Union ruled that a search engine may be required to remove certain links from search results~\cite{Costeja-press-release}.
But there are limits. Today, Google will only remove the result from search queries related to the name of the person requesting deletion, but not from other search queries~\cite{GoogleRTBF}.
This suggests a definition in which results are hidden from a low-resource adversary who only makes general searches, but not from an adversary with more side information or time, carrying out more targeted or exhaustive searches respectively. Modeling that sort of subtlety appears to require fundamental changes from all existing approaches, ours included.

Second, a failure of deletion as we formulate it doesn't map to an explicit attack on a system. It corresponds instead to a disconnect between the real execution and a counterfactual one in which Alice's data never existed but her effect on others' data remains. In this sense the definition is quite different from standard cryptographic ones, and it doesn't obviously correspond to an adversarial model nor combine well with other cryptographic definitions. This is also true of the history-independence approach, including the definitions in prior work on machine unlearning. Deletion-as-confidentiality \citep{GGV20} does have a more straightforward cryptographic flavor but, as we argue, its strict requirement is ill suited for many application.

\subsection{Real and ideal executions in detail}

\label{sec:execution-model-detailed}
Our definition involves the interaction of three parties in the real execution ($\cont$, $\env$, and $\subj$) and two parties in the ideal execution ($\cont$ and $\dummy$). Formally, the parties in our definition are interactive Turing machines (ITM) with behavior specified by code. An execution involves sequentially activating and deactivating the ITMs  until a termination condition is reached. Activations are tied to message-passing as described below. At any time, at most one ITM is active.
We denote the executions using $\exec{\cont,\env,\subj}$ or  $\exec{\cont,\dummy}$, respectively.

We require that each ITM eventually terminates for every setting of its tapes. (Note that this does not imply termination of an execution involving multiple such ITMs, who may for example pass messages back and forth forever.)
One way to enforce termination is to ask that every party come with a time bound limiting the number of steps it executes when activated. We do not model this particular detail explicitly.

Each ITM has five tapes: work, input, output, randomness, and channel ID.
An ITM's \s at a given time includes only the contents of its work tape at the time.
An ITM may freely read from and write to the work tape. The input tape is read-only and is reset when an ITM is deactivated. (At the start of the execution, input tapes are initialized with channel IDs as described below.) The output tape is write-only and is reset when an ITM is activated.

The randomness tape is initialized with uniform random bits.\footnote{%
        One could instead  view the controller's randomness as consisting of i.i.d.\ samples from an efficiently-sampleable distribution that may depend on $\cont$ (e.g., Laplace noise or Gaussian noise). Uniform bits is without loss of generality, as the simulator $\Sim$ defined below may be inefficient.}
To avoid a priori bounds on running time or randomness complexity, the randomness tape is countably infinite.
It may only be read \emph{once} over the course of the whole execution. That is, reading a bit from the randomness tape automatically advances that tape head one position, and there is no other way to move that tape head; if an ITM wishes to re-use bits from its randomness tape, it must copy them to its work tape.\footnote{%
    \label{footnote:randomness}This is a departure from the model of \cite{GGV20}, wherein the randomness tape may be read multiple times without counting as part of the state. That would allow a controller to avoid deleting anything by, for instance, encrypting its state using its randomness as a secret key---a detail that was overlooked in \cite{GGV20}. Our fix is to make the randomness read-once. As a result, a party must store in the work tape any randomness that it will later reuse.}

Messages between parties are passed over authenticated channels, represented by a channel ID $\cid$ associated with two parties $A$ and $B$. A single pair of parties may have many associated channels. Party $A$ sends a message by writing $(\cid,\msg)$ to its output tape, where $\msg \in \{0,1\}^* \cup \{\delete\}$, where $\delete$ is a special message (i.e., not in $\{0,1\}^*$). When $A$ finishes writing its output, $A$ is immediately deactivated. 
If $\cid$ corresponds to a channel ID between $A$ and another party $B$, then $(\cid, \msg)$ is written to $B$'s input tape and $B$ is activated.
Otherwise, the special message $\fail$ is written to $A$'s input tape and $A$ is activated. Importantly, $B$ does not learn the \emph{party} that sent $\msg$, only the \emph{channel} $\cid$ over which it was sent.

Each ITM's channel ID tape is initialized with the \cid's for channels over which the ITM can communicate. The channel ID tape is read-once, just like the randomness tape.
In the real execution, $\env$'s tape is initialized with $\cid$s (countably infinite, as with the randomness tape). In the ideal execution, $D$'s tape is similarly initialized with $\cid$s. The first is for communication with $\subj$ and the remainder are for communication with $\cont$. $\subj$'s input tape is initialized with only one $\cid$ for communication with $\cont$ and one \cid for communication with \env. $\cont$'s input is initially empty. It must learn the $\cid$s from messages it receives, and cannot distinguish its channel with $\subj$ from its channels with $\env$ or $\dummy$.

The executions begin with the activation of $\env$ or $\dummy$. 
Parties are deactivated when they write a message to their output tape and the recipient is activated. 
If a party halts without writing to its output tape, $\env$ or $\dummy$ is activated.
\para{The real execution} The real execution involves three (possibly randomized) parties: $\cont$, $\env$, and $\subj$ using randomness $\Rand_\cont$, $\Rand_\env$, and $\Rand_\subj$ in $\bit{\bbN}$, respectively. 
 The execution ends after $\subj$ requests deletion from $\cont$.  More precisely, after 
$\subj$ sends $\delete$ to $\cont$, $\cont$ is activated one final time. The execution then ends when $\cont$ halts or writes to its output tape. Note that the real execution may never terminate if $\subj$ never sends $\delete$. We denote the real execution as $\exec{\cont(\Rand_\cont), \env(\Rand_\env), \subj(\Rand_\subj)}$. When appropriate we omit the randomness, writing $\exec{\cont,\env, \subj}$.

The execution generates a (possibly empty) \emph{transcript} $\trans$ of all messages sent to and received by $\cont$: $\trans = ((t, \sender_t, \allowbreak\receiver_t, \cid_t, \msg_t))_t.$ For each $t$: one of $\sender_t$ or $\receiver_t$ is always \cont; $\cid_t$ is an id of a channel between $\sender_t$ and $\receiver_t$; and $\msg_t \in \{0,1\}^* \cup \{\delete\}$. 
Though they may freely communicate, we omit messages between $\env$ and $\subj$ from the transcript. The transcript can be further divided into \emph{queries} and \emph{answers}. The queries are messages sent to $\cont$ (by either $\env$ or $\subj$). 
We denote by $\queries$ the sub-transcript containing all queries,  and by $\queries_\env$ only those messages sent by $\env$. 
The answers \answers are messages to someone \textit{from} $\cont$. We denote by $\answers_\env \subseteq \trans$ the ordered messages from $\cont$ to \env.

The real execution $\exec{\cont(\Rand_\cont), \env(\Rand_\env), \subj(\Rand_\subj)}$ defines a transcript $\trans$, a state $\s_\cont$, and randomness $\Rand_\cont$. The transcript $\trans$ is defined above. The state $\s_\cont$ consists of the contents of $\cont$'s work tape at the end of the execution. 
If the execution does not terminate, we define $\s_\cont = \bot$ and the transcript $\trans$ and its subset $\queries_\env$ are defined to be empty.

\para{The ideal execution}
The ideal execution involves two parties: $\cont$ and $\dummy$. The dummy party $\dummy$'s input tape is initialized with the same channel IDs as $\env$'s tape in the real execution.  The dummy simply replays the queries in $\queries_\env$ if any. At every activation, it sends the next query in the sequence to $\cont$ using the same $\cid$ as in the real execution. Note that the \emph{answers} \dummy receives from \cont may be different from the real-world answers; however, \dummy sends the same queries $\queries_\env$ regardless.

The controller $\cont$ is exactly as in the real execution except that its randomness tape is initialized using \emph{simulated} randomness $\Rand'_\cont\gets \Sim(\queries_\env, \Rand_\cont, \s_\cont)$ instead of \emph{uniform} randomness $\Rand_\cont\sim \distrand$.
The simulator is an (inefficient) algorithm $\Sim$ that takes as input $\queries_\env$, $\s_\cont$, and $\Rand_\cont$ and produces output $\Rand'_\cont$.
Note that $\Rand_\cont$ and $\Rand'_\cont$ consist of countably infinitely-many bits; hence $\Rand_\cont$ and $\Rand'_\cont$ cannot be treated as conventional inputs and outputs to $\Sim$. Instead, we give $\Sim$ access to a special tape on which $\Rand_\cont$ is written. $\Sim$ may overwrite finitely-many bits of this tape before it halts. We denote by $\Rand'_\cont$ the final contents of this tape.

We denote the ideal execution as $\exec{\cont(\Rand'_\cont), \dummy(\queries_\env)}$.
The ideal execution ends after $\dummy$ sends its last query from $\queries_\env$ to $\cont$. When this happens, $\cont$ is activated one final time. The execution ends when $\cont$ halts or writes to its output tape. $\s'$ is defined as the final state of $\cont$.
Note that the ideal-world execution terminates if the real-world execution terminates. If the ideal execution does not terminate, we define $\s' = \bot$. In this case, $\queries_\env$ must be infinitely long and hence the corresponding real execution did not terminate.

\section{History Independence and Deletion-as-control}\label{sec:HI}

History independence (HI) is concerned with the problem that the memory representation of a data structure may reveal information about the history of operations that were performed on it \cite{micciancio1997oblivious, naor2001anti, hartline2005characterizing}. 
HI requires that the memory representation reveals nothing more than the current logical state of the data structure.
Setting aside a number of technical subtleties, the conceptual connection to machine unlearning is immediate:  if we consider a machine learning model as a representation of a dictionary data structure with insert and remove (i.e., unlearn) operations, a model is HI if and only if it satisfies machine unlearning.

In this section, we state the definition of (non-adaptive) history independence (Section~\ref{sec:HIbasic}). We then define a more general notion that allows for implementations that satisfy the conditions of HI \emph{approximately} and \emph{adaptively} (Section~\ref{sec:adaptive-HI}). 
The generalization is complex since we must explicitly model {adaptivity} in the interactions between a data structure and those issuing queries to it. 
Briefly, an adaptive adversary $\adv$ interacting with the data structure produces two equivalent query sequences.
Adaptive history independence (AHI) requires that the joint distribution of \adv's view and the data structure's state is the same under both sequences. Approximate AHI requires these distributions to be $(\eps,\delta)$-close.

We show that data controllers that satisfy approximate AHI also satisfy our notion of deletion-as-control (Section~\ref{sec:AHI-2-control}) with the same parameters. 
Finally, we show how existing definitions  of \textit{machine unlearning} and the corresponding constructions are all (weakenings of) our general notion of history independence (Section~\ref{sec:MUL-2-AHI}). 

\subsection{History independence}
\label{sec:HIbasic}

An \emph{abstract data type (ADT)} is defined by a universe of operations $\{\op\}$ and a mapping $\adt:(\op,\stateLog) \mapsto (\stateLog',\outLog)$.
We call $\stateLog,\stateLog' \in \{0,1\}^*$ the \emph{logical states} before and after operation, where $\stateLog'$ is a deterministic function of $\stateLog$ and $\op$. 
We call $\outLog \in \{0,1\}^* \cup \{\bot\}$ the \emph{logical output}, which may be randomized.
In subsequent sections we will assume without loss of generality that operations $\op(\id)$ are tagged by $\id \in \{0,1\}^*$. We omit the tags where possible to reduce clutter.

Given an initial logical state $\stateLog^0$ and a sequence of operations $\seq = (\op^1,\op^2,\dots)$, the ADT defines a sequence of logical states $(\stateLog^1, \stateLog^2, \dots)$ and a sequence of outputs $\Out = (\outLog^1, \outLog^2, \dots)$ by iterated application of $\adt$. 
We denote by $\adt(\sigma).\s$ the final logical state that results from this iterated application. When no initial state is specified, it is assumed to be the empty state. 

\begin{definition}
We say two sequences of operations $\seq$ and $\seq'$ are \emph{logically equivalent}, denoted $\seq\logequiv \seq'$, if $\adt(\seq).\s = \adt(\seq').\s$. Logical equivalence is an equivalence relation, and we denote by $[\seq]$ a canonical sequence in the equivalence class of sequences $\{\seq' : \seq' \logequiv \seq\}$.
\end{definition}

An \emph{implementation} (e.g., a computer program for a particular architecture) is a possibly randomized mapping $\impl:(\op, \statePhys) \mapsto (\statePhys', \outPhys)$. 
We call $\statePhys$ the \emph{physical state} and $\outPhys$ the \emph{physical output}.
Both may be randomized. Given an initial state and sequence of operations, $\impl$ defines a sequence of physical states and outputs by iterated application. 
When no initial state  is specified, it is assumed to be the empty state.

\begin{definition}[History independence \citep{naor2001anti}] \label{def:HI}
$\impl$ is a \emph{weakly history independent} implementation of $\adt$ \emph{(WHI-implements $\adt$)} if
\begin{equation}
    \seq \logequiv \seq' \implies  \impl(\seq).\s \equiv \impl(\seq').\s, \label{eqn:WHI:perfect}
\end{equation}
where $\equiv$ denotes equality of distributions.
$\impl$ is a \emph{strongly history independent} implementation of $\adt$  (SHI-implements ADT) if  for all initial states $\statePhys$
\begin{align}
    \seq \logequiv \seq' \implies 
    \impl(\seq,\statePhys).\s \equiv \impl(\seq',\statePhys).\s . \label{eqn:SHI:perfect}
\end{align}
\end{definition}

One can obtain approximate, \textit{nonadaptive} versions of history independence by replacing $\equiv$ with $\approxequiv{\eps,\delta}$ in Definition~\ref{def:HI}. However, because the sequence of queries is specified ahead of time, such a definition's guarantees are not meaningful in interactive settings---see Example~\ref{ex:nonadaptiveHI}.

We do not define correctness of an implementation of an ADT. 
Thus every ADT trivially admits a SHI implementation (e.g., $\impl$ always outputs $\bot$). Omitting correctness simplifies the specification of the ADT while allowing flexibility---for example, if approximate correctness suffices for an application. 
The usefulness of an implementation requires separate analysis.
However, history independence simplifies this step: it suffices to analyze the utility for the canonical sequences of operations $[\seq]$ instead of arbitrary sequences $\seq$.

\subsubsection{Strongly History Independent Dictionaries}
\label{sec:SHI_DICT}
To illustrate history independence, consider the dictionary ADT, which models a simple key-value store.  
Looking ahead, we will use history independent dictionaries to build deletion-compliant controllers from differential privacy.
For our purposes, the keys will be party IDs $\id$; the values can be arbitrary. The ADT supports operations $\ins{\id}$, $\del{\id}$, $\mathsf{get}(\id)$, and $\mathsf{set}(\id,value)$, where $\mathsf{set}$ associates the key corresponding to $\id$ with $value$, and $\mathsf{get}$ returns the most recently set value. We assume that $\ins{\id}$ is equivalent to $\mathsf{set}(\id,\top)$ where $\top$ is a special default value.

Dictionaries are typically implemented as hash tables, but such data structures are generally \textit{not} history independent.

For example, in hashing with open addressing, deletions are typically done lazily (by marking the cell for a deleted key with a special, ``tombstone'' value). So one can tell whether an item was deleted but also learn the deleted item's hash value.
 Storing a dictionary as a sorted list is inefficient---updates generally take time $\Omega(n)$, where $n$ is the current number of keys---but it enjoys strong history independence. The sorted representation depends \textit{only} on the logical contents of the dictionary, not on the order of insertions nor which items were inserted then deleted.  
 Since we do not focus on efficiency here, the reader may think of the sorted list as our default implementation of a dictionary. \footnote{In fact, a strongly history independent hash table implementation with constant expected-time operations was described by \citet{blelloch2007strongly}.}

\subsection{Adaptive History Independence (AHI)}
\label{sec:adaptive-HI}

The history independence literature gives no guarantees against adaptively-chosen sequences of queries, because the two sequences $\seq$ and $\seq'$ are fixed before the implementation's randomness is sampled.

\begin{example} \label{ex:nonadaptiveHI}

Consider an implementation $\impl$ of a dictionary with two operations: $\ins{\id}$ and $\del{\id}$. Upon initialization, $\impl$ outputs the first $n$ bits of its uniform randomness tape, denoted $r$, and stores it in state $\statePhys^1 = r$.
If the first operation is $\ins{r}$, $\impl$ will store the subsequent sequence of operations it its state. Otherwise, $\impl$ ignores all subsequent operations, setting $\statePhys = r$. $\impl$ produces no outputs other than the initial $\outPhys^1 = r$.
$\impl$ satisfies $(0,2^{-n})$-approximate \textit{nonadaptive} strong history independence: for any sequences $\seq$ and $\seq'$ fixed in advance, and for any initial $\statePhys$, $\impl(\seq,\statePhys).\s = r = \impl(\seq', \statePhys)$, except with probability $2\cdot 2^{-n}$.
However, an adaptive adversary that sees $\outPhys^1 = r$ can easily produce distinct $\seq \logequiv \seq'$ such that $\impl(\seq).\s =\seq \neq \seq' = \impl(\seq').\s$.
\end{example}

Inspired by \citep{gupta2021adaptive}, we extend the well-studied notion of history independence to the \emph{adaptive} setting, where the sequence $\seq$ of operations is chosen adaptively by an algorithm interacting with an implementation of an ADT. 

We consider an interaction $\exec{\impl(\Rand), \adv}$ between an algorithm $\adv$ and the implementation $\impl$ with random tape $\Rand \sim \distrand$ (Figure~\ref{fig:AHI}).

In the interaction, $\adv$ adaptively outputs an operation\\
$\op^{i} \gets \adv(\op^1, \outPhys^1,\dots,\op^{i-1},\outPhys^{i-1})$, and receives the output $\outPhys^{i}$ in return. 
The interaction defines a sequence of operations $\seq$ and corresponding outputs $\Out$.
Eventually, \adv outputs a sequence $\seq^*$ that is logically equivalent to the sequence $\seq$ of operations performed so far.
$\impl$ is executed on $\seq^*$ and alternate randomness $\Rand^*$, resulting in $\statePhys^*  = \impl(\seq^*; \Rand^*).\s$. 
We consider two variants: $\Rand^* = \Rand$, or $\Rand^*\sim \distrand$ independent of $\Rand$.

We consider the adversary's ability to distinguish the real state $\statePhys = \impl(\seq;\Rand).\s$ and the logically equivalent state \\$\statePhys^*  = \impl(\seq^*; \Rand^*).\s$, given its view $\view_{\impl, \adv} = (\seq, \seq^*, \Out)$.
Our definition of adaptive history independence requires that the joint distributions of $(\view_{\impl,\adv}, \statePhys)$ and $(\view_{\impl,\adv},\statePhys^*)$ be $(\eps,\delta)$-close.
We restrict ourselves to adversaries \adv such that $\exec{\impl,\adv}$ always terminates, which we call \emph{valid} adversaries.

\begin{definition}[Adaptive (weak) history independence]\label{def:AHI}
An implementation $\impl$ of an $\adt$ is $(\eps, \delta)$-history independent (AHI) if for all valid adversaries $\adv$:
$$
(V_{\impl, \adv},\statePhys)
\approxequiv{\eps, \delta}
(V_{\impl, \adv}, \statePhys^*)
$$

    where the distributions are given by the following probability experiment. The experiment has two versions (identical or independently drawn randomness):
        
    \begin{algorithmic}[1]
        \State $\Rand \sim \distrand$ 
        \State Either $\Rand^* = \Rand$ (identical randomness), or $\Rand^* \sim \distrand$ (independent randomness) \label{step:AHI-choose-rand}
        \State $(\seq,\Out ,\statePhys) \gets \exec{\impl(\Rand),\adv}$  
        \State \adv outputs $\seq^*$
        \State $\statePhys^* \gets \impl(\seq^*; \Rand^*).\s$
        \State If $\seq^* \not\logequiv\seq$, then $\statePhys^* \gets \statePhys$
        \State $V_{\impl, \adv} \gets (\seq, \seq^*, \Out)$
    \end{algorithmic}
    We say an implementation \impl is $(\eps, \delta)$-AHI if it satisfies either version of the definition (identical or independently drawn randomness).

\end{definition}

    Note that for both versions of Definition~\ref{def:AHI}, \Rand and $\Rand^*$ are identically distributed. One can generalize the definition to allow for any joint distribution over $(\Rand, \Rand^*)$ such that the two marginal distributions are identical (all of our proofs would still go through for this general notion). However, we avoid this generality since the two versions that we present are sufficient for our needs.

    The AHI definiton here corresponds to the approximate, adaptive version of \textit{weak} history independence (Definition~\ref{def:HI}), which considers only the state state at a single point in time. This suffices for our purposes. One could generalize Definition~\ref{def:AHI} to allow an adversary to see the full internal state in the real execution at multiple points in time; this would correspond to strong history independence. We leave such an extension for future work.

\begin{figure}
    \centering
    \includegraphics[scale=0.4]{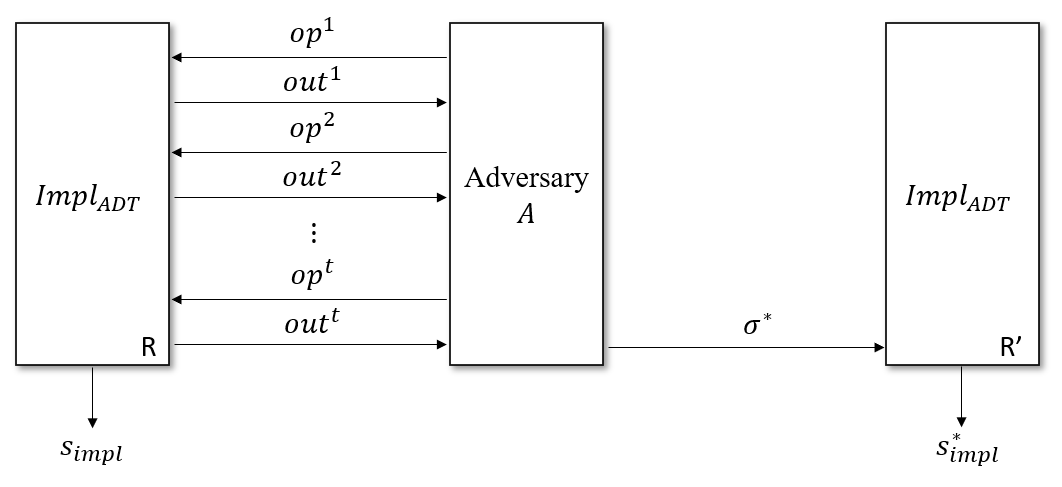}
    \caption{Adaptive history independence game, from Definition~\ref{def:AHI}}
    \label{fig:AHI}
\end{figure}

\para{From Strong Nonadaptive HI to Adaptive History Independence}
    \label{sec:adaptiveHIexamples}
    To illustrate the definition, consider the case of strongly history independent dictionaries. The hash table construction of \citet{blelloch2007strongly} is a randomized data structure with the following  property: for every setting of the random string $\Rand=r$, and for every two logically equivalent sequences $\seq,\seq^*$, the data structure stores the same table $\statePhys(\seq,r) = \statePhys(\seq^*,r)$. (In fact, essentially all $(0,0)$-strongly HI data structures can be modified to satisfy this strong guarantee \citep{hartline2005characterizing}). %
    
    The data structure's answers will thus depend only on the logical data set. However, they might also depend on the randomness $r$---for instance, the answers might leak the length of the probe sequence needed to find a given element or other internal information. In an adaptive setting, queries might depend on $r$ via the previous answers. To emphasize this dependence, we write the realized query as $\seq(r)$. 
    
    This construction satisfies $(0,0)$-adaptive HI with identical randomness (where $\Rand'=\Rand$ in Step~\ref{step:AHI-choose-rand} of Definition~\ref{def:AHI}) but \textit{not} the version with fresh randomness ($\Rand' \perp \Rand$). To see why it satisfies the definition with identical randomness, consider the tuples $(\seq(\Rand), \seq(\Rand)^*, \Out, \statePhys(\seq,\Rand))$ and $(\seq(\Rand), \seq(\Rand)^*, \Out, \statePhys(\seq^*,\Rand))$ resulting from the game in Def.~\ref{def:AHI}. By logical equivalence, the tuples are the same, and hence identically distributed. On the other hand, if we select $\Rand'$ independently of $\Rand$, then the adversary will, in general, see a difference between $\seq(\Rand)$  and $\statePhys(\seq^*,\Rand')$ (for instance, the hash functions corresponding to $\Rand$ and $\Rand'$ will be different with high probability).

\subsection{AHI and Deletion-as-Control}
\label{sec:AHI-2-control}

We prove two theorems showing that data controllers that implement history independent ADTs satisfy deletion-as-control. Before that, two technicalities remain. First, the statement only makes sense if the ADT itself supports some notion of deletion. Second, there is a mismatch between the syntax of ADTs/implementations and the deletion-as-control execution. Next we define \emph{ADTs supporting logical deletion} and \emph{controllers relative to an implementation} to handle the two issues, then state and prove the main theorems of this section.

Consider an ADT where operations $\{\op(\id)\}$ are tagged with an identifier $\id\in\{0,1\}^*$ (e.g., the channel IDs). %
For a sequence $\seq$ of operations and an $\id^*$, let $\seq_{-\id^*}$ be the sequence of operations with every operation with identifier $\id^*$  removed (that is, $\op(\id^*)$ for all values of $\op$).

\begin{definition}[Logical Deletion] \label{def:logical_deletion}
ADT supports \emph{logical deletion} if there exists an operation $\delete$ such that for all sequences of operations $\seq$ and for all IDs $\id^*$:  $\left(\seq\|\delete({\id^*})\right) \logequiv \seq_{-\id^*}$.
\end{definition}

Logical deletion is important for ensuring the correctness property of deletion compliance (that the states in the two worlds are identical). Conversely, if an ADT does not support logical deletion, then one would expect any controller that faithfully implements the ADT to violate the deletion requirement.

The notion of logical deletion applies to a wide variety of abstract data types, not only dictionaries. For example, consider a public bulletin board, where users can post messages visible to everyone. The natural ADT would allow creation of new users, insertion and maybe deletion of specific posts, and deletion of an entire user account. This latter operation would undo all previous actions involving that account. 
We consider a family of data controllers $\cont$ that are essentially just implementations of ADTs that can interface with the execution.

\begin{definition}[Controller relative to an implementation] \label{def:cont_impl}
Let $\impl$ be an implementation of an $\adt$. We define the controller $\cont_\impl$ relative to $\impl$ as the controller that maintains state $\s$ and works as follows:
\begin{itemize}[itemsep=0.3pt, topsep=4pt, partopsep=2pt]
    \item On input $(\cid, \msg)$:
    \begin{itemize}[topsep=1pt]
        \item $(\s',\outPhys)\gets \impl(\op(\cid), \s)$, where $\op \gets \msg$.
        \item Write $(\cid, \outPhys)$ to the output tape.
    \end{itemize} 
    \item On input $\fail$: Halt.
\end{itemize}

We say $\cont$ is history independent (adaptively/strongly/weakly) if $\impl$ is history independent (adaptively/strongly/weakly).

\end{definition}

\begin{restatable}{theorem}{SHIDEL}
\label{thm:perfect_shi_del}
For any ADT that supports logical deletion and any SHI implementation $\impl$, the controller $\cont = \cont_\impl$ satisfies $(0,0)$-deletion-as-control with the simulator that outputs $\Rand'_\cont = \Rand_\cont$.
\end{restatable}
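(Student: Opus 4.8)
The plan is to verify the two conditions of Definition~\ref{def:del-as-cont} directly for the simulator $\Sim$ that sets $\Rand'_\cont = \Rand_\cont$. Condition~\ref{enum:del_cond_1} is immediate: since $\Rand'_\cont$ equals $\Rand_\cont$ as a random variable, the two are trivially $(0,0)$-indistinguishable. So all the work is in Condition~\ref{enum:del_cond_2}, namely that $\s'_\cont = \s_\cont$ (or $\s_\cont = \bot$) with probability $1$. By Lemma~\ref{lem:derandomize} I may assume $\env$ and $\subj$ are deterministic, so that fixing the tape $\Rand_\cont = r$ fixes the entire real execution, the sequence $\seq$ of operations that $\cont_\impl$ performs (Definition~\ref{def:cont_impl}), and the environment queries $\queries_\env$. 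If the real execution does not halt then $\s_\cont = \bot$ and there is nothing to prove, so I assume it halts; by the execution model this happens exactly when $\subj$ issues its $\delete$, so $\seq$ ends in $\delete(\cid^*)$, where $\cid^*$ is $\subj$'s unique channel.

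The combinatorial core is to show $\seq \logequiv \seq_\env$, where $\seq_\env$ is the operation sequence that $\dummy$ replays in the ideal execution. I would argue this in two steps. First, because channels are authenticated and $\subj$ talks to $\cont$ only over $\cid^*$ while $\env$ never has access to $\cid^*$, the operations in $\seq$ tagged by $\cid^*$ (the controller tags each operation by the channel it arrives on) are \emph{exactly} those contributed by $\subj$; removing them from the prefix $\seq_0$ (everything before the final $\delete$) leaves precisely the operations contributed by $\env$, in their original order, so $(\seq_0)_{-\cid^*} = \seq_\env$. Second, writing $\seq = \seq_0 \,\|\, \delete(\cid^*)$ and invoking Definition~\ref{def:logical_deletion} gives $\seq \logequiv (\seq_0)_{-\cid^*} = \seq_\env$. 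One must check that $\dummy$ really does replay $\queries_\env$ verbatim regardless of $\cont$'s (now different) ideal-world answers—this is built into the ideal execution—and that $\cont$'s tape head starts at position $0$ in both worlds, so that $\s_\cont = \impl(\seq;r).\s$ and $\s'_\cont = \impl(\seq_\env;r).\s$ are computed from the same tape contents $r$.

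It then remains to pass from logical equivalence of $\seq$ and $\seq_\env$ to equality of the resulting physical states, and this is the step I expect to be the main obstacle. Strong history independence as stated (Definition~\ref{def:HI}) only guarantees equality of the state \emph{distributions}, $\impl(\seq).\s \equiv \impl(\seq_\env).\s$, whereas Condition~\ref{enum:del_cond_2} with the simulator $\Rand'_\cont = \Rand_\cont$ compares the two states \emph{on the same tape} $r$ and demands $\impl(\seq;r).\s = \impl(\seq_\env;r).\s$ for almost every $r$. Distributional equality does not by itself give this pointwise coupling: an implementation that attaches a fresh independent uniform bit to each stored key on every insertion has a state distribution depending only on the logical contents, yet two logically equivalent sequences performing different numbers of insertions read different tape bits and disagree with constant probability. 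The way I would close the gap is to use the stronger ``identical-randomness'' form of strong history independence described in Section~\ref{sec:adaptiveHIexamples}—namely $\statePhys(\seq,r) = \statePhys(\seq^*,r)$ for every tape $r$ and every logically equivalent pair $\seq,\seq^*$—which is exactly the property that makes the simulator $\Rand'_\cont = \Rand_\cont$ valid, which the canonical SHI constructions (a sorted list, or the hash table of \citet{blelloch2007strongly}) satisfy, and to which essentially any strongly history independent data structure can be modified \citep{hartline2005characterizing}. With that property in hand, $\seq \logequiv \seq_\env$ yields $\s_\cont = \impl(\seq;r).\s = \impl(\seq_\env;r).\s = \s'_\cont$ for every $r$, establishing Condition~\ref{enum:del_cond_2} with $\delta = 0$ and completing the proof.
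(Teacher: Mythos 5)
Your proposal follows essentially the same route as the paper's proof: both reduce everything to the observation that the real and ideal operation sequences are logically equivalent (via \Cref{def:logical_deletion}) and then upgrade the distributional guarantee of SHI to a \emph{pointwise} statement---one canonical memory representation per logical state per setting of the random tape---so that reusing $\Rand_\cont$ forces $\s'_\cont = \s_\cont$ exactly. You correctly identify that this pointwise upgrade is the crux, and that \Cref{def:HI} as literally written only speaks of distributions.

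The one place your argument is weaker than it needs to be is the final step. You close the gap by appealing to the ``identical-randomness'' form of SHI, which you justify by saying the canonical constructions satisfy it and that essentially any SHI data structure ``can be modified'' to satisfy it. But the theorem asserts the conclusion for \emph{any} SHI implementation of an ADT supporting logical deletion, as given, not for a modified one. The paper closes this without modification: logical deletion implies the ADT is reversible (Claim~\ref{clm:rev_not_del}), and for reversible ADTs the characterization of \citep{hartline2005characterizing} (\Cref{thm:shi_canonical}) says that SHI is \emph{equivalent} to having a single canonical representation per logical state for every setting of the randomness tape. So under the theorem's hypotheses the identical-randomness property is not an extra assumption but a consequence of SHI itself; adding that one link turns your proof into the paper's. (Relatedly, your ``fresh bit per insertion'' example is not actually SHI---it fails \Cref{def:HI} once one quantifies over all initial physical states, e.g.\ on $\sigma = (\delete(a),\ins{a})$ versus the empty sequence from initial state $\{(a,0)\}$---so it illustrates that \emph{weak} HI does not give the pointwise coupling, but it does not separate SHI from the identical-randomness form.)
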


Without the condition on the simulator, this is a corollary of the theorem below. 
The proof in  Appendix~\ref{app:perfect_shi_del}
uses a foundational result in the study of history independence. Roughly, that SHI implies canonical representations for each logical state of the ADT \citep{hartline2005characterizing}.

\begin{theorem}\label{thm:AHI_to_del}
For any ADT that supports logical deletion and any $\impl$ of the ADT satisfying $(\eps, \delta)$-AHI (with either variant of Definition~\ref{def:AHI}), the controller $\cont = \cont_\impl$ is $(\eps, \delta)$-deletion-as-control compliant.
\end{theorem}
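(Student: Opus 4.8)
The plan is to reduce an arbitrary deletion-as-control execution to the adaptive history-independence game of Definition~\ref{def:AHI}, and then read off the two required conditions from the AHI guarantee using the default simulator of Definition~\ref{def:default-sim}.

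First I would invoke Lemma~\ref{lem:derandomize} to assume \env and \subj are deterministic. Given such \env and \subj, I build a deterministic (hence valid) AHI adversary \adv that internally runs the real execution: it relays every message $(\cid,\msg)$ addressed to \cont as the operation $\op(\cid)$ with $\op\gets\msg$, forwards the physical output $\outPhys$ back to whichever of \env/\subj sits on channel $\cid$, and records the realized operation sequence $\seq$ together with the outputs $\Out$. When \subj issues its deletion on its unique channel $\id^*$, the real execution ends, and at that point \adv outputs $\seq^*\defeq\queries_\env=\seq_{-\id^*}$. Since the last operation of $\seq$ is $\delete(\id^*)$, logical deletion (Definition~\ref{def:logical_deletion}) gives $\seq\logequiv\seq_{-\id^*}=\seq^*$, so \adv produces a logically-equivalent query and \adv is valid whenever the real execution halts. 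By construction the AHI state $\statePhys=\impl(\seq;\Rand).\s$ equals the real controller state $\s_\cont$, the view is $\view=(\seq,\seq^*,\Out)$, and $\impl$'s tape $\Rand$ plays the role of $\Rand_\cont$. (A real execution that never halts yields $\s_\cont=\bot$ and empty $\queries_\env$, so condition~(ii) is vacuous and the simulator may output $\Rand_\cont$; I handle this degenerate case separately and assume termination below.) Because $(\eps,\delta)$-indistinguishability is preserved under the deterministic projection $\view\mapsto\queries_\env=\seq^*$, the AHI guarantee yields $(\queries_\env,\statePhys)\approxequiv{\eps,\delta}(\queries_\env,\statePhys^*)$, where $\statePhys^*=\impl(\queries_\env;\Rand^*).\s$.

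I then take \Sim to be the default simulator $\Sim^*$ (Definition~\ref{def:default-sim}), which on input $(\queries_\env,\Rand_\cont,\s_\cont)$ returns a fresh tape $\Rand'_\cont$ drawn uniformly from $\{r:\impl(\queries_\env;r).\s=\s_\cont\}$ whenever this set is nonempty. Condition~(ii) is then immediate on the nonempty event, since the ideal state is $\s'_\cont=\impl(\queries_\env;\Rand'_\cont).\s=\s_\cont$ by construction. To bound the failure probability, note that the bad event ``$\{r:\impl(\queries_\env;r).\s=\s_\cont\}=\emptyset$'' is a fixed statement about the pair $(\queries_\env,\statePhys)$, and that it has probability exactly $0$ under the law of $(\queries_\env,\statePhys^*)$: in both variants $\statePhys^*$ is, by definition, a state that $\impl$ outputs on input $\queries_\env$ for some tape, so the corresponding set is never empty. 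By the $(\eps,\delta)$-indistinguishability established above, the same event has probability at most $e^\eps\cdot0+\delta=\delta$ under the real law of $(\queries_\env,\statePhys)$; hence condition~(ii) holds with probability at least $1-\delta$.

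The crux is condition~(i), that the simulated tape $\Rand'_\cont$ is $(\eps,\delta)$-indistinguishable from a uniform tape $\Rand_\cont\sim\distrand$. The clean case is the independent-randomness variant of Definition~\ref{def:AHI}. There $\Rand'_\cont$ is obtained by drawing $(\queries_\env,\statePhys)$ from the real law and applying the fixed randomized post-processing kernel $g(q,s)\defeq\distrand|_{\impl(q;\cdot).\s=s}$; applying the very same kernel $g$ to the ideal law $(\queries_\env,\statePhys^*)$ instead recovers exactly a uniform tape, because there $\statePhys^*$ is generated from a fresh $\Rand^*\sim\distrand$ independent of $\queries_\env$, so the conditional law of $\Rand^*$ given $(\queries_\env,\statePhys^*)=(q,s)$ is precisely $g(q,s)$. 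Since $(\eps,\delta)$-indistinguishability is preserved under post-processing, $\Rand'_\cont\approxequiv{\eps,\delta}\distrand$, as required. The main obstacle is the identical-randomness variant, where $\statePhys^*=\impl(\queries_\env;\Rand).\s$ reuses the real tape and is therefore correlated with $\queries_\env$; the post-processing reconstruction above no longer returns the uniform tape on the nose, and closing this gap is exactly where I would invoke the coupling lemma for $(\eps,\delta)$-indistinguishability (Appendix~\ref{app:coupling-lemma}) to couple $\Rand_\cont$ with the simulator's output while controlling the $(\eps,\delta)$ loss.
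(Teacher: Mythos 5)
Your reduction to the AHI game is the same as the paper's (the adversary $\adv_{\env,\subj}$ that emulates $\exec{\cont,\env,\subj}$ and outputs $\seq^*=\queries_\env=\seq_{-\id^*}$, with logical deletion supplying $\seq\logequiv\seq^*$), and your argument is essentially complete for the \emph{independent-randomness} variant: there the default simulator's kernel $g(q,s)=\distrand|_{\impl(q;\cdot).\s=s}$ really is the conditional law of $\Rand^*$ given $(\queries_\env,\statePhys^*)$, so post-processing the two $(\eps,\delta)$-close pairs recovers a uniform tape on the ideal side, which is exactly what the paper's Coupling Lemma packages. Your handling of condition~(ii) (the ``no preimage'' event has probability $0$ under the ideal law, hence at most $\delta$ under the real law) also matches the lemma's second conclusion, modulo the measure-$\delta$ discrepancy between returning $\bot$ and the default simulator's fallback.

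The genuine gap is the identical-randomness variant, which the theorem also claims and which you explicitly leave open. The default simulator is the wrong choice there, and ``invoke the coupling lemma'' does not close the gap as stated: the Coupling Lemma requires you to condition on the \emph{entire} output of $g$ matching the entire output of $f$, whereas the default simulator conditions only on the state coordinate and ignores the queries. The paper's fix is to use a different simulator: set $f(\Rand)=\bigl(\queries_\env(\Rand),\cont(\queries(\Rand);\Rand)\bigr)$ and $g(\Rand,\Rand^*)=\bigl(\queries_\env(\Rand),\cont(\queries_\env(\Rand);\Rand^*)\bigr)$, prove $f(\Rand)\approxequiv{\eps,\delta}g(\Rand,\Rand^*)$ for $(\Rand,\Rand^*)\sim\ahiJointDist$ via the AHI reduction (this is where the joint view $(\view_{\impl,\adv},\statePhys)$ is needed, not just the state), and let $\Sim$ sample $(\widetilde{\Rand},\widetilde{\Rand}^*)\sim\ahiJointDist$ conditioned on \emph{both} $\queries_\env(\widetilde{\Rand})=\queries_\env$ and $\cont(\queries_\env;\widetilde{\Rand}^*)=\s_\cont$, outputting $\widetilde{\Rand}^*$. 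Then the Coupling Lemma applies verbatim with $X=\Rand$ and $Y=(\Rand,\Rand^*)$, and since the marginal of $\widetilde{\Rand}^*$ under $\ahiJointDist$ is $\distrand$ in \emph{both} variants, condition~(i) follows uniformly. (In the independent variant this conditioning factorizes and the output coincides with the default simulator, which is why your argument goes through there; in the identical variant the extra conditioning on $\queries_\env(\widetilde{\Rand})=\queries_\env$ is precisely what restores correctness.) You should restate your proof with this simulator, or else prove separately that the default simulator's marginal is $(\eps,\delta)$-close to uniform in the identical-randomness case, which is not immediate.
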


The proof uses a simple, novel result on indistinguishability, dubbed the Coupling Lemma, which we present and prove in Appendix~\ref{app:coupling-lemma}.
\begin{proof}[Proof of Theorem~\ref{thm:AHI_to_del}]
Fix any deterministic \env and \subj. By Lemma~\ref{lem:derandomize} this is without loss of generality.
Throughout this proof, we drop the subscript $\impl$ from the controller $\cont_\impl$, and drop the subscript $\cont_\impl$ from the controller's state $\s_{\cont_\impl}$ and randomness $\Rand_{\cont_\impl}$.

Definition~\ref{def:AHI} considers two variants of AHI, depending on how the randomness $\Rand$ used in the initial $\impl$ evaluation (with sequence $\seq$) relates to the randomness $\Rand^*$ used in the logically equivalent evaluation (with sequence $\seq^*$). (Either $\Rand$ and $\Rand^*$ are sampled i.i.d.\ or $\Rand^* = \Rand$.)
Let $\ahiJointDist$ be the joint distribution over $(\Rand,\Rand^*)$ for which $\impl$ enjoys the $(\eps,\delta)$-AHI guarantee. Observe that the marginal distribution of $\Rand^*$ is $\distrand$.  

To prove that the controller $\cont$ is deletion-as-control compliant, we will use the following simulator $\Sim$:
$\Sim(\queries_\env,\Rand,\s)$ samples $(\widetilde{\Rand},\widetilde{\Rand}^*)$ from $\ahiJointDist$ conditioned on the following event: $$\queries_\env(\widetilde{\Rand}) = \queries_\env \quad \land \quad \cont(\queries_\env;\widetilde{\Rand}^*) = \s,$$
where $\queries_\env(\widetilde{\Rand})$ are the queries from $\env$ to $\cont$ in the execution $\exec{\cont(\widetilde{\Rand}),\env,\subj}$. $\Sim$ outputs $\Rand' = \widetilde{\Rand}^*$.
The real execution's state is $\s = \cont(\queries(\Rand);\Rand)$. 
The ideal execution's state is $\s' = \cont(\queries_\env(\Rand);\Rand')$.

\begin{claim}
\label{claim:ahi-claim-1}
Define  $f$ and $g$ as follows:
\begin{align*}
f(\Rand) &= \bigl(\queries_\env(\Rand),\ \cont(\queries(\Rand);\Rand)\bigr) \\
g(\Rand, \Rand^*) &= \bigl(\queries_\env(\Rand),\ \cont(\queries_\env(\Rand); \Rand^*)\bigr),
\end{align*}
Then $(\Rand,\Rand^*) \sim \ahiJointDist \implies f(\Rand) \approxequiv{\eps,\delta} g(\Rand,\Rand^*).$
\end{claim}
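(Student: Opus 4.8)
The plan is to reduce the claim to the $(\eps,\delta)$-AHI guarantee of $\impl$ by building a single valid adversary $\adv$ whose view and induced states, after a deterministic post-processing, reproduce the two distributions $f(\Rand)$ and $g(\Rand,\Rand^*)$. Recall that $\env$ and $\subj$ are already deterministic (Lemma~\ref{lem:derandomize}) and that $\cont = \cont_\impl$ merely wraps $\impl$ (Definition~\ref{def:cont_impl}). I would have $\adv$ internally simulate the interaction of $\env$ and $\subj$ with the controller, forwarding each message $(\cid,\msg)$ they address to $\cont$ as the tagged operation $\op(\cid)$, with $\op \gets \msg$, issued to the external $\impl$, and relaying $\impl$'s physical outputs back to $\env$ and $\subj$. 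The adaptively generated operation sequence is then exactly the real query sequence $\seq = \queries(\Rand)$, the outputs are $\Out$, and $\adv$'s view is $V_{\impl,\adv} = (\seq,\seq^*,\Out)$.

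The crux is the choice of $\seq^*$. The simulated interaction halts precisely when $\subj$ sends its deletion message, which $\adv$ issues as the operation $\delete(\id^*)$ for $\subj$'s unique channel identifier $\id^*$; thus $\delete(\id^*)$ is the last element of $\seq$. At that moment $\adv$ outputs $\seq^* := \seq_{-\id^*}$, the sequence with every $\id^*$-tagged operation removed, which---since $\subj$ communicates with $\cont$ only over channel $\id^*$---is exactly $\queries_\env(\Rand)$. Applying logical deletion (Definition~\ref{def:logical_deletion}) to the prefix preceding the final $\delete(\id^*)$ gives $\seq \logequiv \seq_{-\id^*} = \seq^*$, so $\seq^* \logequiv \seq$ holds and the fallback ``if $\seq^* \not\logequiv \seq$'' in Definition~\ref{def:AHI} is never invoked. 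Hence the AHI experiment produces $\statePhys = \impl(\seq;\Rand).\s = \cont(\queries(\Rand);\Rand)$ and $\statePhys^* = \impl(\seq^*;\Rand^*).\s = \cont(\queries_\env(\Rand);\Rand^*)$, with $(\Rand,\Rand^*) \sim \ahiJointDist$.

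Invoking the $(\eps,\delta)$-AHI guarantee of $\impl$ then yields $(V_{\impl,\adv},\statePhys) \approxequiv{\eps,\delta} (V_{\impl,\adv},\statePhys^*)$. I would finish by post-processing with the deterministic map $\phi\big((\seq,\seq^*,\Out),s\big) = (\seq^*,s)$, which reads $\queries_\env(\Rand) = \seq^*$ off the view and keeps the state. Then $\phi(V_{\impl,\adv},\statePhys) = f(\Rand)$ and $\phi(V_{\impl,\adv},\statePhys^*) = g(\Rand,\Rand^*)$, and since $(\eps,\delta)$-indistinguishability is closed under deterministic post-processing, the claim $f(\Rand) \approxequiv{\eps,\delta} g(\Rand,\Rand^*)$ follows.

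The main obstacle is validity: an AHI adversary is admissible only if $\exec{\impl,\adv}$ halts on every random tape, whereas for some tapes the simulated real execution never reaches $\subj$'s deletion and runs forever. On such tapes the execution conventions set $\queries_\env = \emptyset$ and $\s_\cont = \bot$, so $f(\Rand) = (\emptyset,\bot)$ while $g(\Rand,\Rand^*) = (\emptyset, \cont(\emptyset;\Rand^*))$ need not agree. I would therefore split on the (measurable) termination event: on the terminating part of the probability space $\adv$ is genuinely valid and the reduction above applies verbatim, while the non-terminating tapes are deferred to condition~\ref{enum:del_cond_2} of Definition~\ref{def:del-as-cont}, which explicitly tolerates $\s_\cont = \bot$, when the claim is subsequently used to establish the theorem. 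Making this case split respect the $(\eps,\delta)$ accounting---rather than silently inflating $\delta$---is the delicate bookkeeping; the reduction itself is a direct translation through logical deletion.
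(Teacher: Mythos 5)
Your proposal matches the paper's proof essentially step for step: it builds the same AHI adversary $\adv_{\env,\subj}$ that emulates $\exec{\cont,\env,\subj}$ while forwarding operations to $\impl$, outputs $\seq^* = \seq_{-\id^*} = \queries_\env$, invokes logical deletion to get $\seq^* \logequiv \seq$, and then applies the $(\eps,\delta)$-AHI guarantee followed by deterministic post-processing of the view. Your added care about adversary validity on non-terminating tapes is a legitimate point that the paper's proof elides (it silently assumes the constructed $\adv$ is valid), but it does not change the argument's structure.
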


The proof of the claim is below. We use the claim to complete the proof of the theorem. 
Consider a procedure that samples $\Rand\sim \distrand$, and then samples $(\widetilde{\Rand}, \widetilde{\Rand}^*) \sim \ahiJointDist\big|_{f(\Rand) = g(\widetilde{\Rand},\widetilde{\Rand}^*)}$ (if possible, otherwise uniformly).
Observe that this is exactly the distribution of $\Rand\sim\distrand$ and the intermediate vales $(\widetilde{\Rand},\widetilde{\Rand}^*)$ sampled by $\Sim$ defined above.
Applying the Coupling Lemma (Lemma~\ref{lem:approx_coupling}) to the preceding claim, we get that the joint distribution over $(\Rand,\widetilde{\Rand},\widetilde{\Rand}^*)$ satisfies $(\widetilde{\Rand},\widetilde{\Rand}^*) \approxequiv{\eps,\delta} \ahiJointDist$, and $f(\Rand) = g(\widetilde{\Rand},\widetilde{\Rand}^*)$ with probability at least $1-\delta$.

$\Sim$ outputs $\Rand' = \widetilde{\Rand}^*$. 
Because the marginal distribution of $\widetilde{\Rand}^*$ is $\distrand$, we have that $\Rand'\approxequiv{\eps,\delta} \distrand$.
The equality of $f$ and $g$ implies that 
$\s' = \cont(\queries_\env(\Rand);\Rand') = \cont(\queries(\Rand);\Rand) = \s$. Hence, $\cont_\impl$ satisfies $(\eps,\delta)$-deletion-as-control with the simulator $\Sim$.
\end{proof}

\begin{proof}[Proof of Claim~\ref{claim:ahi-claim-1}]
We will use the adaptive history independence of $\impl$.
We define an AHI adversary $\adv = \adv_{\env, \subj}$ (see Figure~\ref{fig:AHI_to_del_reduction}) 

which gets query access to the implementation $\impl$ in the AHI game. \adv emulates the execution $\exec{\cont,\env,\subj}$ internally, querying $\impl$ as needed to implement $\cont$. When the execution terminates, $\adv$ outputs $\seq^* = \queries_\env$, the queries sent from \env to $\cont$. The actual sequence of queries received by $\impl$ in the real execution is $\seq = \queries$, including all queries from both \env and \subj.

We claim that $\seq^* \logequiv \seq$. 
To see why, observe that $\seq^* = \seq_{-\subj}$ is the query sequence with all queries from $\subj$ removed. Moreover, $\seq$ ends in $\delete(\subj)$.
By the hypothesis that $\adt$ supports logical deletion $\seq \logequiv \seq_{-\subj} = \seq^*$.

\begin{figure}[H]
    \centering
    \includegraphics[scale=0.38]{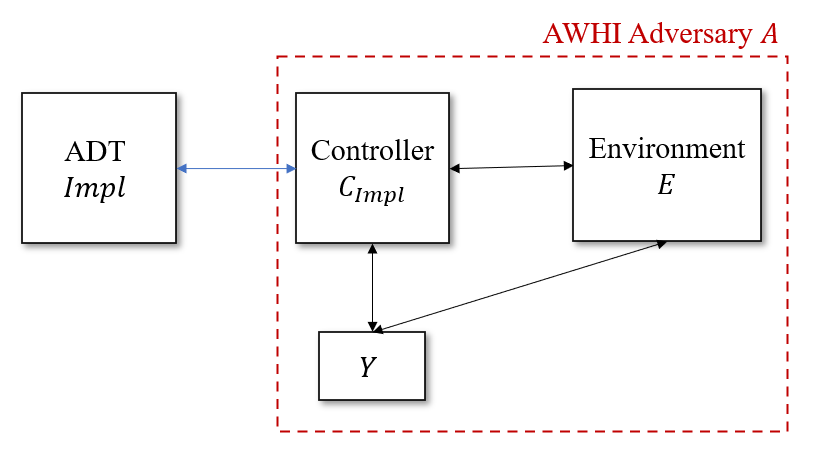}
    \caption{Reduction from $(\eps, \delta)$-AHI to $(\eps, \delta)$-deletion-as-control.}
    \label{fig:AHI_to_del_reduction}
\end{figure}

By the $(\eps,\delta)$-AHI of $\impl$, 
$(\view_{\impl, \adv}, \statePhys) 
\approxequiv{\eps, \delta} (\view_{\impl, \adv}, \statePhys^*)$.
By construction, the state $\statePhys = \impl(\seq;\Rand).\s$ is exactly the controller's state $\statePhys = \cont(\queries;\Rand)$ in the real execution $\exec{\cont(\Rand),\env,\subj}$.
Likewise, the state $\statePhys^* = \impl(\seq^*; \Rand^*)$ is exactly the the controller's state $\statePhys' = \cont(\queries_\env; \Rand^*)$ in the  ideal execution $\exec{\cont(\Rand^*), \dummy(\queries_\env)}$. 
$\adv$'s view $\view_{\impl,\adv}$ consists of $(\seq, \seq^*, \Out) = (\queries(\Rand), \queries_\env(\Rand), \Out)$.  Hence,
\begin{equation}\label{eq:AHI_eq1}
f(\Rand) = (\queries(\Rand), \statePhys) \approxequiv{\eps, \delta} (\queries(\Rand), \statePhys' ) = g(\Rand,\Rand^*).
\qedhere 
\end{equation}
\end{proof}

As a corollary to Theorem~\ref{thm:AHI_to_del}, any history independent implementation of the Private Cloud Storage touchstone or the Public Bulletin Board examples (from Section~\ref{sec:touchstone}) satisfy deletion-as-control. 

The Private Cloud Storage functionality works like a key-value store: each user has their own dictionary $D_\id$. To upload a file, a user sends $Upload(\id, filename, file)$ to the controller, which internally adds $(filename, file)$ to $D_\id$ (if $filename$ already exists in $D_\id$ then nothing happens). To download one of their own files, the user sends $Download(\id, filename)$ to the controller and receives the corresponding $file$ from $D_\id$, if one exists. To delete their account, the user sends $Delete(\id)$, in which the controller removes dictionary $D_\id$. Two sequences of operations are logically equivalent if, for every user $\id$, the dictionary $D_\id$ has the same logical content after applying the two sequences. 

We define the Public Bulletin Board functionality as follows: users can post a message by sending $Post(\id, msg)$ to the controller; they can receive all messages currently on the board by sending $Read()$; and they can delete all of their messages by sending $Delete(\id)$ to the controller. Internally, the Public Bulletin Board stores an ordered list of $(\id, msg)$ pairs (ordered by insertion time). Two sequences of operations are logically equivalent if they yield the same ordered list of $(\id, msg)$ pairs. 

\begin{corollary}
\label{cor:priv_cloud_pub_bboard}
The Private Cloud Storage and Public Bulletin Board touchstone controllers  (Section~\ref{sec:touchstone}) implemented with $(\eps, \delta)$-adaptive history independence each satisfy $(\eps, \delta)$-deletion-as-control.
\end{corollary}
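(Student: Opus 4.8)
The plan is to derive the corollary as a direct instantiation of Theorem~\ref{thm:AHI_to_del}. That theorem says that for any ADT supporting logical deletion and any implementation $\impl$ that is $(\eps,\delta)$-AHI, the controller $\cont_\impl$ is $(\eps,\delta)$-deletion-as-control compliant. The implementations in the statement are assumed $(\eps,\delta)$-AHI by hypothesis, and each touchstone controller is by construction a controller relative to an implementation in the sense of Definition~\ref{def:cont_impl} (operations are tagged with the channel ID, which plays the role of the user identifier $\id$). Hence the only nontrivial work is to exhibit, for each of the two functionalities, an ADT with the stated logical-equivalence relation and to verify that it supports logical deletion (Definition~\ref{def:logical_deletion}); the conclusion then follows immediately with the same parameters.

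First I would treat Private Cloud Storage. Here the logical state is the tuple of per-user dictionaries $(D_\id)_\id$, and the deletion operation $Delete(\id^*)$ empties $D_{\id^*}$. The key structural fact is that every operation tagged by an identifier $\id$---namely $Upload(\id,\cdot,\cdot)$, $Download(\id,\cdot)$, and $Delete(\id)$---reads or writes only the single dictionary $D_\id$ and never touches $D_{\id'}$ for $\id'\neq\id$. Consequently, deleting all $\id^*$-tagged operations from a sequence $\seq$ leaves every other user's dictionary exactly as it was, while $D_{\id^*}$ remains empty (no operation ever touches it). Appending $\delete(\id^*)$ to $\seq$ has the same effect: it empties $D_{\id^*}$ and leaves the remaining dictionaries untouched. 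Thus $(\seq\|\delete(\id^*)) \logequiv \seq_{-\id^*}$, establishing logical deletion.

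Next I would handle the Public Bulletin Board, whose logical state is the insertion-ordered list of $(\id,msg)$ pairs. Here $\id^*$'s operations are $Post(\id^*,\cdot)$ (which appends an $\id^*$-tagged pair), $Read()$ (which leaves the list unchanged), and $Delete(\id^*)$ (which removes every $\id^*$-tagged pair). The argument is that, in either sequence, the sublist of pairs belonging to users other than $\id^*$ is identical and appears in the same relative order: removing $\id^*$'s operations never inserts, deletes, or reorders any other user's pair, and the trailing $\delete(\id^*)$ likewise only strikes $\id^*$'s pairs while preserving the order of the rest. Since both sequences therefore yield the same ordered list, $(\seq\|\delete(\id^*)) \logequiv \seq_{-\id^*}$ and logical deletion holds.

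The main obstacle is precisely this verification of logical deletion, and within it the bulletin board is the delicate case: one must confirm that interleaved $\id^*$-operations (in particular any intermediate $Delete(\id^*)$ calls) leave no residue in the other users' portion of the state, and that deleting $\id^*$'s posts \emph{at the end} produces exactly the same ordered list as never inserting them---i.e.\ that order-preservation under deletion coincides with order-preservation under omission. Once logical deletion is in hand for both ADTs, no further analysis is needed: the AHI hypothesis supplies the remaining premise of Theorem~\ref{thm:AHI_to_del}, and the deletion-as-control guarantee transfers with the same parameters $(\eps,\delta)$.
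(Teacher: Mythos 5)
Your proposal is correct and matches the paper's approach: the paper likewise defines the two ADTs with their logical-equivalence relations and derives the corollary directly from Theorem~\ref{thm:AHI_to_del}, leaving the verification of logical deletion implicit. Your explicit check that $(\seq\|\delete(\id^*)) \logequiv \seq_{-\id^*}$ for both functionalities fills in exactly the detail the paper omits, and it is sound.
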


\subsection{From Prior Definitions of Machine Unlearning to  History Independence}
\label{sec:MUL-2-AHI}

We claim that AHI captures the essence of existing definitions of ``machine unlearning" (that is, protocols that update a machine learning model to reflect deletions from the training data). Each definition in the literature corresponds to a special case of history independence, though each  weakens the definition in one or more ways (even when their constructions satisfy the stronger, general notion).  For illustration, we discuss the approach of \citet{gupta2021adaptive} in detail.

The basic correspondence comes via considering an abstract data type, which we dub the \textit{Updatable ML}, that extends a dictionary: it maintains a multiset $\data$ of labeled examples from some universe $\univ$. In addition to allowing $\ins{}$ and $\del{}$ operations, it accepts a possibly randomized operation $\predict$ which outputs a predictor (or other trained model) $\pred$ trained on $\data$. The accuracy requirement for $\pred$ is generally not fully specified, not least because many current machine learning methods don't come with worst-case guarantees. The literature on machine unlearning generally requires that the distribution of the final predictor $\pred$ (e.g., the model parameters) is approximately the same as it would be for a minimal sequence of operations that leads to the same training data set. In particular, \emph{deleting an individual $\subj$ should mean that $\pred$ looks roughly the same as if $\subj$ had never appeared in $\data$}. In principle one could satisfy the requirement by simply retraining $\pred$ from scratch every time the data set changes,  though this may  be practically infeasible. The literature therefore focuses on methods that allow for faster updates.

\para{Specializing history independence to updatable ML}
First, we spell out how  AHI (Definition~\ref{def:AHI}) specializes to the Updatable ML ADT. Given a sequence of updates and prediction queries $\seq = (u_1,...,u_t)$, let $\canon{\seq}$ denote a canonical equivalent sequence of updates. For example, $\canon{\seq}$ may (a) discard any insertion/deletion pairs that operate on the same element, with the insertion appearing first,  and (b) list the remaining operations in lexicographic order. In an interaction with an implementation $\impl$ that is initialized with an empty data set, an adversary makes a sequence of queries $\seq$ (of which only the updates actually affect the state) and, eventually, terminates at some time step $t$. 
Let  $\statePhys$ the resulting state of the controller based on randomness $\Rand$, and let $\data_t$ be the final logical data set. After the interaction ends, the adversary outputs some logically equivalent sequence $\seq^*$ (for example, it could choose the  canonical sequence $\seq^* = \canon{\seq}$). Finally, run $\impl$ from scratch with randomness $\Rand'$ and queries $\seq^*$ to get state $\statePhys^*$. The logical data set $\data_t$ is the same for $\seq$ and $\seq^*$ by construction. Definition~\ref{def:AHI} requires that 
\begin{equation}\label{eq:AHI2ML}
        (\seq, \statePhys) \approxequiv{\eps,\delta} (\seq, \statePhys^*) \, .
\end{equation}

\para{Definitions from the literature.} 
The exact definition of deletion varies from paper to paper (and some papers do not define terms precisely). All formulate the problem in terms of what we call the Updatable ML ADT. We claim their definitions are variations on adaptive history independence (Eq.~\ref{eq:adaptMUL}), each with one or more of the following weakenings: %
\begin{description}
    \item[Restricted queries] Some papers consider only a subset of allowed operations---e.g., in \citet{ginart2019making,ullah2021machine,sekhari2021remember} data is inserted as a batch, and then only deletions occur.  
    \item [One output versus future behavior] Some consider only the output of the system at one time (e.g., \citet{neel2020descenttodelete,sekhari2021remember,gupta2021adaptive}), while others also consider the internal state (e.g., \citet{ginart2019making}). The narrower approach \textit{does not constrain the future behavior of the system}.  
    Among those definitions that consider the full state, none discuss issues of internal representations; in this section, we also elide this distinction, assuming that datasets can be represented internally using strongly history-independent data structures. 
    \item [Nonadaptive queries] Except for \citet{gupta2021adaptive}, the literature considers only adversaries that specify the set of queries to be issued in advance. 
    For constructions that are $(0,0)$-HI (that is, in which the real and ideal distributions are identical), this comes at no loss of generality. However, the nonadaptive and adaptive versions of $(\eps,\delta)$-HI for $\delta>0$ are very different \citep{gupta2021adaptive}. Even in \citet{gupta2021adaptive}, the length $t$ of the query sequence is chosen nonadaptively.
    \item [Symmetric vs asymmetric indistinguishability] \citet{ginart2019making} and \cite{gupta2021adaptive} consider a one-sided weakening of $(\eps,\delta)$-indis\-tinguishability (discussed in Remark~\ref{rem:gupta-asymmetry}).
\end{description}

The definitions of  \citet{ginart2019making,ullah2021machine,bourtoule2021machine} consider \textit{exact} variants of unlearning (with $\eps=\delta=0$), while others \citet{guo2020certified,sekhari2021remember,neel2020descenttodelete,gupta2021adaptive, golatkar2020eternal, golatkar2020forgetting} consider approximate variants.

Despite generally formulating weaker definitions, the algorithms in the literature often satisfy  history independence. In particular, because the constructions in \citet{ginart2019making,cao2015towards} are fully deterministic, it is easy to see that they satisfy adaptive, $(0,0)$-\textit{strong history independence} (by Theorem 1 in \cite{hartline2005characterizing}). %
Some constructions satisfy additional properties, such as storing much less information than the full data set $\data_t$~(e.g., \citet{sekhari2021remember,cao2015towards}).

\para{Adaptive Machine Unlearning (\citet{gupta2021adaptive})}

We
discuss the relationship to one previous definition---that of \citet{gupta2021adaptive}---in detail, since the definition is subtle and the relationship is technically nontrivial.
In our notation, the data structure  maintains several quantities: a data set $\data$, a collection of models $\modelcoll$, some supplementary data $\supp$ used for updating, and the most recently output predictor $\pred$. We say that an implementation of the Updatable ML ADT meets the syntax of 
\citet{gupta2021adaptive} if its  state can be written $\statePhys = (\data_t,\modelcoll_t, \supp_t,\pred_t)$.
Given an initial data set $\data_0$, the execution proceeds for $t$ steps. At each step $i$, the adversary (called the ``update requester'') sends an update $u_i$, which is either an insertion or deletion. The data structure then updates its state to obtain $(\data_i,\modelcoll_i,\supp_i,\pred_i)$ and returns $\pred_i$ to the adversary. 

The condition required by \citeauthor{gupta2021adaptive} is described by three parameters $\alpha,\beta,\gamma>0$. For every time $t$ and initial dataset $\data_0$,  with probability $1-\gamma$ over the updates $\vec u = u_1,...,u_t$ that arise in an interaction between the data structure (initialized with $\data_0$ and randomness $R$) and the adversary, the models $\modelcoll_t$ should be distributed similarly to the models $\modelcoll_0^*$ one would get in a counterfactual execution in which the data structure was initialized with the data set $\data_0^* = \data_t$ (that is, $\vec u $ applied to $ \data_0$)  resulting from the real execution in canonical form (say, in sorted order), and no updates were made. They require:
\begin{equation}\label{eq:adaptMUL}
\Pr_{\vec u =(u_1,...,u_t)}\paren{ \modelcoll_t\big|_{\data_0, u_1,...,u_t} \quad \approx_{\alpha,\beta} \quad  \modelcoll_0^*\big|_{\data_0^*}} \geq 1-\gamma \,
\end{equation}
$\text{where}\ \data_0^*  =( \vec u \text{ applied to }  \data_0).$
\begin{remark}\label{rem:gupta-asymmetry}
\citeauthor{gupta2021adaptive} actually require only a one-sided version of the condition on distributions enforced by $(\alpha,\beta)$-closeness. 
Specifically, no event should be much more likely in the models resulting from a real interaction than in ones generated from scratch with the resulting data set. This condition is inherited from a definition proposed by \citet{ginart2019making}. 

As far as we know, all the actual algorithms designed for machine unlearning satisfy the stronger, symmetric version of the definition in Equation~\eqref{eq:adaptMUL}.
\end{remark}

\begin{proposition}[AHI implies Adaptive Machine Unlearning]\label{prop:AHI_MUL}
    Let $\impl$ be a data structure that implements the Updatable ML ADT and satisfies $(\eps,\delta)$-approximate history independence with fresh randomness (Definition~\ref{def:AHI}). If $\impl$ fits the syntax of machine unlearning, then for all $\beta >0$, it also satisfies \emph{$(\alpha,\beta,\gamma)$-unlearning} (Equation~\eqref{eq:adaptMUL}) with $\alpha =  3\eps$ and $\gamma = \frac{2\delta}{\beta} + \frac{2\delta}{1 - \exp(-\eps)}$.
\end{proposition}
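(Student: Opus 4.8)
The plan is to reduce $(\alpha,\beta,\gamma)$-unlearning to the adaptive history independence of $\impl$ through a simulation, and then to convert the \emph{joint} $(\eps,\delta)$-indistinguishability that AHI provides into a \emph{conditional} guarantee holding for all but a $\gamma$-fraction of realized update sequences.

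\paragraph{Reduction to AHI.} Fix an update requester, a horizon $t$, and an initial data set $\data_0$. I would build an AHI adversary $\adv$ (in the fresh-randomness variant of Definition~\ref{def:AHI}) that first issues the $\ins{}$ operations creating $\data_0$, then runs the update requester, forwarding each update to $\impl$ and relaying the returned predictor $\pred_i$ back, and after the $t$-th update outputs $\seq^* = \canon{\seq}$, a canonical insertion sequence for the final logical data set $\data_t$. Since $\canon{\seq}$ keeps exactly the net insertions, $\seq^*\logequiv\seq$, so $\adv$ is valid and the relabeling step of the AHI game never fires. Applying $(\eps,\delta)$-AHI with fresh randomness gives $(\view_{\impl,\adv},\statePhys)\approxequiv{\eps,\delta}(\view_{\impl,\adv},\statePhys^*)$, where $\statePhys=\impl(\seq;\Rand).\s$ carries the real models $\modelcoll_t$ and $\statePhys^*=\impl(\seq^*;\Rand^*).\s$, run from scratch on $\canon{\data_t}$ with independent $\Rand^*$, carries exactly the counterfactual models $\modelcoll_0^*$ of Equation~\eqref{eq:adaptMUL}.

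\paragraph{Isolating the update sequence.} The realized update sequence $\vec u$ and the resulting data set $\data_0^*=\data_t$ are deterministic functions of $\view_{\impl,\adv}$ (read off from $\seq$), and $\modelcoll_t,\modelcoll_0^*$ are coordinate projections of $\statePhys,\statePhys^*$. Because $\view_{\impl,\adv}$ has the \emph{same} marginal in both tuples (it is generated by the same real interaction on $\Rand$) and $(\eps,\delta)$-indistinguishability is closed under post-processing, I obtain $(\vec u,\modelcoll_t)\approxequiv{\eps,\delta}(\vec u,\modelcoll_0^*)$ with a shared marginal on $\vec u$. Under this shared marginal, $\modelcoll_0^*\mid\vec u$ is precisely the from-scratch distribution $\modelcoll_0^*\mid\data_0^*$ on the right of \eqref{eq:adaptMUL}, while $\modelcoll_t\mid\vec u$ is the left-hand side.

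\paragraph{From joint to conditional indistinguishability.} The crux is the general claim: if $(\vec u,M)\approxequiv{\eps,\delta}(\vec u,M^*)$ with a common marginal on $\vec u$, then $\Pr_{\vec u}\big[\,M\mid\vec u \approxequiv{3\eps,\beta} M^*\mid\vec u\,\big]\ge 1-\gamma$ for $\gamma=\frac{2\delta}{\beta}+\frac{2\delta}{1-\exp(-\eps)}$. I would prove this with two slack sources, in each of the two directions. First, a likelihood-ratio tail: instantiating the joint bound on the event that the conditional density ratio $p(M\mid\vec u)/q(M\mid\vec u)$ exceeds $e^{2\eps}$ shows this event has probability below $\frac{\delta}{1-\exp(-\eps)}$, which yields the $\frac{2\delta}{1-\exp(-\eps)}$ term and a surviving multiplicative factor $e^{2\eps}$. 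Second, a Markov argument on the residual mass above that threshold: its expectation over $\vec u$ equals the corresponding joint overflow, which $(\eps,\delta)$-closeness bounds by $\delta$, so Markov at level $\beta$ makes the additive error exceed $\beta$ for at most a $\frac{\delta}{\beta}$-fraction of $\vec u$ per direction, at the cost of one further factor $e^{\eps}$. The factors multiply to $e^{3\eps}$ and the two failure modes union-bound across both directions to give $\gamma$. (A sharper bookkeeping via monotonicity of the hockey-stick divergence in fact gives $\alpha=3\eps$ with $\gamma\le \frac{2\delta}{\beta}$ and no second term, so the stated bound certainly holds.) Composing with the previous step—and noting the symmetric conclusion implies the one-sided requirement of Remark~\ref{rem:gupta-asymmetry}—finishes the proof.

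\paragraph{Main obstacle.} The reduction and post-processing are routine; the real content is the joint-to-conditional conversion, i.e.\ turning an \emph{average}-over-$\vec u$ guarantee into one holding \emph{pointwise} for most $\vec u$. The delicate points are keeping the two-sided bookkeeping honest (both density-ratio tails and both additive overflows), justifying the likelihood-ratio manipulations under the paper's $\sigma$-algebra on infinite tapes and a possibly continuous model space, and checking that conditioning on the coarse statistic $\vec u$ rather than the full view $\view_{\impl,\adv}$ is legitimate—which is exactly what post-processing to the shared $\vec u$-marginal secures. The fresh-randomness variant of AHI is essential here, since the unlearning counterfactual retrains with independent coins.
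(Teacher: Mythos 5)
Your proposal is correct and follows essentially the same route as the paper: the same black-box reduction to an AHI adversary that simulates the update requester and outputs a canonical equivalent sequence, the same post-processing step down to $(\vec u,\modelcoll_t)\approxequiv{\eps,\delta}(\vec u,\modelcoll_0^*)$, and the same joint-to-conditional conversion with parameters $\alpha=3\eps$ and $\gamma=\frac{2\delta}{\beta}+\frac{2\delta}{1-\exp(-\eps)}$. The only difference is that you sketch a proof of the conversion step yourself, whereas the paper invokes it as a black box (Lemma~\ref{lem:conditioning}, the Conditioning Lemma of \citet{KasiviswanathanS08}); your sketch matches the standard argument behind that lemma.
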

\begin{proof} 
    Consider an Updatable ML data structure $\impl$ that satisfies approximate HI and fits the syntax of \citet{gupta2021adaptive}, that is, its state $\statePhys$ can be written $(\data_t,\modelcoll_t,\supp_t,\pred_t)$. 
    
    Let $\upreq$ be an update requester for the machine unlearning definition. $\upreq$ first chooses a time step $t$ at which it will terminate and an initial data set $\data_0$, and then executes an interaction with $\impl$ during which it makes update queries $\vec u = (u_1,...,u_t)$ and receives predictors $\pred_1,...\pred_t$. Let $\data_t$ be the final logical data set. 
    We define a corresponding adversary $\adv$ for history independence as follows:
    \begin{algorithmic}[1]
        \Procedure{\adv}{given black-box access to $\upreq$}
        \State Run $\upreq$ to obtain $t$ and $\data_0$.
        \State Compute a sequence of updates $\vec u$ such that $\vec u_0 \circ \emptyset = \data_0$ 
            \label{step:initialu}
        \State Output queries $\vec u_0$ followed by the operation $\predict$
        \State Receive $\pred_0$
        \For{$i=1$ to $t$}
            \State Run $\upreq$ on input $\pred_{i-1}$ to get next update $u_i$
            \State Output query $u_i$ followed by $\predict$ 
            \State Receive $\pred_i$
        \EndFor
        \State Output $\seq^* = [\seq] = [\vec u]$ where $ \vec u = \vec u_0\| u_1,...,u_t$, and  $\seq$ is the full realized sequence (namely,  $\vec u$ with interleaved $\predict$ queries).
        \EndProcedure
    \end{algorithmic}
    
    To apply HI, we consider an interaction between $\impl$ and $\adv$, followed by a new execution of $\impl$ with queries $\seq^*$  (which equals $[\vec u]$ for this adversary) and fresh randomness. Let $\data_t$ be the final data set (in both executions). Equation~\eqref{eq:AHI2ML}  implies that
    \[(\seq, \underbrace{\data_t, \modelcoll_t,\supp_t, \pred_t}_{\statePhys}) \approxequiv{\eps,\delta} (\seq, \underbrace{\data_t, \modelcoll^*_0,\supp^*_0, \pred^*_0}_{\statePhys^*}) \, .
    \]
    In particular, since the realized updates $\vec u$ form a subsequence of $\seq$, we have 
    \begin{equation}\label{eq:HI2MUL}
        (\vec u,\modelcoll_t) \approxequiv{\eps,\delta} (\vec u, \modelcoll_0^*) \, .         
    \end{equation}
    To draw the connection with Adaptive Machine Learning, we must define what it means to initialize $\impl$ with a nonempty data set (since such initialization occurs in the definition of Adaptive Machine Unlearning). On input a nonempty data set $\data_0$, we can compute the canonical sequence of updates $\vec u_0$ that would lead to $\data_0$ (as in Step \ref{step:initialu} in $\adv$ above) and run $\impl$ with $\vec u$ as its first queries, before real interactions occur. With this definition, the triple $(\vec u, \modelcoll_t, \modelcoll_0^*)$ is distributed exactly as in  \eqref{eq:adaptMUL}. By Lemma~\ref{lem:conditioning}, with probability at least $1-\gamma$ over $\vec u$, we have $\modelcoll_t |_{\vec u} \approxequiv{\eps',\delta'} \modelcoll_0^*$, where $\eps' = 3\eps$ and $\gamma = \frac{2\delta}{\delta'} + \frac{2\delta}{1 - \exp(-\eps)}$, as desired. 
\end{proof}

The converse implication does not hold, but we claim that it would after some modifications of the definition of \citep{gupta2021adaptive}:

    \begin{itemize}
    \item The indistinguishability condition of \citep{gupta2021adaptive} (Equation~\eqref{eq:adaptMUL}) applies only to the current collection of models $\modelcoll_t$; as such, \textit{it does not constrain the system's entire future behavior but only the distribution of the output at a particular time} (see Section~\ref{sec:definition:discussion}). History independence requires indistinguishability of the entire state $(\data_t,\modelcoll_t,\supp_t)$. 
    \item In \citep{gupta2021adaptive}, the stopping time $t$ is determined before the execution begins. In contrast, the History Independence adversary may choose the stopping point adaptively. 
    \item As discussed in Remark~\ref{rem:gupta-asymmetry}, the indistinguishability condition of \citep{gupta2021adaptive} is one-sided. Our definition of history independence requires the stronger, symmetric guarantee. 
    \end{itemize}

We do not include a formal statement of the equivalence, since by the time one spells out the strengthened model, the equivalence is nearly a tautology. The only  difference not listed above is whether one requires indistinguishability of the states $\statePhys$ and $\statePhys^*$ \textit{conditioned} on the updates $\vec u$---as in Eq.~\eqref{eq:adaptMUL}---or rather indistinguishability of the \textit{pairs} $(\vec u,\statePhys)$ and $(\vec u , \statePhys^*)$---as in Equation~\ref{eq:AHI2ML}. These turn out to be equivalent up to changes of parameters, as shown by Lemma~\ref{lem:conditioning}.

We conjecture that the protocols of \citet{gupta2021adaptive} satisfy adaptive history independence. A formal proof of this statement would require an appropriately symmetric strengthening of the max-information bound of \citep{RogersRST16} (Thm 3.1 in ver.\ 2 on \textit{arXiv}).

\section{Deletion-as-confidentiality and Deletion-as-control}
\label{sec:ggv}

We study the relationship between the notion of deletion as confidentiality from \citet{GGV20} (hereafter ``the GGV definition'') and our notion of deletion-as-control (Definition~\ref{def:del-as-cont}).
Similar to Definition \ref{def:del-as-cont}, deletion-as-confidentiality  also considers two executions, real and ideal. At the end of the two executions, they compare the \emph{view} of the environment  $\view_\env$ and the state of the controller $\s_\cont$. 
Simplifying away (important but technical) details, the real GGV execution is roughly the same as in deletion-as-control. The ideal GGV execution simply drops all messages between $\subj$ and $\cont$. Informally, deletion-as-confidentiality requires that $\env$'s view and $\cont$'s final state are indistinguishable in the real and ideal worlds.

In Section~\ref{sec:ggv:definition}, we define \emph{deletion-as-confidentiality}---the definition closest in spirit to the GGV definition but in our execution model. 
(We cannot directly compare our definition to GGV as they are defined in different models of interaction, stemming from a need in deletion-as-control to `sync' the real and ideal executions.) 
In Section~\ref{sec:ggv:cont_vs_conf}, we show that for many data subjects \subj, deletion-as-confidentiality for \subj implies deletion-as-control for \subj (Theorem~\ref{thm:conf-cont}). Finally, we show that the implication cannot hold for all data subjects due to the differences in the execution models  (Theorems~\ref{thm:conf-cont-sep-one-way} and~\ref{thm:conf-cont-sep-termination}).

\subsection{Definition}
\label{sec:ggv:definition}

Below, we define deletion-as-confidentiality by describing how it differs from deletion-as-control. 
After the definition, we briefly explain how our adaptation of deletion-as-confidentiality differs from that in \cite{GGV20}.

Similar to Definition \ref{def:del-as-cont}, deletion-as-confidentiality also considers two executions---real and ideal. The real execution is almost the same as before, involving the three parties $\cont$, $\env$, and $\subj$. We will be interested in the \emph{view} of the environment $\env$, denoted $V_\env^{real}$ and the state of the controller $\cont$, denoted $\s_\cont^{real}$, at the end of the execution. $V_\env^{real}$ consists of $\env$'s randomness $\Rand_\env$ and the transcript of its messages $\tau_\env^{real} = (\queries_\env^{real}, \answers_\env^{real})$. (Both the view and state are  $\bot$ if the execution does not terminate.)
The GGV definition requires that controller's state and the environment's view are indistinguishable in the real world and in the ideal world wherein the data subject never communicates with anybody.
This definition inherently requires that $\cont$ can never reveal any information about one user's data or participation to another data or participation to another. Any information about user $\subj$ that $\cont$ reveals to the $\env$ becomes part of $\env$'s view in the real world. If $\subj$ later requests deletion, this information would not be part of $\env$'s view in the ideal world.

The first major difference between GGV and deletion-as-control is the ideal execution. The ideal GGV execution also involves the same three parties $\cont$, $\env$, and $\subj$---no dummy party $\dummy$ as in deletion-as-control. The difference between the real and ideal GGV executions is that  ideal GGV execution drops all messages between $\subj$ and $\cont$. This execution results in some view $V_\env^{ideal}$ and some state $\s_\cont^{ideal}$. 

The second major difference between the GGV execution and deletion-as-control concerns how the execution terminates.
Recall that in the deletion-as-control game, the real execution ends as soon as $\cont$ processes $\subj$'s first $\delete$ message (i.e., the next time $\cont$ halts). And the ideal execution ends when all of $\env$'s queries from the real execution have been replayed by the dummy $\dummy$.
In contrast, the real and ideal GGV executions ends when $\env$ sends a special $\finish$ message to $\cont$. At that point, one of two things happens. If $\subj$'s last message to $\cont$ was $\delete$, then the execution immediately terminates. Otherwise, $\subj$ sends $\delete$ to $\cont$, and the execution ends as soon as $\cont$ processes that message.
This difference means that the execution's end time can depend on \subj's view in deletion-as-control, but not in GGV. Theorem~\ref{thm:conf-cont-sep-termination} leverages this gap to construct a controller that satisfies GGV but not deletion-as-control.
This difference reflects our two fundamentally different approaches to defining the ideal execution.
\subj has no affect on the GGV ideal execution, and hence \env must end the execution.
Deletion-as-control introduces a mechanism to ``sync'' the real and ideal executions, thereby allowing $\subj$ to end both executions.

The last major difference between GGV and deletion-as-control is the indistinguishability requirement between the real and ideal executions. 
The GGV definition requires that the joint distribution of the variables $(V_\env^{real},\s_\cont^{real})$ is close to that of $(V_\env^{ideal},\s_\cont^{ideal})$.  That is, the view of the environment $\env$ and the state of the controller $\cont$ should be nearly the same irrespective of whether the subject $\subj$ sent its data to the controller and then deleted it, or did not send its data at all. In contrast, deletion-as-control imposes no requirement on $V_\env$.
Moreover, GGV only requires indistinguishability for $\subj$'s that do not send any messages to $\env$. Without this restriction the definition would be far too limiting. For most useful controllers, $\subj$ would be able to tell whether its messages are being delivered to $\cont$ or not (for example, by receiving an acknowledgement of data receipt), and if it conveys this information to $\env$, $\env$'s view will become different in real and ideal executions.

\begin{definition}[$(\eps, \delta)$ Deletion-as-Confidentiality for $\subj$]
    \label{def:del-as-conf-subj}
    For a data subject $\subj$, a controller $\cont$ is $(\eps, \delta)$-\emph{deletion-as-confidentiality compliant for $\subj$} if for all $\env$, in the executions involving $\cont$, $\env$, and $\subj$, 
    \begin{equation*}
        (V_\env^{real},\s_\cont^{real}) \approxequiv{\eps, \delta} (V_\env^{ideal},\s_\cont^{ideal}).
    \end{equation*}
\end{definition}

\begin{definition}[Deletion-as-Confidentiality, adapted from \cite{GGV20}]
    \label{def:del-as-conf}
    Let $\subjSilent$ be the set of data subjects $\subj$ that never send any messages to $\env$.
    $\cont$ is \emph{$(\eps, \delta)$-deletion-as-confidentiality compliant} if it is $(\eps, \delta)$-deletion-as-confidentiality compliant for all $\subj \in \subjSilent$. 
\end{definition}

The version of deletion-as-confidentiality  in Definition~\ref{def:del-as-cont} differs from the original definition of \citet{GGV20} in a few important ways.
First, \citep{GGV20} allows users to request deletion of the information shared in specific interactions between $\subj$ and $\cont$.
To do this, they define protocols that produce protocol-specific deletion tokens. Simplifying, we take deletion of a user's data to be all or nothing. Second, \citep{GGV20} allows users to delete many times, whereas we focus on a single deletion.
Third, \cont's randomness tape is not read-once in \citep{GGV20}. As explained in  Footnote~\ref{footnote:randomness}, this allows a controller to evade deletion by encrypting its state with its randomness. Finally, the execution model of \citep{GGV20} does not have authenticated channels. As they show, authentication is necessary for non-trivial functionalities. In light of this, we chose to build authentication into our execution model  (in the form of channel IDs).

\subsection{Control vs.\ Confidentiality} \label{sec:ggv:cont_vs_conf}
In spirit, deletion-as-confidentiality imposes a stronger indistinguishability requirement than deletion-as-control. The former requires that no information about the deleted data is ever revealed, whereas the latter only requires that the effect of the deleted data is not present \emph{after} the deletion happens. One might thus expect that any $\cont$ that satisfies Definition~\ref{def:del-as-conf} would also satisfy Definition~\ref{def:del-as-cont}. 
But deletion-as-control captures more general environments and data subjects than deletion-as-confidentiality. First, deletion-as-control allows $\subj$ to communicate freely with $\env$, whereas deletion-as-confidentiality does not (i.e., $\subj\in\subjSilent$). Second, deletion-as-control imposes a requirement as soon as $\subj$ deletes, whereas deletion-as-confidentiality only requires indistinguishability after \env terminates the execution (which may be much later).

Let $\subjSilent$ be as in Definition~\ref{def:del-as-conf} and $\subjDummy$ be the set of $\subj$ that only \delete when instructed by $\env$. Let $\subjLift = \subjSilent \cap \subjDummy$.

\begin{theorem} \label{thm:conf-cont}
    For any $\cont$ and any $\subj \in \subjLift$, if $\cont$ is $(\eps, \delta)$-deletion-as-confidentiality compliant for \subj, then it is also $(\eps, \delta)$-deletion-as-control compliant for \subj. 
\end{theorem}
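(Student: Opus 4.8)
The plan is to mirror the structure of the proof of Theorem~\ref{thm:AHI_to_del}, using the Coupling Lemma (Lemma~\ref{lem:approx_coupling}) to convert an indistinguishability statement about (queries, state) pairs into a valid deletion-as-control simulator. First I would invoke Lemma~\ref{lem:derandomize} to assume $\env$ and $\subj$ are deterministic, so that every quantity below is a function of $\cont$'s randomness alone. The role of the hypothesis $\subj \in \subjLift = \subjSilent \cap \subjDummy$ is to make the two definitions' real executions coincide: because $\subj$ deletes only when instructed by $\env$ ($\subjDummy$), the moment $\cont$ processes $\subj$'s \delete in the deletion-as-control execution is exactly the moment the confidentiality execution terminates (when $\env$ sends \finish), so the two real worlds produce the same state $\s_\cont^{real}$ and the same environment queries $\queries_\env^{real}$; and because $\subj$ never talks to $\env$ ($\subjSilent$), in the confidentiality-ideal world $\subj$ is inert, so $\cont$ interacts only with $\env$ and its ideal state is literally $\cont(\queries_\env^{ideal};\tilde\Rand)$ for fresh randomness $\tilde\Rand$.

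Next I would extract from $(\eps,\delta)$-deletion-as-confidentiality exactly the indistinguishability I need. Since $\queries_\env$ is a deterministic function of the environment view $V_\env$, projecting Definition~\ref{def:del-as-conf-subj} yields $(\queries_\env^{real},\s_\cont^{real}) \approxequiv{\eps,\delta} (\queries_\env^{ideal},\s_\cont^{ideal})$, using that $(\eps,\delta)$-indistinguishability is preserved under post-processing. Writing $f(\Rand) = (\queries_\env^{real},\s_\cont^{real})$ and $g(\tilde\Rand) = (\queries_\env^{ideal},\cont(\queries_\env^{ideal};\tilde\Rand))$ as functions of the real and ideal controller randomness (with $\Rand,\tilde\Rand \sim \distrand$), this reads $f(\Rand) \approxequiv{\eps,\delta} g(\tilde\Rand)$, precisely the hypothesis of the Coupling Lemma. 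I would then define the simulator to output $\Rand' = \tilde\Rand$ sampled from $\distrand$ conditioned on $g(\tilde\Rand) = f(\Rand)$, i.e.\ conditioned on the confidentiality-ideal execution run with randomness $\tilde\Rand$ reproducing both the observed queries $\queries_\env$ and the observed state $\s_\cont$. The Coupling Lemma guarantees (i) $\Rand' \approxequiv{\eps,\delta} \distrand$, which is condition~\ref{enum:del_cond_1}, and (ii) $g(\tilde\Rand)=f(\Rand)$ with probability at least $1-\delta$. On this latter event $\queries_\env^{ideal}(\Rand')=\queries_\env^{real}$, so the deletion-as-control dummy $\dummy$ (which replays $\queries_\env^{real}$) drives $\cont(\Rand')$ to the state $\cont(\queries_\env^{real};\Rand') = \cont(\queries_\env^{ideal}(\Rand');\Rand') = \s_\cont^{ideal}(\Rand') = \s_\cont^{real}$, giving $\s'_\cont = \s_\cont$ and hence condition~\ref{enum:del_cond_2}.

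I expect the main obstacle to be the first paragraph's bookkeeping rather than the coupling step: carefully checking that $\subjLift$ forces the real executions, including their termination behavior and the non-terminating case where $\s_\cont=\bot$, to genuinely agree across the two (otherwise different) execution models, so that a single clean indistinguishability statement can be harvested from confidentiality and handed to the Coupling Lemma. This is exactly where the hypothesis cannot be weakened; Theorems~\ref{thm:conf-cont-sep-one-way} and~\ref{thm:conf-cont-sep-termination} exhibit controllers for which the implication fails once $\subj$ leaves $\subjLift$, and those failures are driven by precisely the termination and silence mismatches that $\subjDummy$ and $\subjSilent$ rule out. The coupling itself is routine given the template of Theorem~\ref{thm:AHI_to_del}; as there, the resulting simulator is inefficient and is analyzed by internally running the ideal execution, so no further machinery is needed.
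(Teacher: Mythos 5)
Your proposal follows essentially the same route as the paper: derandomize via Lemma~\ref{lem:derandomize}, harvest from Definition~\ref{def:del-as-conf-subj} an indistinguishability statement between deterministic functions of the real and ideal controller randomness, and feed it to the Coupling Lemma~\ref{lem:approx_coupling} to manufacture the simulator. The only cosmetic difference is that you project the environment's view down to $\queries_\env$ before coupling, whereas the paper couples on the full pair $(V_\env,\s_\cont)$; both are fine, and your discussion of why $\subjSilent$ and $\subjDummy$ align the two execution models matches the paper's (equally terse) treatment of that bookkeeping.

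There is, however, one genuine issue with your simulator as defined. You sample $\Rand'$ conditioned on the \emph{confidentiality-ideal execution} reproducing $(\queries_\env,\s_\cont)$; since $\env$ is live and adaptive in that execution, evaluating the conditioning event requires running $\env$'s code, so your $\Sim$ depends on $\env$. Definition~\ref{def:del-as-cont} quantifies $\Sim$ before $\env$ (``there exists $\Sim$ such that for all $\env$\dots''), and the proof of Lemma~\ref{lem:derandomize} explicitly relies on the simulator being independent of the environment and subject, so your appeal to derandomization is also not licensed as written. The paper avoids this by using the default simulator of Definition~\ref{def:default-sim}, which conditions only on the \emph{dummy} execution satisfying $\cont(\queries_\env;\Rand')=\s_\cont$ and hence depends on nothing but $(\queries_\env,\Rand_\cont,\s_\cont)$; one then argues separately that whenever such an $\Rand'$ exists the dummy interaction is, from $\cont$'s perspective, identical to the confidentiality-ideal one. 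Your argument is repaired by switching to that simulator (or by observing that on the event $\queries_\env^{ideal}(\Rand')=\queries_\env$ your conditioning set coincides with an $\env$-independent one), but as stated the quantifier order is violated.
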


Theorems~\ref{thm:conf-cont-sep-one-way} and~\ref{thm:conf-cont-sep-termination} show that the restriction in Theorem~\ref{thm:conf-cont} that $\subj \in \subjLift = \subjDummy \cap \subjSilent$ is necessary.

\begin{proof}[Proof of Theorem~\ref{thm:conf-cont}]

Fix any controller \cont that is $(\eps, \delta)$-deletion-as-confidentiality compliant for subset of $\subj$'s that call \delete only when instructed to by \env. Then, in particular, \cont is deletion-as-confidentiality compliant for \emph{deterministic} environments \env and data subjects \subj in this class. By Lemma~\ref{lem:derandomize}, it is sufficient to consider deterministic $(\env, \subj)$ pairs to prove deletion-as-control compliance. 

Fix a deterministic environment \env and data subject \subj that calls \delete only when instructed to by \env. Then, by Definition~\ref{def:del-as-conf-subj}, we have
\begin{equation}\label{eq:GGV}
        (V_\env^{real},\s_\cont^{real}) \approxequiv{\eps, \delta} (V_\env^{ideal},\s_\cont^{ideal}),
    \end{equation}
where the randomness in the two distributions is only over the controller's randomness $\Rand_\cont$. Furthermore, since \subj only deletes when \env instructs it to, the GGV game and deletion-as-control game result in the same $\s_\cont$ upon termination.

We will construct a simulator \Sim that satisfies Definition~\ref{def:del-as-cont} for \cont, \env, \subj. Recall that \Sim's task is as follows. For some randomly sampled value $\Rand_\cont$ and the resulting transcript $\tau = (\queries, \answers)$ and controller state $\s_\cont$, the simulator is given $\Rand_\cont, \s_\cont$, and $\queries_\env$ (the sequence of messages sent by \env to \cont). The simulator is required to produce a $\Rand'_\cont$ such that when \cont interacts with the dummy environment $\dummy(\queries_\env)$  using randomness $\Rand'_\cont$, it results in the same state $\s_\cont$ with probability at least $1-\delta$; further, $\Rand'_\cont \approxequiv{\eps, \delta} \Rand_\cont$ (with randomness coming from sampling the two).

In our case, we will use the ``default simulator'' (from Definition~\ref{def:default-sim}), which works as follows:
\begin{align*}
    &\Sim(\queries_\env, \rand_\cont, \s_\cont):\\
    & \qquad \text{Return }\Rand'_\cont \sim \distrand\big|_{\cont(\queries_\env;\Rand') = \s_\cont} \text{ if such an $\Rand'_\cont$ exists}\\
    &\qquad \text{Otherwise, return $\Rand'\sim \distrand$}.
\end{align*}
If such an $\Rand'_\cont$ exists, then the final state of \cont when interacting with $\dummy(\queries_\env)$ using randomness $\Rand'_\cont$ will be equal to $\s_\cont$. This is because, in \cont's view, this interaction is identical to the ideal execution of Definition~\ref{def:del-as-conf-subj} with $\cont$ using randomness $\Rand'_\cont$ and $\env, \subj$. It remains to prove that  $\Rand'_\cont$ exists with probability at least $1-\delta$ and that the distribution of $\Rand'_\cont$ is $(\eps, \delta)$-close to that of $\Rand_\cont$. 

We can now apply Lemma~\ref{lem:approx_coupling} to prove both of the statements. Define the following deterministic functions:
\begin{align*}
        f(X) &= (V_\env^{real}(X), \s_\cont^{real}(X)), \text{ and}\\
        g(Y) &=  (V_\env^{ideal}(Y), \s_\cont^{ideal}(Y)),
    \end{align*}
where $V_\env^{real}(X)$ denotes the view of \env when $X$ is used as the randomness for $\cont$ in the real world. Let $X = \Rand_\cont$, $Y = \Rand_\cont$, and let $P\equiv Q$ be the distribution that both $X$ and $Y$ are sampled from. Then, by Equation~\eqref{eq:GGV}, $f(X) \approxequiv{\eps, \delta} g(Y)$. Lastly, the simulator \Sim defined samples $\Rand_\cont'$ from the conditional distribution described in Lemma~\ref{lem:approx_coupling}. Thus, by the lemma, we have that $\Rand_\cont' \approxequiv{\eps, \delta} \Rand_\cont$ and that $(V_\env^{real}(X), \s_\cont^{real}(X)) = (V_\env^{ideal}(Y), \s_\cont^{ideal}(Y))$ with probability at least $1-\delta$. In particular, we have that $\s_\cont^{real} = \s_\cont^{ideal}$ with probability at least $1-\delta$. 

Thus, the simulator \Sim for \cont satisfies both conditions of $(\eps, \delta)$-deletion-as-control for deterministic $(\env, \subj)$ and by Lemma~\ref{lem:derandomize}, \cont also satisfies $(\eps, \delta)$-deletion-as-control for randomized $(\env, \subj)$, where \subj calls \delete only when instructed to by \env. Lastly, since \Sim does not rely on the code of \subj or \env, we can use the same simulator for \emph{all} \subj's in the class of data subjects that calls \delete only when instructed to by \env.
\end{proof}

\begin{theorem} \label{thm:conf-cont-sep-one-way}
There exists \cont and $\subj \in \subjDummy \setminus \subjSilent$ such that $\cont$ satisfies $(0,0)$-deletion-as-confidentiality, but $\cont$ does not satisfy $(\eps',\delta')$-deletion-as-control for $\subj$ any $\eps'<\infty$, $\delta'<1$.
\end{theorem}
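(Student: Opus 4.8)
The plan is to exhibit a single controller $\cont$ that handles silent subjects flawlessly---so that it is $(0,0)$-deletion-as-confidentiality compliant in the sense of Definition~\ref{def:del-as-conf}---yet, for a carefully chosen non-silent $\subj\in\subjDummy\setminus\subjSilent$, writes into its surviving state a ``proof of relay'' that no simulator can reconstruct. Concretely, I would have $\cont$ issue a fresh, uniformly random nonce $\rho$ to whichever channel first sends $\idkeyword{reg}$, return $\rho$ \emph{only} to that registrant, and then permanently append $\rho$ to a public ledger if and only if the same $\rho$ is later presented (via $\idkeyword{present}(\rho)$) on a \emph{different} channel. On $\delete$ from the registrant's channel, $\cont$ erases the registrant's nonce and bookkeeping but retains the ledger; the final state $\s_\cont$ is the ledger, and all other messages are ignored. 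The design intuition is that the ledger can become non-empty only through a genuine cross-channel corroboration of a controller-generated secret, which in turn requires both a registration and a matching presentation.

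For the control failure I would let $\subj$ register, receive $\rho$, relay $\rho$ to $\env$ over the $\subj$--$\env$ channel (so $\subj\notin\subjSilent$), and delete only when instructed by $\env$ (so $\subj\in\subjDummy$); and let $\env$ send $\idkeyword{present}(\rho)$ to $\cont$ and then trigger the deletion. In the real execution the ledger becomes $\{\rho\}$ and survives, so $\s_\cont=\{\rho\}$, while the only messages $\env$ sends to $\cont$ are $\queries_\env=\idkeyword{present}(\rho)$; the registration lives on $\subj$'s channel and is therefore \emph{absent} from $\queries_\env$. When the dummy $\dummy$ replays only $\queries_\env$, the controller never issues a nonce, so the presented value corroborates nothing and the ledger stays empty for \emph{every} choice of $\Rand'_\cont$. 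Thus $\{\,\Rand'_\cont : \cont(\queries_\env;\Rand'_\cont)=\s_\cont\,\}=\emptyset$, so condition (2) of Definition~\ref{def:del-as-cont} fails with probability $1$ regardless of $\Sim$. Since the state can never be matched, $\cont$ is not $(\eps',\delta')$-deletion-as-control compliant for $\subj$ for any $\eps'<\infty$ and any $\delta'<1$.

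It remains to establish $(0,0)$-deletion-as-confidentiality, i.e.\ that for every silent $\subj$ and every $\env$ the pair $(V_\env,\s_\cont)$ is identically distributed in the real and ideal GGV executions; by the derandomization argument of Lemma~\ref{lem:derandomize} I would reduce to deterministic parties. The backbone of the argument is that, for a silent subject, the nonce issued to the registrant is never transmitted to $\env$, so both $\env$'s view and the ledger are generated by exactly the same process whether or not $\subj$'s messages reach $\cont$, and the controller's responses to $\env$ never depend on $\subj$. The delicate point---and the step I expect to be the main obstacle---is upgrading this from statistical closeness to \emph{exact} $(0,0)$: a silent $\env$ may present a guessed value, and the construction must guarantee that such a guess corroborates a genuine controller nonce with probability \emph{exactly} zero, so that every ledger-populating event in the real world has a measure-preserving counterpart in the ideal world and conversely. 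I would discharge this by arguing that, conditioned on $\env$'s (subject-independent) transcript, the registrant's nonce remains uniform and unseen by $\env$; the construction must be tuned so that the only value capable of entering the ledger is a controller-generated secret that, for a silent subject, is information-theoretically absent from $\env$'s view, and so that no interaction available to a silent subject ever places two copies of such a secret on two distinct channels. Getting this exact matching right---rather than settling for a $2^{-\Omega(k)}$ slack---is where the real work of the proof lies, and it is precisely the feature that fails once $\subj$ is allowed to relay to $\env$, which is what drives the separation.
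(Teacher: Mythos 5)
Your construction is genuinely different from the paper's, and the deletion-as-control half of your argument is sound: having $\subj$ relay a secret to $\env$ so that the replayed queries $\queries_\env$ drive the controller into a state unreachable from $\queries_\env$ alone is exactly the mechanism the paper also exploits (there the secret is $\subj$'s channel ID $\cid$, and the state effect is that $\cont$ permanently refuses to honor $(\cid,\delete)$). The problem is the confidentiality half, which you have correctly flagged yourself but not closed --- and it is not a deferrable technicality. First, as literally specified, issuing a nonce only to \emph{whichever channel first sends} $\idkeyword{reg}$ already breaks $(0,0)$-deletion-as-confidentiality: a silent $\subj$ who registers before $\env$ consumes the unique nonce in the real execution but not in the ideal one, so $\env$'s own registration is answered differently in the two worlds and $V_\env^{real} \neq V_\env^{ideal}$. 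Second, even after repairing this to per-channel nonces, any nonce is a finite string that $\env$ can guess and present on one of its own channels; with probability $2^{-k}$ the guess matches $\subj$'s nonce and populates the ledger in the real world but never in the ideal world, so the construction achieves at best $(0,2^{-k})$-confidentiality while the theorem demands exactly $(0,0)$. No tuning of a nonce drawn from a finite message space makes that probability exactly zero, so the separation as you have set it up cannot reach the stated parameters.

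The paper sidesteps both issues by never letting the controller emit any randomness at all: $\cont$ is a \emph{write-only} SHI dictionary that sends no messages, so $V_\env$ is identical in the two worlds for trivial reasons, and the unguessable token is $\subj$'s channel ID $\cid$ itself --- a quantity the execution model keeps hidden from $\env$ unless $\subj$ reveals it --- rather than a controller-generated secret that must be transmitted back to a party. The trigger is: upon receiving the string $\cid$ as the body of a message on a different channel $\cid'$, ignore all subsequent $(\cid,\delete)$ requests. For a silent $\subj$ the trigger can never fire, and SHI plus logical deletion gives $\s_\cont^{real} = \s_\cont^{ideal}$ exactly. If you want to rescue your ledger idea, the natural fix is to replace the controller-issued nonce by the channel ID in the same way; at that point you have essentially reconstructed the paper's proof.
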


\begin{proof}

Consider $\cont$ that implements a write-only SHI dictionary modified as follows. Recall that communications in our execution model are in the form $(\cid,\msg)$, where $\cid$ is a channel ID known only to the communicating parties. If $\cont$ receives message $\msg = \cid$ with channel ID $\cid'\neq \cid$, then it ignores all subsequent \delete messages of the form $(\cid, \delete)$. In words, the user communicating on channel $\cid'$ can instruct $\cont$ to ignore deletion requests from the user communicating on channel $\cid$.

It is easy to see that $\cont$ satisfies $(0,0)$-deletion-as-confidentiality.
First, $\view_\env$ is identical in the real and ideal executions because the dictionary is write-only and $\cont$ sends no messages.
Second $\s_\cont$ is identical in the real and ideal executions because the dictionary is SHI.

Recall $\subjDummy \setminus \subjSilent$ is the set of data subjects $\subj$ that send messages to $\env$ and only $\delete$ when instructed by $\env$.
Consider $\subj$ that inserts itself into the dictionary, sends its $\cid_\subj$ to \env, and then calls \delete. 
Consider $\env$ that sends $\cid_\subj$ to \cont as soon as it receives it from \subj.
By construction, $\s_\cont$ contains $\subj$'s data at the end of the real execution, but $\s'_\cont$ in the dummy execution does not. 
Thus, if the implementation $\dict$ is perfectly correct, there are no $\Rand_\cont$, $\Rand'_\cont$ such that $\s_\cont = \s'_\cont$. This violates Definition~\ref{def:del-as-cont} for any $\delta<1$ and any $\eps<\infty$.
\end{proof}

\begin{theorem} \label{thm:conf-cont-sep-termination}
For any $\delta>0$, there exists \cont and $\subj\in \subjSilent$ such that (i) \cont satisfies  $(0,\delta)$-deletion-as-confidentiality for \subj, and (ii) $\cont$ does not satisfy $(\eps',\delta')$-deletion-as-control for $\subj$ for any $\eps<\infty$, $\delta'<1$.
\end{theorem}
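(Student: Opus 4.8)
The plan is to exploit the one structural difference between the two games stressed in this section: the deletion-as-control execution halts the instant $\cont$ processes $\subj$'s first $\delete$, whereas the deletion-as-confidentiality (GGV) execution keeps running until $\env$ sends $\finish$. So a subject that deletes \emph{early}, of its own accord, gets $\cont$ several extra activations in the GGV game that it never gets in the control game. I would therefore build a controller whose deletion is \emph{deferred} -- processing $\delete$ leaves a trace of the deleted record that is only scrubbed on the following activation -- together with a silent subject $\subj \in \subjSilent \setminus \subjDummy$ that deletes on its own and then ``self-heals'' by issuing one more operation. The control game freezes $\cont$ in the dirty, mid-deletion state, while the confidentiality game lets the subject complete the cleanup before $\env$ can ever observe the state.

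Concretely, let $\cont = \cont_\impl$ for a strongly history-independent dictionary $\impl$ modified so that (i) on a $\delete$ it only \emph{marks} the record for removal, and only actually removes it -- restoring the canonical, history-independent representation of the resulting logical state -- on the next operation it processes; and (ii) it acknowledges every message back on the channel on which it arrived. Let $\subj$ be woken once by $\env$, insert a value $v$ on its own channel, and then, driven entirely by $\cont$'s acknowledgements, send $\delete$, then send one further cleanup-triggering operation, and finally halt without ever messaging $\env$. Because $\cont$ always replies to the sender and $\subj$ always replies to $\cont$, this insert/delete/cleanup exchange is a $\subj \leftrightarrow \cont$ ping-pong during which, by the activation rules, $\env$ is never activated.

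For deletion-as-control, take the adversarial $\env$ that wakes $\subj$ but sends no queries of its own to $\cont$, so that $\queries_\env$ is empty. The control execution ends the moment $\cont$ processes $\subj$'s $\delete$ -- before the cleanup operation -- so $\s_\cont$ still contains the marked record $v$. In the ideal execution $\dummy$ replays the empty $\queries_\env$, so $\cont$ processes nothing and $\s'_\cont$ is the empty-dictionary representation, independent of $\Rand'_\cont$. Since no choice of controller randomness can conjure $v$ out of no inserts, $\s'_\cont \neq \s_\cont$ with probability $1$ for every simulator $\Sim$; hence $\cont$ violates condition~(ii) of Definition~\ref{def:del-as-cont} for every $\delta' < 1$ and $\eps' < \infty$.

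For deletion-as-confidentiality I would show that $(V_\env, \s_\cont)$ is \emph{identically} distributed in the real and ideal GGV executions, which yields $(0,0)$-confidentiality and hence $(0,\delta)$-confidentiality for every $\delta > 0$. The crux -- and the step I expect to be the main obstacle -- is the control-flow argument that $\env$ can never catch the transient dirty state: the record $v$ is present only between $\subj$'s $\delete$ and its cleanup operation, and throughout that window control bounces between $\subj$ and $\cont$, so $\env$ is never activated and cannot send $\finish$ there. Thus whenever the GGV execution terminates, $v$ has already been scrubbed; by strong history independence the resulting representation equals the one produced without $\subj$'s operations, matching the ideal state exactly, while $\env$'s view is unaffected because $\subj$ is silent and $\cont$'s acknowledgements travel on a channel $\env$ does not see. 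Making this rigorous requires carefully checking the activation semantics -- that a replying $\cont$ hands control to the sender rather than to $\env$, and that $\env$ regains control only once $\subj$ halts -- and verifying that the deferred-then-completed deletion leaves no history-dependent residue.
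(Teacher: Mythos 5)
Your proposal is correct and exploits exactly the structural gap the paper targets---the control execution freezes $\cont$'s state the instant it processes $\subj$'s first $\delete$, while the GGV execution grants further activations---but the construction realizing it is genuinely different. The paper makes the deferral of physical removal \emph{conditional} on a secret random threshold: on first contact from channel $\cid$ it samples $t_{\cid}\in[1,T]$ and defers the removal only when $|\dict|=t_{\cid}$ at deletion time; an adversarial $\env$ in the control game pads the dictionary so the threshold is hit exactly when $\subj$ deletes, while a generic $\env$ in the confidentiality game hits it only with probability $1/T$, yielding $(0,1/T)$-confidentiality. You instead make the deferral \emph{unconditional} and have the subject itself complete the scrub with one more operation, relying on the activation semantics to show $\env$ cannot interpose $\finish$ inside the $\subj\leftrightarrow\cont$ ping-pong; this buys you $(0,0)$-confidentiality for the constructed $\subj$ (stronger than the paper's $(0,\delta)$, and the theorem only quantifies over the specific $\subj$) at the price of a subject that must actively self-heal. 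Both subjects lie in $\subjSilent\setminus\subjDummy$, as they must given Theorem~\ref{thm:conf-cont}.

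Two details you flag as ``to be checked'' do need to be pinned down, though neither is an obstacle. First, the dictionary must be write-only toward $\env$ (or answer $\env$'s queries independently of $\subj$'s record): confidentiality compares $\env$'s view over the \emph{entire} execution against an ideal world where $\subj$'s messages are dropped, so any read access that could reveal $v$'s presence even before deletion would already break part (i). Second, under the GGV termination rule your subject's last message to $\cont$ is the cleanup operation, not $\delete$, so at $\finish$ a forced final $\delete$ is sent and processed before the state is compared; you need this $\delete$ of an already-scrubbed key to be a genuine no-op leaving no pending-deletion metadata, and you need the scrub to restore the canonical SHI representation of the resulting logical state so that the real and ideal final states coincide exactly. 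With those stipulations your argument goes through.
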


\begin{proof}[Proof of Theorem~\ref{thm:conf-cont-sep-termination}]
Let $T > 1/\delta$. Consider $\cont$ implementing a write-only SHI dictionary $\dict$ modified as follows. Upon receiving the first message from a channel $\cid$, sample $t_{\cid} \in [1,T]$ uniformly at random, store it in the dictionary with key $\cid$, and reply with $t_\cid$. Upon receiving \delete from \cid, check the current size of the dictionary. If $|\dict| = t_\cid$, remove \cid the next time \cont is activated. Otherwise, remove \cid immediately.

Consider $\subj$ that (i) sends $\ins{}$ to \cont on its first activation, (ii) receives $t_{\cid_\subj}$ in response, and (iii) sends \delete to \cont after $t_{\cid_\subj}-1$ additional activations.

\begin{claim}
$C^*$ satisfies $(0,\delta)$-deletion-as-confidentiality for $\subj^*$. 
\end{claim}

\begin{proof}
Fix $\env$ and $\subj \in \subjSilent$.
Because $\dict$ is write-only, $\view_\env$ is exactly the same in the real and ideal deletion-as-confidentiality executions. 
That view consists of $k$ calls to $\ins{}$ sent to $\cont$ and the responses $t_1, \dots, t_k$. 

In contrast, $\s^{real}_\cont$ may differ at the end of the real and ideal executions. But only if $|\dict| = t_{\cid_\subj}$ when \subj calls \delete.
In particular, $\s^{real}_\cont \neq \s^{ideal}_\cont$ requires $\env$ to make exactly $t_{\cid_\subj}-1$ calls to $\cont.\ins{}$ over the course of the execution. 
But $\env$ has no information about $t_{\cid_\subj}$ besides its prior distribution. From $\cont$ it receives only the $t_i$, which are independent. From \subj it receives nothing, as $\subj \in \subjSilent$.
Therefore, the probability that $\s^{real}_\cont \neq \s^{ideal}_\cont$ is at most $1/T$.
\begin{align*}
&d_{TV}\biggl((V_\env^{real},\s_\cont^{real}), (V_\env^{ideal},\s_\cont^{ideal})\biggr) \\
& =
d_{TV}\biggl(\s_\cont^{real}, \s_\cont^{ideal}\biggr) < \frac{1}{T}
\end{align*}
Hence $(V_\env^{real},\s_\cont^{real})\approxequiv{0,\frac{1}{T}} (V_\env^{ideal},\s_\cont^{ideal}).$
\end{proof}

\begin{claim}
\label{claim:sub-claim-to-ggv-sep-thm}
$C^*$ does not satisfy $(\eps',\delta')$-deletion-as-control for $\subj^*$ for any $\eps<\infty$, $\delta'<1$.
\end{claim}

\begin{proof}
Consider $\env$ that repeats two operations: (i) activate $\subj$, and (ii) send $\ins{}$ to \cont along a new channel.
The execution $\exec{\cont(\Rand_\cont), \subj, \env}$ always ends with $\subj$ calling $\delete$ when $|\dict|= t_{\cid_\subj}$. By construction, $\dict$ does not remove $\subj$ from $\dict$. 
So at the end of the real execution, $\s_\cont$ contains the dictionary $\dict$ that includes $\subj$.

In contrast, the dummy execution $\exec{\cont(\Rand'_\cont), \dummy(\queries_\env)}$ includes no calls to $\ins{}$ from $\subj$. So at the end of the dummy execution,  $\s'_\cont$ contains the dictionary $\dict'$ that does not include $\subj$.
Thus, if the implementation $\dict$ is perfectly correct, there are no $\Rand_\cont$, $\Rand'_\cont$ such that $\s_\cont = \s'_\cont$. This violates Definition~\ref{def:del-as-cont} for any $\delta<1$ and any $\eps<\infty$.
\end{proof}

This completes the proof of Theorem~\ref{thm:conf-cont-sep-termination}.
\end{proof}

\section{Differential Privacy and Deletion-as-control}\label{sec:DP}

This section describes two ways of compiling $(\eps,\delta)$-differentially private mechanisms $\mech$ into controllers satisfying $(\eps,\delta)$-deletion-as-compliance. 
The first applies to DP mechanisms that are run in a batch setting on a single, centralized dataset: data summarization, query release, or DP-SGD, for example (\Cref{sec:one-shot-DP}).
The second applies to mechanisms satisfying pan-privacy under continual release  (\Cref{sec:implementPP}). Along the way we define non-adaptive event-level pan privacy (one intrusion) with continual release, first defined by \citet{chan2011private, dwork2010pan}, and an adaptive variant, building on the adaptive continual release definition of \citet{jain2022price}~(\Cref{sec:PPdef}). 
Both of our compilers make use of a SHI dictionary $\dict$ (\Cref{sec:SHI_DICT}).

There is a strong intuitive connection between (approximate) history independence and deletion-as-control, as illustrated by the results of the previous section.
This intuition is so strong that many prior works on machine unlearning \textit{essentially equate} deletion with history independence (\Cref{sec:MUL-2-AHI}).

However, \emph{the examples of this section show that our notion of deletion-as-control is much broader than history independence}. For instance, consider a controller as follows. At some time $t_0$ the controller computes a  differentially-private approximation $\out_{t_0}$ to the current number of users in the data set, and the controller stores  $\out_{t_0}$ and makes it available at all later times. Intuitively, this controller does not satisfy any version of history independence: even if every user request deletion at time $t_0+1$, the stored $\out_{t_0}$ is unchanged---making this history easy to distinguish from one where the all users deleted at time $t_0-1$. It could, however, still satisfy deletion-as-control.

\para{Keeping time}
Throughout this section we consider systems with a global clock. This allows for controllers that publish the number of weekly active users, say. For simplicity and generality, we allow the environment to control time. Specifically, we introduce a special query $\tick$ that only $\env$ can send to $\cont$ to increment the clock.

\subsection{Batch differential privacy}
\label{sec:one-shot-DP}

Let $\mech:\dict \mapsto \mech(\dict)$ be a non-interactive differentially private mechanism $\mech$.

\Cref{alg:one-shot-DP-controller} defines a simple controller $\contdp$ that satisfies deletion-as-control.

It works in three phases: before $\tick$, during $\tick$, and after $\tick$.
At the beginning, $\contdp$ populates a dataset $\dict$ stored as a SHI dictionary from its input stream, returning $\bot$ in response to every query. When it receives the $\tick$, it evaluates $\mech(\dict)$, stores the result as $\out$, and erases the dictionary $\dict$. For all future queries, $\contdp$ simply returns $\out$.
We assume for simplicity that the mechanism $\mech$ is a function only of the logical contents of $\dict$, and is independent of its memory representation.

\begin{algorithm}
\caption{$\contdp$ satifying deletion-as-control from batch DP mechanism $\mech$}
\label{alg:one-shot-DP-controller}
    \begin{algorithmic}[1]
        \Procedure{Initialize}{} \Comment{Run on first activation}
            \State Initialize a SHI dictionary $\dict$
            \State Initialize $\out = \bot$ \Comment{$\bot$ is  never returned by $\mech$}
        \EndProcedure
        \Statex
        \Procedure{Activate}{$\op(\id)$}
            \If{$(\op \neq \tick) \land (\out = \bot)$}
                \State \texttt{// before tick}
                \State $\dict.\op(\id)$ \Comment{Insert/modify/delete item}
                \State Return $\bot$
            \ElsIf{$(\op = \tick)$}
                \State \texttt{// during tick}
                \State Set $\out \gets \mech(\dict)$
                \State Delete $\dict$ \Comment{eg: iteratively call $\dict.\delete(\id)$}
                \State Return $\out$
            \Else
                \State \texttt{// after tick}
                \State Return $\out$
            \EndIf
        \EndProcedure
    \end{algorithmic}
\end{algorithm}

\begin{proposition}\label{prop:one-shot-DP} ~
\begin{enumerate}
    \item If $\mech$ is $(\eps, \delta)$-DP, then $\contdp$ satisfies $(\eps, \delta)$-deletion-as-control.
    \item For any $\eps>0$, suppose $\mech$ is the Laplace mechanism with parameter $\eps$ applied to a count of the number of record in its input. Then $\contdp$ satisfies $(\eps, \delta)$-deletion-as-control but is \emph{not} $(\eps',\delta')$-HI for any  $\epsilon' <\infty$ and $\delta' < 1$.
\end{enumerate}
\end{proposition}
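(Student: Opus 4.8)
The plan is to prove the two parts separately. For Part~1, I would route through deletion-as-confidentiality and Theorem~\ref{thm:conf-cont}; for Part~2, statement~(a) is a one-line consequence of Part~1 and statement~(b) is a direct distinguishing argument.

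For Part~1, the structural fact driving everything is that $\contdp$ has two regimes separated by the single $\tick$: before the $\tick$ it answers $\bot$ to every query and carries only the SHI dictionary $\dict$, and after the $\tick$ its entire persistent state is the one value $\out = \mech(\dict)$, which it also returns to every later query. Since the pre-$\tick$ answers are the constant $\bot$, a (WLOG deterministic, by the analogue of Lemma~\ref{lem:derandomize}) environment and subject behave identically up to the $\tick$ no matter what coins $\cont$ uses; hence the dataset at the $\tick$ is fixed, and the real dataset differs from the subject-removed dataset in exactly the single record keyed by $\subj$'s channel (the dictionary is keyed by party id). I would first establish $(\eps,\delta)$-deletion-as-confidentiality: in both real and ideal executions the pair $(V_\env,\s_\cont)$ is the image of $\out$ under one fixed post-processing map (the fixed pre-$\tick$ transcript, followed by $\env$'s reaction to the received $\out$, together with $\s_\cont = \out$). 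Because the two tick-time datasets are neighboring, $(\eps,\delta)$-DP gives $\out^{real} \approxequiv{\eps,\delta} \out^{ideal}$, and closure of $\approxequiv{\eps,\delta}$ under post-processing yields $(V_\env,\s_\cont)^{real} \approxequiv{\eps,\delta} (V_\env,\s_\cont)^{ideal}$. Theorem~\ref{thm:conf-cont} then gives deletion-as-control for every $\subj \in \subjLift$.

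To reach arbitrary subjects and expose the mechanics, I would also argue control directly with the default simulator, splitting on whether $\queries_\env$ contains a $\tick$. Without a $\tick$, $\contdp$ never leaves the SHI regime, so the claim is exactly Theorem~\ref{thm:AHI_to_del} applied to the history-independent dictionary. With a $\tick$, the real state is $\s = \mech(\dict^{real};\Rand_\mech)$, and since the dummy merely replays $\queries_\env$ (whose post-$\tick$ part $\cont$ ignores) the ideal state is $\mech(\dict^{ideal};\Rand'_\mech)$ for the neighboring $\dict^{ideal}$; applying the Coupling Lemma (Lemma~\ref{lem:approx_coupling}) with $f(\Rand_\mech)=\mech(\dict^{real};\Rand_\mech)$ and $g(\Rand'_\mech)=\mech(\dict^{ideal};\Rand'_\mech)$ -- whose premise $f\approxequiv{\eps,\delta}g$ is precisely the DP guarantee -- produces a simulator that samples $\Rand'_\mech$ conditioned on $\mech(\dict^{ideal};\Rand'_\mech)=\s$, matching the state except with probability $\delta$ while keeping $\Rand'\approxequiv{\eps,\delta}\Rand$. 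The main obstacle here is the read-once randomness tape: the real run consumes dictionary coins for both $\subj$'s and $\env$'s pre-$\tick$ operations whereas the ideal run consumes them only for $\env$'s, so I must argue at the level of tape distributions that padding the dictionary portion with fresh uniform bits and coupling only the mechanism bits still yields $\Rand'\approxequiv{\eps,\delta}\Rand$.

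For Part~2, statement~(a) is immediate: the Laplace mechanism with parameter $\eps$ on a sensitivity-$1$ count is $(\eps,0)$-DP, hence $(\eps,\delta)$-DP, so Part~1 applies. For~(b) I would exhibit one valid AHI adversary defeating both variants of Definition~\ref{def:AHI} at once. It plays $\seq=(\ins{1},\dots,\ins{n},\tick,\del{1},\dots,\del{n})$, observing the published $\out = n + Z_1$ with $Z_1\sim\Lap(1/\eps)$, and then outputs $\seq^*=(\ins{1},\dots,\ins{n},\del{1},\dots,\del{n},\tick)$. Both sequences leave the empty logical dataset (the ADT's logical state is the dictionary contents, which supports logical deletion), so $\seq^*\logequiv\seq$, yet re-running on $\seq^*$ stores $\out^*=Z_2$ with $Z_2\sim\Lap(1/\eps)$, whether $\Rand^*=\Rand$ or $\Rand^*$ is fresh. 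Taking the event $E=\{\text{stored value}\ge n/2\}$, which depends only on the state coordinate, Laplace concentration gives probability $1-\tfrac12 e^{-\eps n/2}$ under $(\view_{\impl,\adv},\statePhys)$ and $\tfrac12 e^{-\eps n/2}$ under $(\view_{\impl,\adv},\statePhys^*)$; for any fixed $\eps'<\infty$ and $\delta'<1$, taking $n$ large enough violates $(\eps',\delta')$-indistinguishability, so $\contdp$ is not $(\eps',\delta')$-HI. I expect the only real difficulty to lie in Part~1 -- pinning down the one-record neighboring relation and the read-once tape bookkeeping -- which is exactly why routing through deletion-as-confidentiality (where $\env$'s ideal view reacts to $\out^{ideal}$, so the view/state mismatch of a naive direct coupling never arises) is the cleaner path.
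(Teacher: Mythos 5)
Your proof is correct, and its load-bearing content coincides with the paper's own argument: Part~2(b) is the same concentration argument (your event $\{\out \ge n/2\}$ fixes what appear to be sign typos in the paper's stated probabilities, and your choice of $\seq^*$ differs immaterially from the paper's canonical sequence consisting of the lone $\tick$), and your ``direct'' argument for Part~1---deterministic $\env,\subj$ via Lemma~\ref{lem:derandomize}, a case split on whether $\queries$ contains a $\tick$, the identity simulator plus SHI canonical representations in the no-tick case, and the default simulator plus the Coupling Lemma (Lemma~\ref{lem:approx_coupling}) applied to $f(\Rand)=\mech(\dict;\Rand)$ and $g(\Rand)=\mech(\dict';\Rand)$ for the neighboring tick-time datasets---is exactly the paper's proof. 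The one genuine difference is your primary framing through deletion-as-confidentiality and Theorem~\ref{thm:conf-cont}. That detour is sound (the paper uses precisely this route for the pan-private controller in Theorem~\ref{thm:PPshi_filter}), but note that it cannot stand alone here: Theorem~\ref{thm:conf-cont} only yields control for $\subj\in\subjLift$, and unlike $\contpp$, the controller $\contdp$ returns $\out\neq\bot$ to $\subj$ after the $\tick$, so the ``restrict to $\subjLift$ without loss of generality'' reduction is not immediate and would require arguing that a wrapper environment can relay $\out$ to an emulated subject. Since you supply the direct argument anyway, the confidentiality route is redundant rather than the cleaner path. Two smaller remarks: you are more careful than the paper about the read-once tape bookkeeping when the dictionary consumes coins (the paper implicitly relies on the deterministic sorted-list dictionary so that $\mech$ reads the same tape positions in both worlds), and your appeal to Theorem~\ref{thm:AHI_to_del} in the no-tick branch is better replaced by the direct appeal to logical deletion and Theorem~1 of \cite{hartline2005characterizing}, since that theorem is stated for a whole controller rather than for one branch of a case analysis---though the no-tick case does reduce to a pure write-only SHI dictionary controller, so this is cosmetic.
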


The second part of the proposition really applies to any DP mechanism that releases useful information about its inputs—the argument relies just on the fact that $\mech$ acts differently on the empty data set than it does on a data set with many records.
In particular, the DP Machine learning example from Section~\ref{sec:touchstone}, which trains a model using DP-SGD, satisfies deletion-as-control. 

\begin{corollary}\label{cor:dp_ml} The $(\eps, \delta)$-DP Machine Learning touchstone controller (Section~\ref{sec:touchstone}) satisfies $(\eps, \delta)$-deletion-as-control.
\end{corollary}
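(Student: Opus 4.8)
The plan is to prove \Cref{cor:dp_ml} by recognizing the DP Machine Learning touchstone controller as a concrete instance of the generic controller $\contdp$ of \Cref{alg:one-shot-DP-controller}, and then invoking \Cref{prop:one-shot-DP}(1). First I would make the correspondence between the informal touchstone of \Cref{sec:touchstone} and the three phases of $\contdp$ explicit. The collection period, during which users contribute data, is the before-$\tick$ phase: each incoming record is inserted into the SHI dictionary $\dict$ and the controller replies with $\bot$. Training the predictive model at the end of the period is the during-$\tick$ phase: the controller evaluates $\mech(\dict)$, stores the trained model as $\out$, and erases $\dict$. Publishing the model is the after-$\tick$ phase: every subsequent query is answered with the fixed output $\out$. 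Under this identification the touchstone's prescription that ``to remove a user the controller does nothing'' is realized exactly by the after-$\tick$ behavior, since once $\dict$ has been erased a $\delete$ request simply elicits $\out$ like any other query.

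Next I would instantiate $\mech$ as the DP-SGD mechanism of \cite{bassily2014private}, which is $(\eps,\delta)$-differentially private, and verify the standing hypothesis of \Cref{prop:one-shot-DP}: that $\mech$ depends only on the logical contents of $\dict$ and not on its memory representation. This holds because DP-SGD is a (randomized) function of the multiset of labeled training examples, with its noise supplied by the controller's random tape, which the deletion-as-control experiment already accounts for separately. With the phase-by-phase identification and these two checks in hand, \Cref{prop:one-shot-DP}(1) applies verbatim: since $\mech$ is $(\eps,\delta)$-DP, $\contdp$ satisfies $(\eps,\delta)$-deletion-as-control, and hence so does the touchstone controller.

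I do not expect a genuinely hard step, as the corollary is a direct specialization of \Cref{prop:one-shot-DP}; the only care needed is in the two bookkeeping items above, namely that ``delete the dataset after training and ignore later deletion requests'' is faithfully the after-$\tick$ phase of $\contdp$, and that DP-SGD meets the $(\eps,\delta)$-DP and representation-independence assumptions. Were \Cref{prop:one-shot-DP} unavailable, the actual work would be to reprove it: one would argue that after the $\tick$ the controller's state is a deterministic function of $\out=\mech(\dict)$, route the guarantee through (approximate) deletion-as-confidentiality (cf.\ the compiler of \Cref{sec:DP} and \Cref{thm:conf-cont}), and exhibit the default simulator of \Cref{def:default-sim} sampling $\Rand'_\cont$ consistent with $\s_\cont$. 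All of that is already encapsulated in \Cref{prop:one-shot-DP}, so for the corollary it suffices to invoke it.
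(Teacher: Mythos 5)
Your proposal is correct and matches the paper's argument: the corollary is obtained exactly by viewing the Batch Machine Learning touchstone as the controller $\contdp$ of Algorithm~\ref{alg:one-shot-DP-controller} instantiated with the $(\eps,\delta)$-DP training mechanism (e.g., DP-SGD) and invoking Proposition~\ref{prop:one-shot-DP}(1). The extra bookkeeping you supply (the phase-by-phase correspondence and the check that $\mech$ depends only on the logical contents of $\dict$) is consistent with the paper's standing assumptions and does not change the route.
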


\begin{proof}[Proof of Proposition~\ref{prop:one-shot-DP}]
(1) To reduce clutter, let $\cont = \contdp$ throughout this proof.
The simulator is described in \Cref{alg:one-shot-DP-controller:sim}.
We must show that for all deterministic $\env$ and $\subj$ (i) $\Rand'_\cont \approxequiv{\eps,\delta} \Rand_\cont$, and (ii) $(\dict', \out') = (\dict, \out)$ with probability at least $1-\delta$.

Fix deterministic $\env$ and $\subj$.
Until a $\tick$ query, $\cont$'s output is determinstically $\bot$.
Hence, all queries  that precede $\tick$ are fixed in advance---including whether the sequence contains a $\tick$ at all. If there is no $\tick$, then the whole sequence is fixed.

We consider two cases depending on whether the real sequence of queries $\queries$ includes $\tick$.
Suppose there is no $\tick$ in $\queries$. By construction, $\cont$'s state $\out = \bot$ and $\dict$ may be non-empty. $\Sim(\queries_\env,\Rand_\cont,\out)$ outputs $\Rand'_\cont = \Rand_\cont$. Consider $\dict$ in the real execution resulting from queries $\queries$ and randomness $\Rand_\cont$, and $\dict'$ in the ideal execution resulting from $\queries_\env$ and $\Rand'_\cont$.
The only difference between $\queries$ and $\queries_\env$ is $\subj$'s queries.
Because $\queries$ ends with $\delete$ from $\subj$, and dictionaries support logical deletion, the logical contents of $\dict$ and $\dict'$ are identical. Applying Theorem 1 from \cite{hartline2005characterizing}\footnote{The theorem states that reversible $(0,0)$-SHI data structures have canonical representations, up to initial randomness.}, $\dict' = \dict$. In this case, the simulation is perfect.

Suppose there is a $\tick$ in $\queries$. By construction,   $\cont$'s state $\dict$ is empty and $\out \neq \bot$. The simulator samples $\Rand'_\cont$ uniformly conditioned on $\out' = \out$ (if such $\Rand'_\cont$ exists, uniformly otherwise).
Consider $\dict$ in the real execution resulting from queries $\queries$, and $\dict'$ in the ideal execution resulting from $\queries_\env$. $\dict$ and $\dict'$ represent fixed neighboring datasets: differing only on $\subj$.
Take $f(\Rand) = \mech(\dict;\Rand)$ and $g(\Rand) = \mech(\dict'; \Rand)$. Because $\mech$ is $(\eps,\delta)$-DP, $f(\Rand) \approxequiv{\eps,\delta} g(\Rand)$. The simulator samples $\Rand'_\cont$ as in the statement of \Cref{lem:approx_coupling}. Applying that lemma completes the proof.

(2) When $\mech$ is the Laplace mechanism, $\contdp$ satisfies deletion as control by part (1). To see why it does not satisfy even weak, nonadaptive history independence, consider a (nonadaptive) sequence $\seq_n$ of $n$ distinct insertions, followed by a \tick, followed by $n$ corresponding deletions. The minimal equivalent sequence $\seq^*=[\seq_n]$ consists only of the $\tick$. The value $\out$ generated at the tick will follow $n + \Lap(1/\epsilon).$ If we consider the event $E_n=\set{\out < n/2}$, then $E_n$ occurs with probability at least $1-\exp(\eps n/2)$ in the real world. However, if we were to run a dummy execution of the controller $\contdp$ with $\seq^*$, the probability of $E_n$ would be at most $\exp(\eps n)$, regardless of whether we use the same randomness $\Rand$ as in the real world or a fresh string $\Rand'$. Suppose, for contradiction, that $\contdp$ satisfies $(\eps',\delta')$-HI. Then
\begin{align*}
1-\exp(-\eps n) \leq \Pr_{\text{real}}(E_n) &\leq \exp(\eps')\cdot  \Pr_{\text{dummy}}(E_n) \\
&\leq \exp(\eps' - \eps n) + \delta' \, .
\end{align*}
For any $\eps'>0$ and $\delta'<1$, we can get a contradiction by choosing  $n> \frac 1 \eps \ln \paren{\frac{e^{\eps'}+1}{1-\delta'}}$.
\end{proof}

\begin{algorithm}[h]
\caption{$\Sim$ for $\cont=\contdp$}
\label{alg:one-shot-DP-controller:sim}
    \begin{algorithmic}[1]
        \State \textbf{input} $\Rand_\cont$, $\queries_\env$, $\out$
        \If{$\tick \notin \queries_\env$}
            \State $\Rand'_\cont \gets \Rand_\cont$
        \Else
            \State Run the default $\Sim$ (Def.~\ref{def:default-sim})
            
            Namely, sample $\Rand'_\cont$ uniformly conditioned on $\out' = \out$, where $\out' = \cont(\queries_\env; \Rand_\cont')$ is the output of $\cont$ in the ideal execution (with queries $\queries_\env$). If no such $\Rand'_\cont$ exists, sample $\Rand'_\cont$ uniformly at random.
        \EndIf
        \State Return $\Rand'_\cont$
    \end{algorithmic}
\end{algorithm}

\subsection{Adaptive pan-privacy with continual release} \label{sec:PPdef}
In this section, we formalize the definition of adaptive pan-privacy with continual release (against a single intrusion), extending the non-adaptive versions originally defined in \cite{chan2011private, dwork2010pan}. The continual release setting concerns an \emph{online controller} that processes a stream of elements and produces outputs at regular time intervals. We say two streams are (event-level) \emph{neighboring} if they differ in at most one stream item. We consider privacy against an adversary that can adaptively choose neighboring streams (one of which is processed by the online mechanism) and inspect the internal state of the online algorithm once. This adaptive game defines the adversary's view. Informally, adaptive pan-privacy under continual release requires that, for all adaptive adversaries, the adversary's view for the two streams is $(\eps,\delta)$-close.

\paragraph{Preliminaries}
A \emph{stream} $\Ds \in  (\{\reg, \intru, \chall\} \times \universe)^*$
consists of elements of the form $(\code, \op(\id))$.
The \emph{control code} $\code \in \{\reg, \intru, \chall\}$ controls the execution of the pan-privacy continual release security game and is never seen by the mechanism itself (see \Cref{alg:adaptive_PP_game}). The \emph{data operation} $\op(\id)\in \universe$ is sent to the pan-private mechanism. The universe of data operations $\universe$ includes a special \tick operation (which requires no \id tag) which indicates to the mechanism that one time step has passed.

\begin{definition}[Online Algorithm]
An \emph{online algorithm} \mech is initialized with a time horizon $T$ and is defined by an internal algorithm $I$.
Algorithm \mech processes a stream of elements through repeated application of $I:(\op(\id),\s_\mech) \mapsto (\s'_\mech, \out)$, which (with randomness) maps a stream element and the current internal state to a new internal state and an output (which may consist of $\bot$).
For simplicity, we only consider online algorithms that produce outputs when a $\tick$ occurs. That is, $\op(\id)\neq \tick$ implies that $\out=\bot$.
The internal state of \mech includes a clock which counts the number of $\tick$s that have previously occurred in the stream. When \mech's internal number of clock ticks reaches $T$, it stops processing new stream elements and always outputs $\bot$.

\end{definition}

Any stream $\Ds$ with a single instance of $\chall$ naturally gives rise to two neighboring sequences of data operations $\Ds^{(\IN)}$ and $\Ds^{(\OUT)}$ which include or exclude the challenge operation, respectively.

\begin{definition}[Event-level neighboring sequences]
For stream $\Ds$ containing a single instance of $\chall$, we define two \emph{event-level neighboring sequences} $\Ds^{(\IN)}$ and $\Ds^{(\OUT)}$ as follows:
\begin{align*}
\Ds^{(\IN)} &:= \{\op(\id) \mid (\reg, \op(\id)) \in \Ds   \text{ or } \\
& \qquad (\chall, \op(\id)) \in \Ds\} \\
\Ds^{(\OUT)} &:= \{\op(\id) \mid (\reg, \op(\id)) \in \Ds\}.
\end{align*}
\end{definition}

\paragraph{Event-level pan privacy}

We formalize the definition of non-adaptive pan-privacy with continual release (against a single intrusion), based on \citep{chan2011private, dwork2010pan}. Although our version of the definition is identical in spirit to previous definitions, our setting has two complications which are notationally challenging but not conceptually so. First, the algorithm \mech may process any number of sequence items before producing an output. Second, we consider sequences as neighboring if they differ by insertion or deletion of items, which causes a discrepancy in sequence indices. We have chosen to write the non-adaptive definition so that it is clearly a special case of the adaptive version, defined later in the section.

Let $\Ds$ be a stream with a single  $\intru$.
For online algorithm $\mech$ and $\side \in \{\IN, \OUT\}$, we denote by
$V_{\mech,\Ds}^{(\side)} = (\s,\Out)$
the view of an attacker who
sees (i) the sequence of outputs $\Out$ produced by $\mech(\Ds^{(\side)})$, and (ii) a snapshot of $\s_\mech$ of the internal state of \mech at the time indicated by $\intru$ in $\Ds$.

\begin{definition}[Non-Adaptive Event-Level PP with CR~\citep{chan2011private,dwork2010pan}]
An online algorithm \mech satisfies $(\eps, \delta)$-non-adaptive event-level pan-privacy in the continual release model if for all streams \Ds with at most one $\chall$ and at most one $\intru$:
\begin{equation*}
    V_{\mech, \Ds}^{(\IN)} \approxequiv{\eps, \delta} V_{\mech,\Ds}^{(\OUT)}.
\end{equation*}
\end{definition}

\paragraph{Adaptive pan privacy} We define adaptive event-level pan privacy with continual release against a single intrusion, following the continual-release model of \citet{jain2022price} (which did not allow for intrusions). In this game, the adversary \adv decides when to increment timesteps (i.e., by querying $(\reg, \tick)$), when to issue a single challenge query, when to intrude, and when to terminate the game (i.e., using $\code = \bot$). Depending on whether $\side$ is $\IN$ or $\OUT$, \mech is either given the challenge query or not. Thus, the neighboring data streams we consider differ by the {insertion or deletion} of a single query.
We require that for the $\chall$ query, $\op \neq \tick$. We assume for simplicity that $\mech(\op(\id))$ produces no output unless $\op =\tick$.

\begin{algorithm}
\caption{Privacy game $\Pi_{\mech,\adv}^{(\side)}$ for the adaptive event-level pan privacy with continual release model. The game halts if any assertion fails. }\label{alg:adaptive_PP_game}
    \begin{algorithmic}[1]
        \Procedure{RunGame}{time horizon $T \in \N$}
            \State{\texttt{// initialize global variables}}
            \State $\mech.\textsc{Initialize}(T)$
            \State $\mathsf{intruded} \gets \false$
            \State $\mathsf{challenged}\gets \false$
            \Statex
            \State{\texttt{// interaction between $\adv$ and $\mech$}}
            \State $\out \gets \bot$
            \Repeat
            \State $(\code, \op(\id)) \gets \adv(out)$
            \State $\out \gets \textsc{Activate}(\code, \op(\id))$
            \Until{$\code = \bot$}
        \EndProcedure

        \Statex
        \Procedure{Activate}{$\code, \op(\id)$}

            \If{$\code = \intru$}
                \State \textbf{Assert:}  $\mathsf{intruded} = \false$  \label{line:pp:assert-intrusion}
                \State $\mathsf{intruded} \gets \true$
                \State \Return{$\s_\mech$}

            \EndIf
            \Statex
            \If{$\code = \reg$}
                \State \Return{$\mech(\op(\id))$} \label{line:pp:call-to-m-1} \Comment{If $\op\neq \tick$, then $\mech(\op(\id))=\bot$}

            \EndIf
            \Statex
            \If{$\code = \chall$ }
                \State \textbf{Assert:} $(\mathsf{challenged} = \false)$ and $(\op \neq \tick)$ \label{line:if}  \label{line:pp:assert-challenge}
                \State $\mathsf{challenged} \gets \true$
                \If{$\side = \IN$ \label{line:pp:side-in}}
                    \State Run $\mech(\op(\id))$ \Comment{By assumption $\op(\id) \neq \tick$ and $\mech(\op(\id))= \bot$}  \label{line:pp:call-to-m-2}
                \EndIf
                \State \Return{$\bot$}
            \Statex
            \Else
                \State \Return{$\bot$}
            \EndIf

        \EndProcedure
    \end{algorithmic}
\end{algorithm}

\begin{definition}
We denote by \emph{$V_{\mech,\adv}^{(\side)}$} the \emph{view of $\adv$} in the game $\Pi_{\mech,\adv}^{(\side)}$, consisting of $\adv$'s internal randomness and the transcript of all messages $\adv$ sends and receives.\footnote{One could instead define $\adv$'s view as its internal state at the end of the game. Our version of the view contains enough information to compute that internal state and is simpler to work with.}
\end{definition}

\begin{definition}[Adaptive Event-Level PP with CR] \label{def:adaptive_PP}
A mechanism $\mech$ is \emph{$(\epsilon,\delta)$-DP in the adaptive event-level pan-privacy with continual release model} if for all adversaries~$\adv$,
$$V_{\mech,\adv}^{(\IN)} \approxequiv{\eps,\delta} V_{\mech,\adv}^{(\OUT)}. $$
\end{definition}

\begin{remark}
The definition above makes sense in a setting where the state $\s_{\mech}$ does not have information about the mechanism's \textit{future} random coins. If the adversary could deduce future randomness, then security would be unachievable: an adversary could intrude at time 0, learn all the algorithm's randomness, and then easily tell whether the mechanism had received $x^*$ or not.

An alternative, which does allow for the adversary learning future randomness, is to require that the challenge is possible only \textit{before intrusion} (specifically, Line~\ref{line:if} additionally assert that  $\mathsf{intruded} = \false$). This definition is satisfiable by interesting mechanisms---for example, a mechanism that initializes a counter with Laplace noise, then adds subsequent stream elements $x$ in $[0,1]$ to the counter, and outputs the final value on the first clock tick, satisfies the weaker, alternative definition.
\end{remark}

\subsection{From pan-privacy to deletion-as-control}
\label{sec:implementPP}

\Cref{alg:event-to-user} describes a general transformation from an event-level pan private algorithm to a controller satisfying deletion-as-control.
It uses as a building block a strongly history independent dictionary $\dict$ (\Cref{sec:SHI_DICT}).
For each $\id$, the controller passes the first operations $\op(\id)$ to the underlying pan private mechanism $\mech$. It also passes all $\tick$, which produce output. $\contpp$ uses the dictionary $\dict$ to check whether a given $\id$ has already issued a query to $\mech$.
To delete, $\id$ is removed from $\dict$ but $\mech$ is unaffected. (Note that a deleted user can then issue a new query to $\mech$; we make no guarantees for such users.)

We assume that the randomness tape of $\contpp$ defined in \Cref{alg:event-to-user} is partitioned into two independent strings: $\Rand_\mech$ to be used by the mechanism \mech, and $\Rand_\dict$ to be used by $\dict$. Additionally, we assume that the state of $\contpp$ can be partitioned into one part containing $\s_\mech$ and another part with $\s_\dict$. Beyond from $\s_\mech$ and $\s_\dict$, the controller $\contpp$ uses only ephemeral state and no additional randomness.

\begin{algorithm}[H]%
\caption{$\contpp$ Deletion Compliant Controller Based on %
Event-Level Pan-Private Algorithm}
\label{alg:event-to-user}
    \begin{algorithmic}[1]
        \Procedure{Initialize}{Time horizon $T \in \N$, privacy parameters $\eps, \delta$, query access to event-level pan-private algorithm $\mech$}
            \State Initialize $\mech$ with parameters $(\eps, \delta, T)$
            \State Initialize an empty SHI dictionary $\dict$
        \EndProcedure
        \Statex
        \Procedure{Activate}{$\op(\id)$}
            \If{$\op = \tick$} \Comment{By assumption $\subj$ can never query $\tick$}
                \State \Return{$\mech(\tick)$} \label{line:tick-M}
            \EndIf
            \If{$\op = \delete$}
                \State $\dict.\del{\id}$ \Comment{If $\id \notin \dict$, nothing happens.}
                \State \Return{$\bot$}
            \EndIf
            \If{$\id \in \dict$}
                \State  \Return{$\bot$}
            \EndIf
            \If{$\id \notin \dict$}
                \State $\dict.\ins{\id}$
                \State $\mech(\op(\id))$ \label{line:calls-to-M}
                \State \Return{$\bot$} \Comment{If $\op\neq \tick$, then $\mech(\op(\id))=\bot$}
            \EndIf
        \EndProcedure
    \end{algorithmic}
\end{algorithm}

\begin{theorem}
\label{thm:PPshi_filter}
If a controller \mech satisfies $(\eps,\delta)$-adaptive event-level pan-privacy with continual-release, then the composed controller $\contpp$ as described in \Cref{alg:event-to-user} satisfies $(\eps,\delta)$-deletion-as-control.
\end{theorem}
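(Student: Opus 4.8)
The plan is to exploit the two-component structure of $\contpp$, whose randomness and state split as $\Rand_\cont=(\Rand_\mech,\Rand_\dict)$ and $\s_\cont=(\s_\mech,\s_\dict)$ with the halves evolving independently. First I would invoke Lemma~\ref{lem:derandomize} to restrict attention to deterministic $\env$ and $\subj$. The key structural observation is that $\contpp$'s outputs to $\env$ are always either $\bot$ or $\mech(\tick)$, and that whether an operation is forwarded to $\mech$ depends only on logical membership in $\dict$, a deterministic function of the query sequence. Hence $\queries_\env$ and $\s_\mech$ depend on $\Rand_\mech$ alone, independently of $\Rand_\dict$, so the two components can be simulated separately: the simulator $\Sim$ sets $\Rand'_\dict=\Rand_\dict$ and samples $\Rand'_\mech$ via a coupling derived from pan-privacy.

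For the dictionary component I would argue exactly as in Theorem~\ref{thm:perfect_shi_del}. Since $\dict$ supports logical deletion and $\subj$'s execution ends with $\del{\cid_\subj}$ on a channel disjoint from $\env$'s, the real sequence of dictionary operations (including $\subj$'s insert and delete) is logically equivalent to the ideal sequence induced by $\queries_\env$ alone. By the canonical-representation property of $(0,0)$-SHI data structures (Theorem~1 of \citep{hartline2005characterizing}), running $\dict$ on logically equivalent sequences with the \emph{same} randomness $\Rand_\dict$ gives identical physical states, so $\s'_\dict=\s_\dict$ with certainty.

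The heart of the argument is the mechanism component, handled by a reduction to adaptive pan-privacy that is structurally parallel to the proof of Theorem~\ref{thm:AHI_to_del}. Given deterministic $\env,\subj$, I build a pan-privacy adversary $\adv=\adv_{\env,\subj}$ that internally emulates $\exec{\contpp,\env,\subj}$, maintaining dictionary membership itself and forwarding to $\mech$ exactly what $\contpp$ would: each $\tick$ and each first-operation-per-channel as a $\reg$ query, $\subj$'s single forwarded operation as the $\chall$ query (legal since $\subj$ cannot issue $\tick$), and, the instant $\subj$ issues $\delete$, an $\intru$ to capture $\s_\mech$ before terminating with $\code=\bot$. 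The $\IN$ side then reproduces the real execution and the $\OUT$ side the ideal one, so with $f(\Rand_\mech):=V^{(\IN)}_{\mech,\adv}$ and $g(\Rand'_\mech):=V^{(\OUT)}_{\mech,\adv}$---whose intrusion snapshots are the real and ideal mechanism states $\s_\mech$ and $\s'_\mech$---adaptive pan-privacy yields $f(\Rand_\mech)\approxequiv{\eps,\delta}g(\Rand'_\mech)$. Applying the Coupling Lemma (Lemma~\ref{lem:approx_coupling}) then lets $\Sim$ sample $\Rand'_\mech$ conditioned on $f(\Rand_\mech)=g(\Rand'_\mech)$, giving $\Rand'_\mech\approxequiv{\eps,\delta}\distrand$ and, with probability at least $1-\delta$, equality of the two views---and in particular $\s'_\mech=\s_\mech$.

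Combining the two components---and treating separately the trivial case where $\subj$ forwards no operation to $\mech$ (e.g.\ its first message is $\delete$), for which $\Rand'_\mech=\Rand_\mech$ works---yields a single simulator. Because $\queries_\env$ and $\s_\mech$ depend only on $\Rand_\mech$, the simulator's choice of $\Rand'_\mech$ is independent of the untouched $\Rand'_\dict=\Rand_\dict$; since $\Rand_\mech\perp\Rand_\dict$, the two marginal guarantees tensorize into $\Rand'_\cont\approxequiv{\eps,\delta}\Rand_\cont$, and the two state equalities give $\s'_\cont=\s_\cont$ except with probability $\delta$, establishing both conditions of Definition~\ref{def:del-as-cont}. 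The hard part will be the bookkeeping in the reduction: verifying that the transcript recorded in the pan-privacy view encodes enough of $\queries_\env$ that the event $f(\Rand_\mech)=g(\Rand'_\mech)$ genuinely pins the adaptively-regenerated $\OUT$-side query sequence to the real-world $\queries_\env$ that $\dummy$ replays in the ideal world. Only then may the state component of $g$ be identified with the state $\mech$ actually reaches in that ideal execution, where the stream is fixed to $\queries_\env$ rather than adaptively regenerated. I would also confirm that reading $\s_\mech$ at the deletion step is faithful---using that $\delete$ alters only $\dict$ and leaves $\s_\mech$ untouched---and that $\adv$ is a valid adversary issuing exactly one challenge and one intrusion.
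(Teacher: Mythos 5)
Your proposal is correct, but it takes a genuinely different route from the paper. The paper proves the theorem in two modular steps: it first shows (Lemma~\ref{lem:PP_to_conf}) that adaptive pan-privacy of $\mech$ makes $\contpp$ satisfy $(\eps,\delta)$-\emph{deletion-as-confidentiality}, via a pan-privacy adversary that emulates the three-party GGV execution and maps $\subj$'s single forwarded operation to the challenge and the end-of-execution state capture to the intrusion; it then invokes Theorem~\ref{thm:conf-cont} (confidentiality implies control for $\subj\in\subjLift$) and closes with a separate emulation argument showing that, because $\contpp$ only ever answers $\subj$ with $\bot$, restricting to data subjects in $\subjLift$ is without loss of generality for this particular controller. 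You instead reduce directly to deletion-as-control: the same adversary construction, but with the Coupling Lemma applied to the $\IN$/$\OUT$ views to build the simulator for $\Rand'_\mech$, plus the SHI canonical-representation argument for $\Rand'_\dict=\Rand_\dict$ and a tensorization step for the independent halves of the randomness. The ingredients are identical---the reduction adversary, the Coupling Lemma, and strong history independence of $\dict$---but your direct route dispenses with the $\subjLift$ restriction and the $(\env',\subj')$ emulation entirely, since your reduction already tolerates a $\subj$ that talks to $\env$ and deletes at a time of its choosing. The price is that you must re-derive, inside your reduction, the bridging step that the paper gets for free from Theorem~\ref{thm:conf-cont}: the $\OUT$-side pan-privacy game regenerates the query stream adaptively under $\Rand'_\mech$, whereas the ideal deletion-as-control execution replays the \emph{fixed} $\queries_\env$, and these only coincide on the event that the coupled views agree. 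You correctly flag this as the delicate point and describe the right resolution (the view contains the full transcript, so conditioning on view equality pins the regenerated stream to the real one, at which point the intrusion snapshot can be identified with the ideal-world $\s'_\mech$); with that detail filled in, your argument is sound. The paper's modular route has the advantage of establishing the stronger confidentiality property of $\contpp$ as a reusable intermediate; yours is more self-contained and makes explicit why no restriction on the data subject class is needed.
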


We will prove the statement by showing that $\contpp$ satisfies $(\eps, \delta)$-deletion-as-confidentiality (\Cref{lem:PP_to_conf}) and then apply \Cref{thm:conf-cont} to prove that $\contpp$ satisfies $(\eps, \delta)$-deletion-as-control.

\begin{lemma}\label{lem:PP_to_conf}
If a controller \mech satisfies $(\eps,\delta)$-adaptive event-level pan-privacy with continual release, then the composed controller $\contpp$ as described in \Cref{alg:event-to-user} satisfies $(\eps,\delta)$-deletion-as-confidentiality  (\Cref{def:del-as-conf-subj}).
\end{lemma}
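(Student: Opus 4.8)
The plan is to prove the lemma by a reduction to the adaptive pan-privacy of $\mech$ (Definition~\ref{def:adaptive_PP}), treating the two deletion-as-confidentiality worlds as the two sides ($\IN$ and $\OUT$) of the pan-privacy game. Fix any $\subj \in \subjSilent$ and any $\env$. I would build a single pan-privacy adversary $\adv = \adv_{\env,\subj}$ that internally emulates the entire execution $\exec{\contpp, \env, \subj}$, handing the underlying mechanism $\mech$ over to the pan-privacy game while simulating the SHI dictionary $\dict$ itself. Because $\adv$'s view in the $\IN$ (resp.\ $\OUT$) game will be a deterministic image of $(V_\env^{real}, \s_\cont^{real})$ (resp.\ $(V_\env^{ideal}, \s_\cont^{ideal})$) and conversely, the guarantee $V_{\mech,\adv}^{(\IN)} \approxequiv{\eps,\delta} V_{\mech,\adv}^{(\OUT)}$ will transfer through post-processing to the desired indistinguishability.

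Concretely, $\adv$ samples $\Rand_\env, \Rand_\subj, \Rand_\dict$ (so the game samples only $\mech$'s coins $\Rand_\mech$, matching the partition of $\contpp$'s randomness) and runs $\env$ and $\subj$ as subroutines while maintaining $\dict$ with $\Rand_\dict$. Following \Cref{alg:event-to-user}: every $\tick$ from $\env$ is forwarded as $(\reg,\tick)$ and its output returned to $\env$; every fresh-$\id$ operation from $\env$ is forwarded as $(\reg,\op(\id))$; repeated-$\id$ operations and deletions are served from the simulated $\dict$ alone (returning $\bot$); and $\subj$'s single mechanism-affecting operation---its first message, which is well-defined and not a $\tick$ since $\subj$ uses one channel and cannot query $\tick$---is issued as the challenge $(\chall, \op(\id_\subj))$. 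When $\env$ sends $\finish$ and $\subj$ has deleted (its $\delete$ touching only $\dict$, not $\mech$), $\adv$ issues a single $\intru$ to read $\s_\mech$ and then terminates. A deterministic post-processing $h$ then reconstructs $(V_\env, \s_\cont)$ from $\adv$'s view: $\Rand_\env$ together with the transcript (whose only informative answers are the in-transcript $\mech(\tick)$ outputs, all other controller answers being $\bot$) yields $V_\env$; the intrusion response yields $\s_\mech$; and re-simulating $\dict$ on the reconstructed query sequence with $\Rand_\dict$ yields $\s_\dict$.

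It remains to check that $h(V_{\mech,\adv}^{(\IN)})$ and $h(V_{\mech,\adv}^{(\OUT)})$ coincide with the real and ideal deletion-as-confidentiality outcomes. The $\IN$ case is immediate: processing the challenge is exactly feeding $\subj$'s data to $\mech$, so the emulation reproduces $\exec{\contpp,\env,\subj}$ coin-for-coin, giving $h(V_{\mech,\adv}^{(\IN)}) = (V_\env^{real}, \s_\cont^{real})$. For the $\OUT$ case, withholding the challenge means $\mech$ receives exactly the inputs it receives in the ideal world (where $\subj$'s messages to $\cont$ are dropped); hence the $\mech(\tick)$ outputs---and therefore $\env$'s adaptive queries and the snapshot $\s_\mech$---agree with the ideal execution. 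Combined with the standard preservation of $(\eps,\delta)$-indistinguishability under the deterministic map $h$, this yields $(V_\env^{real}, \s_\cont^{real}) \approxequiv{\eps,\delta} (V_\env^{ideal}, \s_\cont^{ideal})$ for every $\env$ and every $\subj \in \subjSilent$, which is the claim.

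I expect the main obstacle to be the dictionary bookkeeping in the $\OUT$/ideal comparison. In $\adv$'s emulation $\subj$ is always inserted into $\dict$ and later deleted, whereas in the ideal world $\subj$ never touches $\dict$ at all; I must argue the two final states $\s_\dict$ nonetheless coincide. This is where strong history independence enters: $\subj$'s insert-then-delete is logically a no-op, so the emulated operation sequence is logically equivalent to the ideal one (both have $\subj$ absent and the same $\env$-operations, the latter because $\mech$'s inputs, and thus $\env$'s adaptive behavior, agree as just argued). By the canonical-representation property of $(0,0)$-SHI data structures (Theorem~1 of \cite{hartline2005characterizing}, as invoked for the sorted-list dictionary in Proposition~\ref{prop:one-shot-DP}), logically equivalent sequences run under the same $\Rand_\dict$ produce identical physical states, so $\s_\dict$ matches pointwise. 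Care is also needed to confirm that $\adv$ is a \emph{valid} (always-terminating) adversary and that it intrudes only once, at the final moment, so that the recorded $\s_\mech$ equals the controller's terminal mechanism state.
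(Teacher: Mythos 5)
Your proposal is correct and follows essentially the same route as the paper's proof: build a pan-privacy adversary that emulates the whole execution, forward $\env$'s fresh operations as $\reg$ queries and $\subj$'s single mechanism-affecting operation as the $\chall$ query, intrude once at the end to capture $\s_\mech$, and transfer $V_{\mech,\adv}^{(\IN)} \approxequiv{\eps,\delta} V_{\mech,\adv}^{(\OUT)}$ by post-processing, with the SHI/canonical-representation property of $\dict$ reconciling $\subj$'s insert-then-delete with its absence in the ideal world. The paper's proof additionally notes explicitly that the emulated $\subj$ receiving $\bot$ answers (rather than nothing) in the $\OUT$ case is harmless because $\subj \in \subjSilent$ cannot influence $\env$, a point your write-up covers only implicitly.
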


\begin{proof}
To reduce clutter, let $\cont = \contpp$ throughout this proof.
Fix a controller \mech that satisfies $(\eps, \delta)$-adaptive event-level pan-privacy, and fix \env and \subj in the deletion-as-confidentiality game. We must show that $(V_\env^{\real},\s_{\cont}^{\real}) \approxequiv{\eps,\delta} (V_\env^{\ideal}, \s_{\cont})$ using the pan-privacy hypothesis. Recall that in the deletion-as-confidentiality game, \subj does not send any messages to \env.

Below, we construct an adversary $\adv = \adv_{\env,\subj}$ for the adaptive pan-privacy game of $\mech$ (see \Cref{fig:pp_implies_ggv}). After adaptively querying $\mech$ (as in \Cref{alg:adaptive_PP_game}), \adv produces output $Z$. We will argue two key properties: First, if $\side = \IN$, then $Z= Z^{(\IN)}$ will be distributed as $(V_\env^{\real},\s_{\cont}^{\real})$. The latter is the view of the environment and state of the controller in the real confidentiality execution (\Cref{def:del-as-conf-subj}). Second, if $\side = \OUT$, then $Z^{(\OUT)}$ will be distributed as $(V_\env^{\ideal},\s_{\cont}^{\ideal})$, the corresponding quantities in the ideal confidentiality execution.
The adaptive pan-privacy of \mech implies that $\adv$'s views with $\side\in\{\IN,\OUT\}$ are $(\eps,\delta)$-indistinguishable.
Post-processing implies that
$Z^{(\IN)} \approxequiv{\eps, \delta} Z^{(\OUT)}$, completing the proof.

\begin{figure}
    \centering
    \includegraphics[scale=0.25]{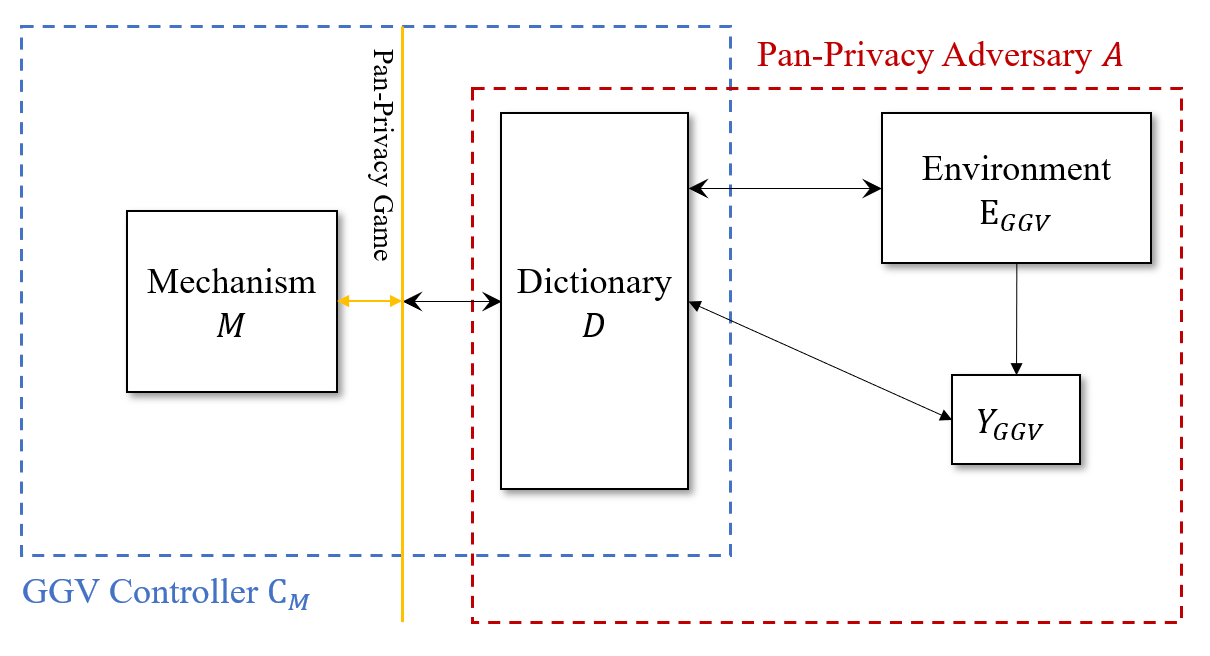}
    \caption{How \adv emulates a real GGV execution among $\env$, $\subj$, and $\contpp$.}
    \label{fig:pp_implies_ggv}
\end{figure}

Adversary $\adv$ emulates a real GGV execution among $\env$, $\subj$, and $\cont$, passing messages among them and performing each party's computations as defined in \Cref{sec:ggv} and depicted in \Cref{fig:pp_implies_ggv}. To emulate $\env$ and $\subj$, $\adv$ simply runs their code.
To emulate $\cont$, $\adv$ maintains an internal SHI dictionary $\dict$ as in \Cref{alg:event-to-user} (using state $\s_\dict$)  but does not emulate $\mech$ directly.
Instead, $\adv$ makes a queries in the pan-privacy game by calling $\textsc{Activate}$ in \Cref{alg:adaptive_PP_game} as follows. Let $\id_\subj$ be $\subj$'s $\id$. When the emulated parties \env or \subj sends message $(\id, \op)$ intended for emulated $\cont$, \adv does the following:

\begin{enumerate}
    \item If $\op$ is a special message from \env to terminate the alive phase of execution: Run $\dict.\del{\ensuremath{\id_\subj}}$, call \\ $\textsc{Activate}(\intru, \bot)$. Upon receiving $\s_\mech$, end the GGV execution and output $Z = (V_\env, \s_{\cont})$ where $V_\env$ is the view of the emulated $\env$ and $\s_{\cont} = (\s_\mech \|\s_\dict)$.
    \item Else if $\op = \delete$: Run $\dict.\del{\id}$ and send response $\bot$ to party $\id$ (on behalf of the emulated $\cont$).
    \item Else if $\id \in \dict$: Send response $\bot$ to party $\id$ (on behalf of the emulated $\cont$).
    \item Else if $\id \neq \id_\subj$: $\dict.\ins{\id}$, $\textsc{Activate}(\reg, \op(\id))$, and forward the response from $\mech$ to party $\id$ (on behalf of the emulated $\cont$).

    \item Else: Call $\textsc{Activate}(\chall, \op(\id))$ and forward the response to $\subj$ (on behalf of the emulated $\cont$).
\end{enumerate}

Let $\mu^{\mathit{GGV}}$ be the sequence of queries to $\mech$ made by the emulated $\cont$ (all calls to Line~\ref{line:calls-to-M} of \Cref{alg:event-to-user}).
Each query in $\mu^{\mathit{GGV}}$ results in a single call to $\textsc{Activate}$. These in turn may result in a query to $\mech$ on Lines~\ref{line:pp:call-to-m-1} and~\ref{line:pp:call-to-m-2} of \Cref{alg:adaptive_PP_game}.
But, due to the pan-privacy game some of these queries may not actually {reach} \mech: if $\side = \OUT$ (Line~\ref{line:pp:side-in}), or if an assertion is violated (Lines~\ref{line:pp:assert-intrusion} or~\ref{line:pp:assert-challenge}).
Let $\mu^{\mathit{PP}, \side}$ be the sequence of queries that actually reach $\mech$ for $\side \in \{\IN, \OUT\}$ (i.e., all calls to Lines~\ref{line:pp:call-to-m-1} and~\ref{line:pp:call-to-m-2}).

\begin{claim} Consider an execution of the adversary $\adv$ above in the pan-privacy game with mechanism $\mech$. Then, with probability 1:
\label{claim:queries-GGV-PP}\hfill
\begin{enumerate}
    \item $\mu^{\mathit{PP},\IN} = \mu^{\mathit{GGV},\mathit{real}}$, and
    \item $\mu^{\mathit{PP},\OUT} = \mu^{\mathit{GGV}, \mathit{ideal}}$.
\end{enumerate}
\end{claim}

\begin{proof}
\emph{Part 1 $(\side = \IN)$:}
As explained before the claim, every query to $\mech$ made by the emulated $\cont$ corresponds to a call to $\textsc{Activate}$ (and vice-versa).
Every call to $\textsc{Activate}$ results in a query in $\mu^{\mathit{PP},\IN}$ unless $\side = \OUT$ or one of the assertions in \Cref{alg:adaptive_PP_game}  (Lines~\ref{line:pp:assert-intrusion} and~\ref{line:pp:assert-challenge}) is violated. Because $\side = \IN$, it suffices to show that the assertions are never violated.
One assertion requires that $\intru$ is only called once (Line~\ref{line:pp:assert-intrusion}). This is by the construction of $\adv$ above.

The second assertion requires that $\chall$ is called only once and only with $\op \neq \tick$ (Line~\ref{line:pp:assert-challenge}).
This follows three properties of $\mu^{\mathit{GGV}}$. First, $\chall$ is only called when $\subj$ makes a non-\delete query.
Second, $\subj$ cannot make $\tick$ queries.
Finally, $\mu^{\mathit{GGV}}$ contains at most one non-\delete query from $\subj$. This follows automatically from the restrictions on \subj in the deletion-as-confidentiality game; specifically, after \subj makes a \delete query to $\cont$, it may not send any additional messages to $\cont$.

\emph{Part 2 $(\side = \OUT)$:}
By construction of $\adv$, $\subj$'s queries in $\mu^{\mathit{GGV}}$ causes $\adv$ to send a $\chall$ query in the pan-privacy game. The only difference from Part 1 is that these queries are never sent to $\mech$ in the pan-privacy game (Line~\ref{line:pp:side-in} of \Cref{alg:adaptive_PP_game}). And, although the emulated \subj receives $\bot$ responses from \adv (on behalf of $\cont$)--which it would not get in the ideal world--\subj can not send messages to \env and so can not influence the view of \env at all.
\end{proof}

Next we show that $Z^{(\IN)}$ is distributed as $(V_\env^{\real}, \s_{\cont}^{\real})$. Consider a real-world GGV execution where all parties use the same randomness as in $\adv$'s emulated execution. If $\side = \IN$, the view of the real-world GGV environment $\view_\env^{\mathit{GGV},real}$ is identical to the view of the emulated environment $\view_\env^{\mathit{PP},\IN}$.
From $\env$'s view, the only difference between the two executions is how  $\mech$ is implemented and queried. By Part 1 of \Cref{claim:queries-GGV-PP}, the queries processed by $\mech$ are identical in the two executions. Hence $\env$'s views of the two executions are identical. Likewise, the states $\s_\cont =(\s_\dict \| \s_\mech)$ are identical in the two executions.

Finally we show that $Z^{(\OUT)}$ is distributed as $(\view_\env^{\ideal}, \s_{\cont}^{\ideal})$. Consider an ideal-world GGV execution where all parties use the same randomness as in $\adv$'s emulated execution. In the ideal GGV execution, \subj cannot send messages to $\cont$, whereas in the emulated execution it can; however, by part 2 of \Cref{claim:queries-GGV-PP}, the queries processed by $\mech$ are identical in the two executions. Hence $\env$'s views of the two executions are identical. Likewise, the states $\s_{\cont} =(\s_\dict \| \s_\mech)$ are identical in the two executions.

Lastly, $Z^{(\side)}$ is a post-processing of $V_{\adv, \mech}^{(\side)}$ in the adaptive pan-privacy game. Hence, by event-level $(\eps, \delta)$-adaptive-pan-privacy with continual release, we have $Z^{(\IN)} \approxequiv{\eps, \delta} Z^{(\OUT)}$. Finally, since the dictionary \dict is SHI, the state $\s_\dict$ in the emulation is identically distributed to $\s_\dict$ in the GGV real and ideal worlds (since in all cases \subj's removal from \dict is undetectable). This implies that $(V_\env^{\real}, \s_{\cont}^{\real}) \approxequiv{\eps, \delta} (\view_\env^{\ideal}, \s_{\cont}^{\ideal})$, completing the proof.
\end{proof}

\begin{proof}[Proof of \Cref{thm:PPshi_filter}]
Our proof combines \Cref{lem:PP_to_conf} and \Cref{thm:conf-cont}.
To reduce clutter, let $\cont = \contpp$ throughout this proof.
Fix an $(\eps, \delta)$-adaptive event-level pan-private controller \mech with continual release. Then, by \Cref{lem:PP_to_conf}, the controller $\cont$ described in \Cref{alg:event-to-user} satisfies $(\eps, \delta)$-deletion-as-confidentiality.
Recall that deletion-as-confidentiality only considers the subset of data subjects $\subjLift$ satisfing two requirements. First, \subj does not send messages to \env. Second, \subj only \delete's when instructed to by \env. If we restrict to this class \subjLift of data subjects, then by \Cref{thm:conf-cont}, $\cont$ is $(\eps, \delta)$-deletion-as-control compliant for $\subjLift$.

What remains to argue is that for $\cont$ specifically, we can restrict to $\subjLift$ without loss of generality (using the same simulator \Sim as for $\subj \in \subjLift$).
It suffices to show that for any $(\env,\subj)$ there exists $(\env',\subj')$ with $\subj' \in \subjLift$ such that the transcripts in the executions $\exec{\cont(\Rand_\cont), \env, \subj}$ and $\exec{\cont(\Rand_\cont), \env', \subj'}$ are identical.
Intuitively, this holds because $\cont$ only ever sends $\bot$ to $\subj$. Recall that $\subj$ cannot make any $\tick$ queries, which is the only way for $\mech(\op(\id))$ (or $\cont$) to produce non-$\bot$ output. Hence, the view of $\subj$ can easily be simulated by $\env$, and the restriction that $\subj$ does not communicate with $\env$ is without loss of generality. From there, we can restrict \subj to be a dummy party without loss of generality, satisfying the requirements of $\subjLift$.

Fix any \env, \subj. \Cref{fig:PP_emulation} depicts
$\env'$ and $\subj'$. The subject $\subj'$ forwards messages from $\env'$  to $\cont$, but sends no messages to $\env'$. The environment $\env'$ emulates $\env$ and \subj in its head and emulates their interactions with other parties as follows:
\begin{itemize}[itemsep=0.3pt, topsep=4pt, partopsep=2pt]
    \item $\env'$ passes  messages between emulated parties \env and \subj freely.
    \item $\env'$ forwards  queries from \env intended for $\cont$; similarly, $\env'$ forwards  responses from $\cont$ to \env.
    \item $\env'$ forwards  queries from \subj intended for $\cont$ to the external $\subj'$ (who then forwards this to $\cont$).
    When reactivated, $\env'$ sends the response $\bot$ to \subj.
\end{itemize}

\begin{figure}
    \centering
    \includegraphics[scale=0.5]{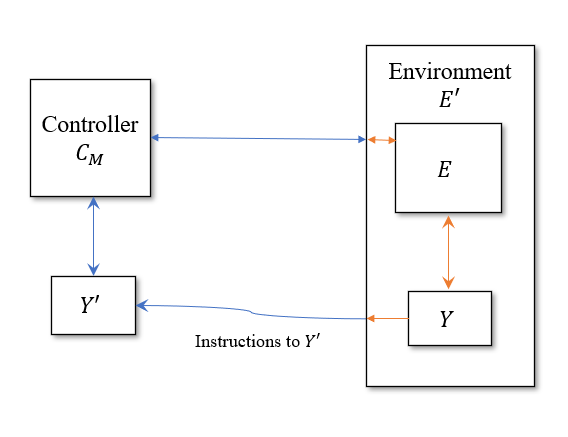}
    \caption{Construction of $\env'$ and $\subj'$ with internal emulated parties \env, \subj interacting with $\contpp$..}
    \label{fig:PP_emulation}
\end{figure}
By construction, $\subj'$ satisfies the condition that it only sends \delete if $\env'$ tells it to, and it never sends any message to $\env'$. Hence, $\subj' \in \subjLift$.
By the construction in \Cref{alg:event-to-user}, $\cont$ always sends $\bot$ in response to $\subj$'s queries.
Using this, it is easy to show that the emulation is perfect. Hence, the transcripts seen by \cont in the two executions  $\exec{\cont(\Rand_\cont), \env, \subj}$ and $\exec{\cont(\Rand_\cont), \env', \subj'}$ are identical.

\end{proof}

\subsection{Examples of pan-private controllers}
\label{sec:PP-examples}

\subsubsection{Counters}
\label{sec:PP-counters}

Perhaps the most widely studied algorithm that is pan-private under continual release is the tree mechanism of \citet{DworkNPR10,chan2011private}. Although originally formulated without pan-privacy, it relies on the composition of summations over different time intervals. Each of these can be made pan-private against a single intrusion by adding noise when the counter is initialized and again at release \citep{dwork2010pan}. The overall mechanism retains the same asymtptotic guarantees, since the noise at most doubles, and enjoys pan-privacy. We summarize its properties here:

\begin{proposition}[Combining \cite{dwork2010pan} with \cite{DworkNPR10,chan2011private}]\label{prop:pp-cr-counter}
    There is an $(\eps,\delta)$-adaptively event-level pan-private under continual-release algorithm that takes a time horizon $T$ and  a sequence of inputs $\datum_1,\datum_2, ... \in [0,1]$ and at the $t$-th \tick, for $t\in [T]$, releases $\out_t$ such that
    \[
        \out_t = \sum_{\substack{i\,:\,{x_i} \text{ received}\\ \text{by time } t}} \datum_i + Z_t,
    \]
    where $Z_t$'s are   distributed (independently of $x_i$'s but not each other) as $Z_t\sim N(0, \sigma_t^2)$ with $\sigma_t = O\left(\frac{\sqrt{\log^3 T \log(1/\delta)}}{\eps}\right)$.
\end{proposition}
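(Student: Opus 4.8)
The claim combines two known ingredients: the binary-tree (a.k.a.\ factorization) mechanism for continual counting of \citet{DworkNPR10,chan2011private}, and the pan-privacy enhancement of \citet{dwork2010pan}. I must produce a single mechanism that (i) is adaptively event-level pan-private under continual release against one intrusion in the sense of Definition~\ref{def:adaptive_PP}, and (ii) at each tick $t$ releases a noisy prefix sum with Gaussian noise of standard deviation $\sigma_t = O\!\left(\frac{\sqrt{\log^3 T \log(1/\delta)}}{\eps}\right)$.

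**Plan.** The plan is to instantiate the binary tree mechanism over the time horizon $T$, then add noise \emph{twice} per tree node---once when the node's partial counter is initialized and once at the release---so that a single state intrusion reveals nothing about whether the challenge element was present. First I would recall the structure: the tree mechanism maintains $O(\log T)$ running partial sums, one per level of a binary tree over the $T$ time steps; each stream item $x_i\in[0,1]$ is added into the $O(\log T)$ active nodes whose dyadic interval contains its time step, and each prefix sum $\out_t$ is reconstructed as a sum of at most $\log T$ completed node counters. The key modification from \citet{dwork2010pan} is that each node counter carries \emph{initialization noise} (sampled when the node begins accumulating) in addition to \emph{release noise} (added when the node is finalized and its value is committed to output). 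The state $\s_\mech$ at any moment consists only of the currently-active, already-noised partial counters together with the clock; crucially it does not contain the future release noise.

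**Pan-privacy argument.** The core of the proof is to show event-level pan-privacy against a single adaptive intrusion. I would argue as follows. A single challenge item $x^*$ affects exactly the $O(\log T)$ node counters along one root-to-leaf path. The adversary's view $V_{\mech,\adv}^{(\side)}$ decomposes into (a) the released outputs $\Out$ and (b) the one-time snapshot $\s_\mech$ at the intrusion time. Each affected node contributes to the view in at most two ways: through its released value (seen in $\Out$ if finalized before the game ends) and through its snapshot value (seen in $\s_\mech$ if active at intrusion time). Because each node has \emph{both} initialization and release noise, I can account for the node's contribution to $\s_\mech$ using the initialization noise and its contribution to $\Out$ using the release noise, so that the presence or absence of $x^*$ in a given node is masked by an independent Gaussian in each of the two channels. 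Summing the privacy cost across the $O(\log T)$ affected nodes via Gaussian-mechanism analysis and advanced (or zCDP) composition over the levels of the tree yields $(\eps,\delta)$-indistinguishability, with the stated $\sigma_t = O\!\left(\frac{\sqrt{\log^3 T\,\log(1/\delta)}}{\eps}\right)$; the doubling of noise from the two noise draws only changes constants. The adaptivity of $\adv$ (choosing stream items, challenge time, and intrusion time as a function of past outputs) is handled by appealing to the adaptive composition guarantees in the continual-release model of \citet{jain2022price}, on which Definition~\ref{def:adaptive_PP} is built; the one-intrusion restriction is exactly what lets me charge the snapshot to initialization noise without a blow-up.

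**Main obstacle.** The hard part will be cleanly coordinating the two noise sources against an adaptive intrusion placed at an adversarially chosen time, since a node may be \emph{both} snapshotted (while active) and later released. I expect to need the invariant that, for any fixed intrusion time, each node's initialization noise and release noise are independent and each individually suffices to hide $x^*$'s contribution to its respective channel---so that no node leaks through both channels simultaneously in a correlated way. Verifying that the per-node accounting composes to the claimed per-tick variance $\sigma_t$ (rather than a larger bound) and that the asymptotics survive the adaptive composition theorem is the main technical step; the rest is bookkeeping over the dyadic decomposition. Since this proposition merely repackages \citet{dwork2010pan,DworkNPR10,chan2011private} into the adaptive one-intrusion model, I would present it largely by reduction to those results, filling in only the adaptivity and the event-level (insert/delete a single item) neighboring relation.
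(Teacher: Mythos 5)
Your proposal follows essentially the same route as the paper, whose entire justification for this proposition is the one-paragraph sketch preceding it: the tree mechanism of \citet{DworkNPR10,chan2011private} is a composition of interval sums, each of which is made pan-private against a single intrusion by adding noise once at the counter's initialization and again at release (per \citet{dwork2010pan}), so the noise at most doubles and the asymptotic guarantee is preserved. Your more detailed per-node accounting of the two noise channels (snapshot masked by initialization noise, released outputs masked by release noise) and the composition over the $O(\log T)$ affected nodes is a faithful elaboration of exactly that argument.
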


In Section~\ref{sec:simul-comp}, we use the tree mechanism as a building block to construct a controller that satisfies deletion-as-control but not history independence, confidentiality, nor differential privacy. %
Moreover, one can use the tree mechanism to construct a controller that satisfies deletion-as-control but not history independence (cf.\  Proposition~\ref{prop:one-shot-DP}).
The controller could, for example, use the tree mechanism to count the number of distinct users $n$ it has seen. Applying Theorem~\ref{thm:PPshi_filter}, a controller could publish $n + N(0, \sigma_t^2)$ even after all users have requested deletion, all while satisfying deletion-as-control.
\emph{Such functionality is impossible under history independence.}

\subsubsection{Learning under Pan-Private Continual Release (and with Deletion)}
\label{sec:PP-MUL}

\newcommand{\modelset}{\Theta}
\newcommand{\loss}{\ell}
\newcommand{\Loss}{\cL}

As an example, we show how the results of this section allow one to maintain a model trained via gradient descent with relatively little noise---the algorithm's error is similar to that of state of the art federated machine learning algorithms.

Consider the following optimization problem: given a loss function $\loss: \modelset\times \universe\to \reals$ and a data set $\data = (\datum_1,...,\datum_n)$, we aim to find $\model \in \modelset$ which approximately minimizes
\begin{equation} \label{eq:loss}
    \Loss_{\data}(\model) = \sum_{i=1}^n \loss(\model; \datum_i) \, .
\end{equation}

To analyze convergence, we assume a convex loss function. Nevertheless, the method we give applies more broadly---the privacy or deletion guarantees hold regardless of convexity, and the algorithms makes sense as long as the loss function is roughly convex near its minimum.

Algorithms for this problem that are differentially private under continual release were studied for online learning and efficient distributed learning~\citep{KairouzM00TX21}. The algorithm they consider is not panprivate. However, it only access the data via  the tree-based mechanism for continual release of a sum \cite{DworkNPR10,chan2011private} (in this case, releasing the sum of the gradients of users' contributions to the overall loss function over the evaluation of a first-order optimization algorithm). We observed above that this can be made panprivate with only a constant factor increase in the added noise (\Cref{prop:pp-cr-counter})

To allow us to apply the convergence analysis of \citet{KairouzM00TX21} as a black box, we assume: 
\begin{inparaenum}
\item $\modelset$ is convex, and $\loss(\cdot ; \datum)$ is convex for every choice of $\datum$; 
\item $\loss$ is $1$-Lipschitz in $\model$ (that is, suppose $\modelset \in \reals^d$ and for all $\model, \model'$ and $\datum$, we have $\abs{\ell(\model;\datum) - \ell(\model';\datum)} \leq  \norm{\model-\model'}{2}$);
\item One new user arrives between adjacent \tick{}s, though deletions may occur at any time.
\end{inparaenum}

\begin{proposition}[Derived from Theorem 5.1 from \textit{arXiv v3} of \citet{KairouzM00TX21}]
    There is an $(\eps,\delta)$-adaptively pan-private, continual-release algorithm that takes takes a time horizon $T$, 
    a learning rate $\lambda>0$, 
    and  a sequence of inputs $\datum_1,\datum_2, ...$ and at the $t$-th \tick  
    releases a model $\model_t$ such that, for every $t \in [T]$,  
    data set $\data_t \in \univ^n$ (consisting of records received by tick $t$), 
    $\model^* \in \modelset$, and $\beta>0$, 
    with probability at least $1-\beta$ over the coins of the algorithm,
    
        \begin{align*}
            \Loss_{\data_t} (\model_t)  - \Loss_{\data_t} (\model^*)=  O\paren{\norm{\model^*}{} \cdot  \paren{\frac 1 {\sqrt{n}} +  d^{1/4}  \sqrt{\frac{\ln^2(1/\delta) \ln (1/\beta)}{\eps n}}   }  }
        \end{align*}

\end{proposition}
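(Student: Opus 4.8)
The plan is to obtain the claimed mechanism $\mech$ by instantiating the gradient-based continual-release optimizer of \citet{KairouzM00TX21} with the pan-private counter of Proposition~\ref{prop:pp-cr-counter} in place of the (non-private) tree-aggregation counter at its core, and then to argue separately that pan-privacy is inherited by the composition and that the accuracy degrades by at most a constant factor. Concretely, $\mech$ maintains a current model $\model_t \in \modelset$ and touches the data stream only to form, when a new user arrives carrying $\datum_i$, the gradient $\mathsf{g}_i = \nabla_\model \loss(\model_t; \datum_i)$ at the current model; by $1$-Lipschitzness $\norm{\mathsf{g}_i}{2} \le 1$. These gradients are fed, one per stream element and coordinatewise after the usual rescaling, into the pan-private sum mechanism, which at each $\tick$ releases a noisy running sum $\widehat{G}_t = \sum_{i \le t} \mathsf{g}_i + Z_t$ (sum over users received by tick $t$) with $Z_t \sim N(0,\sigma_t^2 I)$ and $\sigma_t = O\paren{\sqrt{\log^3 T \log(1/\delta)}/\eps}$. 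The released model $\model_{t+1}$ is then the \emph{deterministic} FTRL/projected-gradient update applied to $\widehat{G}_t$ (with fixed rate $\lambda$); no randomness other than the counter's touches the data. This matches the structure highlighted before the proposition: the only data-dependent state is that of the counter.

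For pan-privacy I would reduce Definition~\ref{def:adaptive_PP} for $\mech$ to the adaptive event-level pan-privacy of the counter. Given an adversary $\adv$ against $\mech$, build a counter-adversary $\cB$ that runs $\adv$ internally: whenever $\adv$ submits a user (a $\reg$ or $\chall$ element), $\cB$ computes the corresponding gradient at the model it has reconstructed from the counter's released sums, forwards that gradient as the matching counter stream element, and post-processes each released sum back into a model $\model_t$ before handing it to $\adv$. Event-level neighbors for $\mech$ (differing by one user) map to event-level neighbors for the counter (differing by one gradient), and when $\adv$ intrudes, $\cB$ intrudes on the counter and returns $\s_\mech$ together with the current model, which is itself a post-processing of the already-released outputs. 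Hence $\cB$'s two views for $\side \in \{\IN,\OUT\}$ being $(\eps,\delta)$-close pushes, through post-processing, to the same bound for $\adv$'s two views. Because the gradient at step $i$ depends on $\model_t$ and thus on \emph{previously released} sums, this reduction genuinely requires the \emph{adaptive} guarantee of Proposition~\ref{prop:pp-cr-counter}.

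For accuracy I would invoke Theorem~5.1 of \citet{KairouzM00TX21} as a black box. Its regret analysis for convex $1$-Lipschitz losses bounds $\Loss_{\data_t}(\model_t) - \Loss_{\data_t}(\model^*)$ in terms of the standard deviation of the noise injected into the cumulative gradients: the noiseless online term contributes the $\paren{1/\sqrt{n}}$ summand, while Gaussian noise of scale $\sigma_t$ in $d$ dimensions contributes the $d^{1/4}\sqrt{\ln^2(1/\delta)\ln(1/\beta)/(\eps n)}$ summand (the $\sqrt{\ln(1/\beta)}$ from a high-probability Gaussian tail bound, the $d^{1/4}$ from the FTRL dependence on dimension). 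Since Proposition~\ref{prop:pp-cr-counter} inflates the counter noise by only a constant factor over the tree mechanism used by \citet{KairouzM00TX21}, the stated bound follows up to the hidden constant (and suppressed polylogarithmic factors in $T$). Feeding the resulting $\mech$ into Algorithm~\ref{alg:event-to-user} then yields, via Theorem~\ref{thm:PPshi_filter}, a model-maintaining controller satisfying deletion-as-control with this accuracy.

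I expect the pan-privacy reduction---specifically the treatment of the intrusion---to be the main obstacle. Unlike ordinary output privacy, pan-privacy must protect an adversary who also sees one snapshot of $\mech$'s internal state, which contains $\model_t$. The crux is to guarantee that this entire snapshot is reconstructible as a randomness-free post-processing of the counter's state snapshot together with the outputs released so far, so that intruding on $\mech$ is no more revealing than intruding on the counter. I would therefore make the algorithm explicit enough to maintain the invariant that $\mech$ stores only the counter's state plus ephemeral, deterministically derived quantities (the clock, the rate $\lambda$, and the current model), ruling out any data-dependent auxiliary state that an intrusion could expose beyond what the counter already reveals.
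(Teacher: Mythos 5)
Your proposal is correct and follows essentially the same route as the paper, which states this proposition without a formal proof and justifies it by observing that the algorithm of \citet{KairouzM00TX21} touches the data only through the tree-based continual-release gradient sum, so substituting the pan-private counter of Proposition~\ref{prop:pp-cr-counter} costs only a constant factor in noise and the convergence analysis of their Theorem~5.1 applies as a black box. Your explicit reduction for the adaptive pan-privacy claim (the counter-adversary, the post-processing of the intrusion snapshot, and the invariant that the mechanism's state is the counter's state plus deterministically derived quantities) fills in details the paper leaves implicit, and does so correctly.
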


This result immediately yields a controller that satisfies $(\eps,\delta)$-deletion as control and maintains a model whose accuracy tracks that of the best model trained on all the arrivals so far (where repeated arrivals of the same user are ignored). The version above uses a fixed learning rate $\lambda$ which can be tuned to get error $\tilde O(d^{1/4}/\sqrt n)$ for any particular $n$; however, one could also decrease the learning rate as $1/\sqrt{t}$ to get bounds that hold for all data set sizes. 

In contrast to the HI-style algorithms proposed for machine unlearning \cite{gupta2021adaptive}, this controller need not update the model when a user is deleted.

\section{Parallel Composition}
\label{sec:simul-comp}

\newcommand{\SET}{\ensuremath{\mathsf{set}}\xspace}
\newcommand{\LOOKUP}{\ensuremath{\mathsf{lookup}}\xspace}
\newcommand{\INCREMENT}{\ensuremath{\mathsf{increment}}\xspace}
\newcommand{\GET}{\ensuremath{\mathsf{get}}}
\newcommand{\COUNT}{\ensuremath{\mathsf{getCount}}\xspace}
\newcommand{\users}{\ensuremath{\mathcal{U}}\xspace}

In this section, we show that our definition does more than unify approaches to deletion based on history independence, confidentiality, and differential privacy. We describe a controller that satisfies deletion-as-control but none of the other three notions. 
To prove that the controller satisfies deletion-as-control, we prove
that the definition enjoys a limited form of \emph{parallel
  composition}: a controller that is built from two constituent
sub-controllers, each of which  is run independently
(Figure~\ref{fig:simul-composition}), will be deletion compliant if
the component controllers are compliant and satisfy additional
conditions. 
We focus on a special case that suffices for our needs, and leave a general treatment of composition of deletion-as-control for future work.

To anchor the discussion, we consider the touchstone composed controller ``Public Directory with DP Statistics,'' described in
Algorithm~\ref{alg:directory-with-statistics}. It provides a public
directory (e.g., a phone book) which, in addition to answering
directory queries, periodically reports the total
number of users that have made queries to the directory.
We build the touchstone controller \cont from two SHI dictionaries $\dict$ and $\users$
(Section~\ref{sec:SHI_DICT}) and the pan-private tree mechanism
$\mech$ (Proposition~\ref{prop:pp-cr-counter}).  $\dict$ implements
the directory functionalities of reading and writing.  $\mech$ keeps
track of an approximate count of users---current and former---that have looked up an entry in $\dict$ for the first $T$ epochs (e.g., the time between $\tick$s). $\users$ is an auxiliary dictionary used to ensure that $\mech$ counts distinct users.

\begin{algorithm}
\caption{Public directory with DP statistics}
\label{alg:directory-with-statistics}
    \begin{algorithmic}[1]

        \Procedure{Initialize}{time horizon $T \in \N$, privacy parameters $\eps, \delta$} \Comment{Run on first activation}
       
            \State Initialize two SHI dictionaries $\dict$ for the directory and $\users$ for users.
            \State 
            Initialize the tree mechanism $\mech$ with parameters $(\eps,\delta,T)$
        \EndProcedure
        \Statex
        \Procedure{Activate}{$\cid, \op(arg)$}
            \If{$\op = \delete$}
                \State $\dict.\delete(\cid)$
                \State $\users.\delete(\cid)$
                \State \Return $\bot$
            \EndIf
            \If{$\op = \SET$}
                \State $\dict.\SET(\cid, arg)$
                \State \Return $\bot$
            \EndIf
            \If{$\op = \GET$}
                \If{$\cid \notin \users$}
                    \State $\users.\SET(\cid, 1)$
                    \State Insert the value 1 into $\mech$'s data stream
                \EndIf \State \Return{$\dict.\GET(arg)$}
            \EndIf
            \If{$\op = \COUNT$}
                \State \Return $\mech(\tick)$
            \EndIf
        \EndProcedure
    \end{algorithmic}
\end{algorithm}

    \begin{figure}
        \centering
        \includegraphics[scale=0.7]{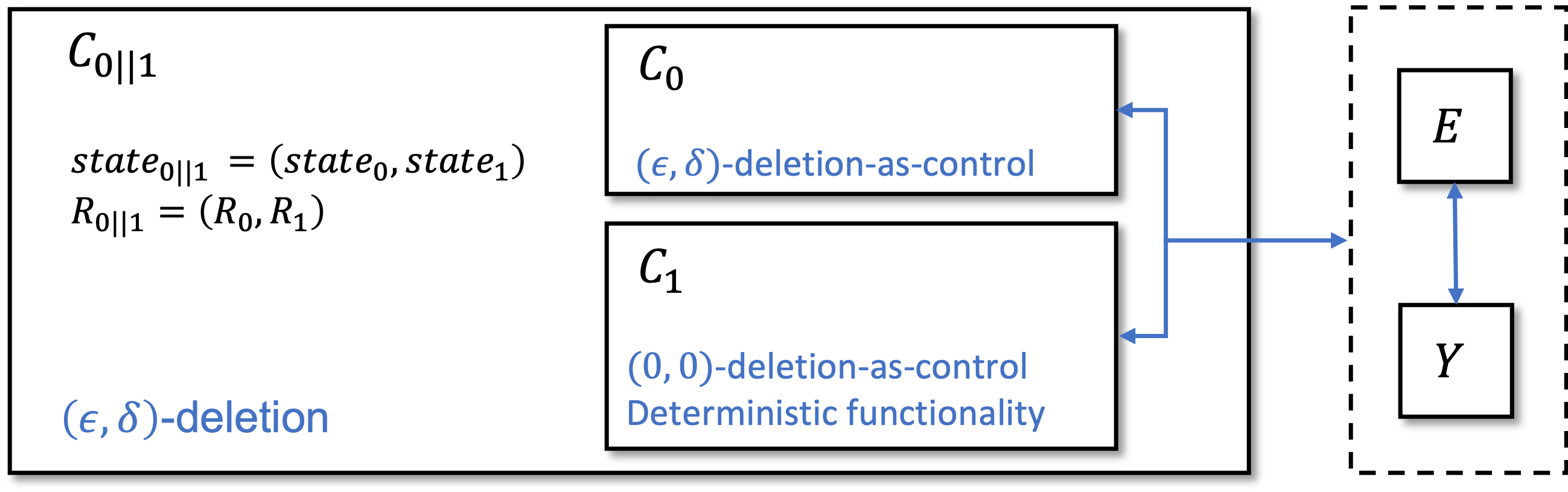}
        \caption{Parallel composition of controllers $\cont_0$ and $\cont_1$.}
        \label{fig:simul-composition}
    \end{figure}

To prove that $\cont$ satisfies deletion-as-control, we prove that one can build deletion-as-control controllers from two constituent sub-controllers by parallel composition. We consider the special case where one sub-controller satisfies $(0,0)$-deletion-as-control and implements a \emph{deterministic functionality}, and where both sub-controllers are \emph{query-response controllers}. We define these next.
Note that the controllers given in Section~\ref{sec:adaptive-HI} and Section~\ref{sec:DP} are all query-response controllers. 
While we restrict our attention to a single controller of each type, the result immediately extends by induction to many query-response controllers satisfying $(0,0)$ deletion-as-control with deterministic functionalities.

\begin{definition}[Deterministic functionality]
\label{def:deterministic-func}
A controller $\cont$ implements a \emph{deterministic functionality} if for all $\env$, $\subj$, the transcript of $\exec{\cont(\Rand_\cont), \env,\subj}$ is independent of $\Rand_\cont$.
\end{definition}

\begin{definition}[Query-response controller]
\label{def:query-response}
$\cont$ is a \emph{query-response} controller if it always replies to the same party that activated it. Namely, when $\cont$ is activated with $(\cid,\msg)$ on its input tape, it always halts with $(\cid, \msg')$ on its output tape for some $\msg'$.
\end{definition}

The example controllers we discuss are generally query-response. But
a messaging server, for example, might not be: a request from Alice to
send something Bob could result in a push from the server to
Bob.

\begin{definition}[Parallel composition]
\label{def:simul-comp}
Let $\cont_0$ and $\cont_1$ be controllers. We define their \emph{parallel composition}, denoted $\cont_{0\|1}$, to be the controller that emulates $\cont_0$ an $\cont_1$ internally. When activated with $\op_\id$, $\cont_{0\|1}$ computes $\out_0 \gets \cont_0(\op_\id)$ and $\out_1\gets \cont_1(\op_\id)$ and returns $\out_{0\|1} = (\out_0, \out_1)$. The controllers $\cont_0$ and $\cont_1$ are emulated with disjoint portions of $\cont_{0\|1}$'s randomness tape $\Rand_{0\|1} = (\Rand_0, \Rand_1)$ and work tape $\s_{0\|1} = (\s_0, \s_1)$.
\end{definition}

\begin{theorem}
\label{thm:simul-comp}
Let $\cont_0$ and $\cont_1$ be query-response controllers and let $C_{0\|1}$ be their parallel composition.
If $\cont_0$ satisfies $(\eps,\delta)$-deletion-as-control, and if $\cont_1$ satisfies $(0,0)$-deletion-as-control and has deterministic functionality, then $\cont_{0\|1}$ satisfies $(\eps,\delta)$-deletion-as-control.
\end{theorem}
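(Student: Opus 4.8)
The plan is to build a simulator for $\cont_{0\|1}$ by running the two given simulators on the two halves of the randomness tape, and then to verify the two conditions of Definition~\ref{def:del-as-cont} by reducing each coordinate to the guarantee of the corresponding sub-controller. By Lemma~\ref{lem:derandomize} I may fix a deterministic $\env$ and $\subj$. The first thing I would record is a structural observation that makes everything downstream tractable: since $\cont_1$ implements a deterministic functionality (Definition~\ref{def:deterministic-func}), every output $\cont_1$ produces is a deterministic function of the query sequence fed to it, independent of $\Rand_1$. Because $\cont_0$ and $\cont_1$ both receive exactly the queries sent to $\cont_{0\|1}$, and $\env,\subj$ are deterministic, it follows that the entire transcript of $\exec{\cont_{0\|1}(\Rand_0,\Rand_1),\env,\subj}$ --- in particular $\queries_\env$, the full query sequence $\queries$, and $\cont_0$'s final state $\s_0$ --- is a deterministic function of $\Rand_0$ alone. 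I then define $\Sim_{0\|1}(\queries_\env,(\Rand_0,\Rand_1),(\s_0,\s_1))$ to output $(\Rand'_0,\Rand'_1)$ where $\Rand'_0\gets\Sim_0(\queries_\env,\Rand_0,\s_0)$ and $\Rand'_1\gets\Sim_1(\queries_\env,\Rand_1,\s_1)$, with $\Sim_0,\Sim_1$ the simulators guaranteed for $\cont_0$ and $\cont_1$.

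Next I would establish the two marginal facts by two reductions. For the $\cont_0$ coordinate, I construct an environment--subject pair $(\env^{(0)},\subj^{(0)})$ for the standalone $\cont_0$ game that emulates $\env$ and $\subj$ while maintaining an internal copy of $\cont_1$; this emulation is possible exactly because $\cont_1$ is a query-response controller with deterministic functionality, so its responses can be computed on the fly without access to $\Rand_1$. This pair reproduces $\cont_0$'s view inside $\cont_{0\|1}$, so $\queries_{\env^{(0)}}=\queries_\env$ and $\cont_0$'s real and ideal states coincide with the $\cont_0$-component in the composed game; the $(\eps,\delta)$-guarantee of $\cont_0$ then yields $\Rand'_0 \approxequiv{\eps,\delta}\Rand_0$ and $\s'_0=\s_0$ except with probability $\delta$. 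For the $\cont_1$ coordinate I argue conditionally: fix $\Rand_0=r_0$, which by the structural observation fixes the whole query sequence $\queries$ and hence $\queries_\env$; a hardcoded replaying pair $(\env^{(1)},\subj^{(1)})$ reproduces $\cont_1$'s view, and the $(0,0)$-guarantee of $\cont_1$ shows that the conditional law of $\Rand'_1$ given $\Rand_0=r_0$ is exactly $\distrand$ and that $\s'_1=\s_1$ with probability $1$.

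To finish I combine these into the joint statement. Since the conditional law of $\Rand'_1$ given $\Rand_0$ equals $\distrand$ for every $r_0$, the variable $\Rand'_1$ is independent of $\Rand_0$ and distributed as $\distrand$; and because $\Rand'_0$ is a function of $\Rand_0$ and $\Sim_0$'s (independent) coins, $\Rand'_1$ is independent of $\Rand'_0$ as well. Thus $(\Rand'_0,\Rand'_1)$ is a product distribution whose coordinates are $(\eps,\delta)$-close and equal, respectively, to those of the real product $(\Rand_0,\Rand_1)=\distrand\times\distrand$. A short product argument --- integrating the single-coordinate $(\eps,\delta)$-bound against the shared second factor --- lifts this to $(\Rand'_0,\Rand'_1)\approxequiv{\eps,\delta}(\Rand_0,\Rand_1)$, giving condition~(1). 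Condition~(2) is then immediate: $\s'_{0\|1}=(\s'_0,\s'_1)=(\s_0,\s_1)=\s_{0\|1}$ unless $\s'_0\neq\s_0$, an event of probability at most $\delta$.

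I expect the main obstacle to be this joint-distribution step, and it is precisely where both hypotheses on $\cont_1$ are essential. The deterministic-functionality assumption is what lets me write $\queries_\env$ and $\s_0$ as functions of $\Rand_0$ only, so that $\Rand'_0$ depends on $\Rand_1$ not at all; the $(0,0)$ (perfect) assumption is what makes the conditional law of $\Rand'_1$ \emph{exactly} uniform rather than merely close, so that $\Rand'_1$ is genuinely independent of $\Rand_0$. With an only approximate or randomized $\cont_1$, the conditional law of $\Rand'_1$ would depend on $r_0$ and the two coordinates would be correlated in a way that does not obviously compose --- mirroring the open difficulty of general composition for deletion-as-control (and for differential privacy) noted earlier in the paper. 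A secondary, purely bookkeeping obstacle is the message-routing in the $\cont_0$ reduction: one must keep the single emulated copy of $\cont_1$ consistent across the interleaving of $\env$'s and $\subj$'s queries, which is routine but needs care.
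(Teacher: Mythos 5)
Your proposal is correct and follows essentially the same route as the paper's proof: the same product simulator $(\Rand'_0,\Rand'_1)=(\Sim_0(\cdot),\Sim_1(\cdot))$, the same use of $\cont_1$'s deterministic functionality to decouple $\Rand'_1$ from $\Rand'_0$ (the paper conditions on $\queries$ where you condition on $\Rand_0$, which is equivalent since $\queries$ is a function of $\Rand_0$), and the same treatment of the state condition. Your explicit reductions to the standalone games for $\cont_0$ and $\cont_1$ spell out a step the paper leaves implicit, and your identification of where each hypothesis on $\cont_1$ is used matches the paper's reasoning.
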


\begin{proof}[Proof of Theorem~\ref{thm:simul-comp}]
Define the parallel simulator \\ $\Sim_{0\|1}(\queries_\env,\Rand_{0\|1}, \s_{0\|1})$ as outputting $\Rand'_{0\|1} = (\Rand'_0,\Rand'_1)$ where each half is computed using the corresponding simulator. That is, for each $i \in \{0,1\}$:
$$\Rand'_i \sim \Sim_i(\queries_\env, \Rand_i,\s_i).$$

\noindent
By hypothesis, all of the following hold for all $\env$, $\subj$.
\begin{itemize}[itemsep=0.3pt, topsep=4pt, partopsep=2pt]
\item $(\Rand'_0 \approxequiv{\eps,\delta} \Rand_0)$ AND $(\Rand'_1 \equiv \Rand_1)$,
\item $(\s_1 = \s'_1)$ OR $(\s_1 = \bot)$
\item With probability at least $1-\delta$:
$(\s_0 = \s'_0)$ OR $(\s_0 = \bot)$.
\end{itemize}

\noindent
We must show that for all $\env$, $\subj$, both of the following hold in the real-ideal probability experiment given in Definition~\ref{def:del-as-cont}.
\begin{itemize}[itemsep=0.3pt, topsep=4pt, partopsep=2pt]
\item $\Rand'_{0\|1} \approxequiv{\eps,\delta} \Rand_{0\|1}$
\item With probability at least $1-\delta$: $\s_{0\|1} = \s'_{0\|1} \neq \bot$ or $\s_{0\|1} = \bot$ (i.e., the real execution doesn't terminate).
\end{itemize}

\noindent
We consider the state first. First, observe the following:
\begin{align*}
(\s_0 = \bot) &\implies (\s_1 = \bot), \text{ and} \\
(\s_0 \neq \bot) &\implies (\s_1 \neq \bot) \implies (\s_1 = \s'_1)
\end{align*}
This is because both $\s_0$ and $\s_1$ result from a single real execution $\exec{\cont_{0\|1}, \env,\subj}$ that either terminates ($\s_{0\|1} = \bot$) or does not ($\s_{0\|1} \neq \bot$). Hence,

\begin{align*}
    &\Pr\left[\s_{0\|1} \in \{\s'_{0\|1}, \bot\}\right]\\
    &\qquad=
    \Pr\left[\s_0 \in \{\s'_0, \bot\}\right] \ge 1-\delta
\end{align*}

.

It remains to show that $\Rand'_{0\|1} \approxequiv{\eps,\delta} \Rand_{0\|1}$. 
Below, we prove that $\Rand'_0$ and $\Rand'_1$ are independent random variables. This suffices because $(\Rand'_0 \approxequiv{\eps,\delta} \Rand_0)$ and $(\Rand'_1 \equiv \Rand_1)$.

Let $\queries$ be the queries received by $\cont_{0\|1}$ in the real execution. By construction, these are also the queries received by the emulated $\cont_0$ and $\cont_1$.  
Observe that for $i \in\{0,1\}$ there exists randomized $S_i$ such
that $\Rand'_i \sim S_i(\Rand_i,\queries)$. Namely, $\Rand'_i$ only
depends on $\Rand_i$ and $\queries$. This is because $\Rand'_i \gets
\Sim_i(\Rand_i, \queries_\env, \s_i)$ where $\queries_\env$ and
$\s_i$ are deterministic functions of $\Rand_i$ and $\queries$
(and the code of $\cont_{0\|1}$, $\env$, $\subj$ which are
fixed).%

The above implies two useful facts. First, $\Rand'_1$ is independent of $\Rand'_0$ conditioned on $\queries$. This follows from the fact that $\Rand'_i \sim S_i(\Rand_i, \queries)$, combined with the fact that $\Rand_1$ and $\Rand_0$ are independent (by construction). Second, $\Rand'_1$ is independent of the random variable $\queries$. That is, for all $\rand'_1$ and $q$, $\Pr[\Rand'_1 = \rand'_1 | \queries = q] = \Pr[S_1(\Rand_1,\queries) = \rand'_1 | \queries = q] = \Pr[S_1(\Rand_1,q) = \rand'_1] = \Pr[\Rand'_1 = \rand'_1].$ The middle equality uses the independence of $\queries$ and $\Rand_1$, which follows from the hypothesis that $\cont_1$ has a deterministic functionality.

Fix $\rand'_0$ and $\rand'_1$. We check independence of $\Rand'_0$ and $\Rand'_1$ as follows. 

    \begin{align*}
    \Pr[\Rand'_0 = \rand'_0 \land \Rand'_1 = \rand'_1]
    &=
    \Pr[\Rand'_0 = \rand'_0] \cdot \sum_{q} \Pr[\queries = q | \Rand'_0 = \rand'_0] \cdot \Pr[\Rand'_1=\rand'_1 | \queries = q, \Rand'_0 = \rand'_0] 
    \\&=
    \Pr[\Rand'_0 = \rand'_0] \cdot \sum_{q} \Pr[\queries = q | \Rand'_0 = \rand'_0] \cdot 
    \Pr[\Rand'_1=\rand'_1 | \queries = q]
    \\&=
    \Pr[\Rand'_0 = \rand'_0] \cdot \Pr[\Rand'_1 =\rand'_1] \cdot  \sum_{q} \Pr[\queries = q | \Rand'_0 = \rand'_0]
    \\&=
    \Pr[\Rand'_0 = \rand'_0]\cdot \Pr[\Rand'_1 = \rand'_1]
\end{align*}

\noindent
The second equality uses the independence of $\Rand'_1$ and $\Rand'_0$ conditioned on $\queries$. The third equality uses the independence of $\Rand'_1$ and $\queries$. 
\end{proof}

\begin{corollary}\label{cor:directory_DP_Stats}
The Public Directory with DP Statistics (Algorithm~\ref{alg:directory-with-statistics}) satisfies $(\eps,\delta)$-deletion-as-control.
\end{corollary}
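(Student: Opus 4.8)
The plan is to realize the controller $\cont$ of Algorithm~\ref{alg:directory-with-statistics} as a parallel composition $\cont_{0\|1}$ in the sense of Definition~\ref{def:simul-comp}, and then invoke the composition theorem (Theorem~\ref{thm:simul-comp}). I would split $\cont$ into a \emph{statistics} sub-controller $\cont_0$ — which maintains the SHI dictionary $\users$ together with the tree mechanism $\mech$ — and a \emph{directory} sub-controller $\cont_1$ — which maintains the SHI dictionary $\dict$. Concretely, $\cont_0$ ignores $\SET$; on a $\GET(\cid,arg)$ it performs the membership check against $\users$ (inserting $\cid$ and feeding a $1$ to $\mech$'s stream on the first lookup); on $\COUNT$ it returns $\mech(\tick)$; and on $\delete(\cid)$ it runs $\users.\delete(\cid)$. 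Symmetrically, $\cont_1$ ignores $\COUNT$, answers $\GET$ from $\dict$, performs $\SET$ into $\dict$, and runs $\dict.\delete(\cid)$ on deletion. Because every activation returns to the querying party (with $\bot$ on the irrelevant coordinate), both $\cont_0$ and $\cont_1$ are query-response (Definition~\ref{def:query-response}). The only discrepancy between $\cont$ and $\cont_{0\|1}$ is that the latter outputs the pair $(\out_0,\out_1)$, one coordinate of which is always $\bot$; since selecting the non-$\bot$ coordinate is a fixed post-processing of the output that depends on neither randomness nor state, it preserves $(\eps,\delta)$-deletion-as-control with the same simulator, so it suffices to analyze $\cont_{0\|1}$.

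Next I would verify the two hypotheses of Theorem~\ref{thm:simul-comp}. For $\cont_0$, observe that it is exactly an instance of the $\contpp$ template of Algorithm~\ref{alg:event-to-user}: $\users$ plays the role of the filtering dictionary, $\COUNT$ plays the role of the clock query $\tick$ (which, per the modelling convention of Remark~\ref{remark:env-ticks}, only $\env$ may issue), and ``insert $1$'' is the operation forwarded to $\mech$ on each user's first lookup; ignoring $\SET$ is harmless because $\SET$ touches neither $\users$ nor $\mech$. By Proposition~\ref{prop:pp-cr-counter} the tree mechanism $\mech$ is $(\eps,\delta)$-adaptive event-level pan-private under continual release, so Theorem~\ref{thm:PPshi_filter} shows that $\cont_0$ is $(\eps,\delta)$-deletion-as-control compliant. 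For $\cont_1$, the directory is an implementation of the dictionary ADT, which supports logical deletion ($\seq\|\delete(\cid)\logequiv\seq_{-\cid}$), so taking $\dict$ to be a strongly history independent dictionary (e.g., the deterministic sorted-list implementation of Section~\ref{sec:SHI_DICT}) and applying Theorem~\ref{thm:perfect_shi_del} shows $\cont_1$ is $(0,0)$-deletion-as-control compliant. Using the sorted-list representation also makes $\cont_1$ a deterministic functionality (Definition~\ref{def:deterministic-func}): its transcript — the $\GET$ answers — is a function of the logical contents of $\dict$ alone, hence independent of any randomness.

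With both sub-controllers query-response, $\cont_0$ satisfying $(\eps,\delta)$-deletion-as-control, and $\cont_1$ satisfying $(0,0)$-deletion-as-control with a deterministic functionality, Theorem~\ref{thm:simul-comp} yields that $\cont_{0\|1}$ — and hence $\cont$ — is $(\eps,\delta)$-deletion-as-control compliant, as claimed.

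I expect the main obstacle to be bookkeeping rather than mathematical depth, since all the substantive work already lives in the cited theorems. The two delicate points are: (i) checking that the split is faithful — that $\GET$, the one operation touching both halves, decomposes so that each sub-controller acts only on its own data structure while still receiving every query (which the parallel-composition model of Definition~\ref{def:simul-comp} supplies for free, as both sub-controllers are fed every operation); and (ii) confirming that $\cont_0$ matches the $\contpp$ syntax closely enough to invoke Theorem~\ref{thm:PPshi_filter}, in particular that $\COUNT$ behaves as the clock query reserved for $\env$ and that the always-$\bot$ output coordinate does not leak information across the composition.
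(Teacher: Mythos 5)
Your proposal is correct and follows essentially the same route as the paper: decompose Algorithm~\ref{alg:directory-with-statistics} into the $\contpp$-style statistics sub-controller (handled by Theorem~\ref{thm:PPshi_filter} via Proposition~\ref{prop:pp-cr-counter}) and the SHI-dictionary directory sub-controller (a $(0,0)$-compliant deterministic functionality), then apply Theorem~\ref{thm:simul-comp}. The only divergence is cosmetic---you cite Theorem~\ref{thm:perfect_shi_del} for the $(0,0)$ guarantee where the paper cites Theorem~\ref{thm:AHI_to_del}---and your extra bookkeeping about the paired $\bot$ outputs and the role of $\COUNT$ as the clock query is more explicit than, but consistent with, the paper's argument.
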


\begin{proof}
Let $\cont_0$ be the controller $\contpp$ defined relative to the pan-private tree mechanism $\mech$ (Algorithm~\ref{alg:event-to-user}).
Let $\cont_1$ be the controller relative to a SHI dictionary (Definition~\ref{def:cont_impl}).
By construction, $\cont_0$ and $\cont_1$ are query-response controllers and $\cont_1$ has a deterministic functionality. 
$\cont_0$ satisfies $(\eps,\delta)$ deletion-as-control and $\cont_1$ satisfies $(\eps,\delta)$ deletion-as-control (Theorems~\ref{thm:PPshi_filter} and~\ref{thm:AHI_to_del} respectively).
Algorithm~\ref{alg:directory-with-statistics}  is equivalent to the parallel composition $\cont_{0\|1}$ of $\cont_0$ and $\cont_1$. 
By parallel composition, Algorithm~\ref{alg:directory-with-statistics} satisfies $(\eps,\delta)$ deletion-as-control.
\end{proof}

\section{Conclusion}
\label{sec:conclusion}

Defining deletion-as-control in a way that is both expressive and meaningful is the central challenge of our work. We believe that our definition succeeds, providing a new perspective to the ongoing discussion on how to give users control over their data.

More work is needed to understand how deletion-as-control handles the complexity of real-life functionalities. For example, we are far from understanding the implications for something like Twitter, though the Public Bulletin Board serves as a starting point. Analyzing complex functionalities may involve further studying the adaptive variants of history independence and pan-privacy defined in this work.
Beyond specific functionalities, more work is needed to interpret the guarantees provided by our notion. In particular, we have a limited view of what can be said about groups of individuals and about composition. 

These are directions for future technical work, but also for normative, legal, and policy considerations. 
Consider, for example, the goal of machine unlearning: maintaining a model while respecting requests to delete. The algorithms in the machine unlearning literature seek to approximate what one would get by retraining from scratch (that is, history independence). But using adaptive pan-privacy, one can satisfy deletion-as-control without updating the model in response to deletion requests. 
Each behavior might be appropriate for a different setting, depending on the most relevant measure of model accuracy. The difference may also relate to whether one adopts an individual- or group-based view of a right to erasure. 
We hope that our work inspires further exploration of these questions.

\section*{Acknowledgements}

    We are grateful to helpful discussions with many colleagues, notably Kobbi Nissim. A.C. by the National Science Foundation under Grant No. 1915763 and by the DARPA SIEVE program under Agreement No. HR00112020021. A.S. and M.S. were supported by NSF awards CCF-1763786 and CNS-2120667, as well as faculty research awards from Google and Apple. P.V. was supported by the National Research Foundation, Singapore, under its NRF Fellowship programme, award no. NRF-NRFF14-2022-0010.

\bibliographystyle{abbrvnat}
\bibliography{bibliography}{}

\appendix

    \section{Lemmas on Couplings and Marginal Distributions}\label{app:coupling-lemma}

\begin{lemma}[Coupling Lemma]
\label{lem:approx_coupling}
Let $P,Q$ be probability distributions on sets $\X$ and $\cY$, respectively, and let $f:\X\to\cZ$ and $g:\cY\to\cZ$ be (deterministic) functions with the same codomain $\cZ$. 
Suppose that $f(X)\approx_{\eps,\delta} g(Y)$ when $X\sim P$ and $Y\sim Q$. 

Consider the collection of distributions $\{Q_x \in \Delta(\cY): x\in \X\}$ on the set $\cY$, where $Q_x$ denotes the distribution on $Y$ conditioned on the event that $g(Y)=f(x)$ (that is $Q_x = Q|_{\set{y: g(y)=f(x)}}$). In the case that $g^{-1}(f(x))$ is empty, $Q_x$ assigns probability to a default value  $\bot \in \cY$.

If we select $X\sim P$ and then sample $Y'\sim Q_X$ (so that $\Pr(Y'=y|X=x) = Q_x(y)$), then 
\begin{enumerate}[itemsep=0.3pt, topsep=4pt, partopsep=2pt]
    \item $Y'\approx_{\eps,\delta} Y$, and
    \item $f(X)=g(Y')$ with probability at least $1-\delta$ over $(X,Y)$.
\end{enumerate}
\end{lemma}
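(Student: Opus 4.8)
The plan is to reduce both conclusions to a single structural observation: the law of $Y'$ and the law of $Y$ differ only in how the ``$g$-label'' of the sample is chosen, while the conditional law given that label is identical. Let $\mu$ be the law of $g(Y)$ (the pushforward of $Q$ under $g$) and $\nu$ the law of $f(X)$ (the pushforward of $P$ under $f$), both distributions on $\cZ$; the hypothesis $f(X)\approxequiv{\eps,\delta} g(Y)$ is exactly the statement $\nu\approxequiv{\eps,\delta}\mu$. I would then disintegrate $Q$ along $g$: there is a family $\{Q_z\}_{z\in\cZ}$ of probability measures, defined for $\mu$-almost every $z$ on a measurable set $S$ with $\mu(S)=1$, such that $Q_z$ is supported on $g^{-1}(z)$ and $Q=\int_\cZ Q_z\,d\mu(z)$; for $z\notin S$ set $Q_z=\delta_\bot$. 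This $Q_z$ coincides with the lemma's $Q_x$ whenever $z=f(x)$ (in the discrete case it is literally $Q$ conditioned on the positive-probability event $g(Y)=f(x)$), and the empty-/null-fiber convention of the lemma is exactly the case $f(x)\notin S$. The key identity is that sampling $X\sim P$ then $Y'\sim Q_X$ has the same law as sampling $z\sim\nu$ then $Y'\sim Q_z$, since $Q_X$ depends on $X$ only through $f(X)$, whose marginal is $\nu$. Writing $K$ for the kernel $z\mapsto Q_z$, the law of $Y'$ is $K(\nu)$, while the law of $Y$ is $K(\mu)=Q$ (the disintegration, with no mass on $\bot$ because $\mu(S)=1$).

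For conclusion (1), I would invoke the post-processing property of $(\eps,\delta)$-indistinguishability: applying the same kernel $K$ to $\nu$ and to $\mu$ preserves the relation, so $K(\nu)\approxequiv{\eps,\delta} K(\mu)$, i.e.\ $Y'\approxequiv{\eps,\delta}Y$. Concretely, fix a measurable event $F\subseteq\cY$, set $h(z)=Q_z(F)\in[0,1]$, and note $\Pr[Y'\in F]=\E_{\nu}[h]$ and $\Pr[Y\in F]=\E_{\mu}[h]$. The expectation form of $\nu\approxequiv{\eps,\delta}\mu$, obtained from the event form via the layer-cake identity $\E[h]=\int_0^1\Pr[h>t]\,dt$, gives $\E_{\nu}[h]\le e^{\eps}\E_{\mu}[h]+\delta$, and symmetrically for the reverse direction.

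For conclusion (2), observe that $f(X)=g(Y')$ holds whenever $f(X)=z\in S$, since then $Y'\sim Q_z$ is supported on $g^{-1}(z)$. Hence the failure event $\{f(X)\ne g(Y')\}$ is contained in $\{f(X)\in\cZ\setminus S\}$, which has probability $\nu(\cZ\setminus S)$. Because $\mu(\cZ\setminus S)=0$, applying $\nu\approxequiv{\eps,\delta}\mu$ to the event $\cZ\setminus S$ yields $\nu(\cZ\setminus S)\le e^{\eps}\cdot 0+\delta=\delta$, exactly the claimed bound.

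The main obstacle is purely measure-theoretic: justifying the disintegration $\{Q_z\}$ and the full-measure set $S$ in the general (possibly continuous) setting, and confirming that the lemma's empty-/null-fiber convention lines up with $z\notin S$ so that the $\bot$-mass is precisely the quantity bounded in (2). In the discrete case this is all elementary—$S$ is the support of $\mu$, $Q_z$ is ordinary conditioning, and $\cZ\setminus S$ carries zero $\mu$-mass by definition—and the general case follows from the standard disintegration theorem on standard Borel spaces, so I expect this to be a matter of citing the right background rather than a genuine difficulty.
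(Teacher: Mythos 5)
Your proposal is correct and is essentially the paper's own argument: your kernel $K:z\mapsto Q_z$ is exactly the paper's randomized inverse $G^*$, conclusion (1) is obtained the same way via post-processing of $f(X)\approxequiv{\eps,\delta} g(Y)$, and conclusion (2) is the same bound obtained by applying the indistinguishability relation to the event (null under the law of $g(Y)$) on which $Q_{f(X)}$ collapses to $\bot$. The only difference is that you phrase the conditioning via a formal disintegration with a full-measure set $S$, which is a slightly more careful treatment of the measure-theoretic edge cases the paper elides.
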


For example, consider an $(\eps,\delta)$-DP mechanism $\mech$ using randomness $\Rand\sim\distrand$, and let $\Ds,\Ds'$ be a pair of neighboring inputs. 
Take $f(\Rand) = \mech(\Ds;\Rand)$ and $g(\Rand) = \mech(\Ds'; \Rand)$.
Lemma~\ref{lem:approx_coupling} gives a way to jointly sample randomness $(\Rand,\Rand')$ such that $\Rand \sim \distrand$, $\Rand'\approxequiv{\eps,\delta} \distrand$, and with probability at least $1-\delta$: $\mech(\Ds;\Rand) = \mech(\Ds';\Rand')$. 
For simple DP mechanisms, it is easy to understand what this sampler does. For instance, let $\Lap(1/\eps;\Rand)$ be an algorithm that samples from the Laplace distribution when $\Rand \sim \distrand$ is uniform. Consider $\mech(\Ds;\Rand) = f(\Ds) + \Lap(1/\eps;\Rand)$ for some $f$ with sensitivity 1. The sampler in Lemma~\ref{lem:approx_coupling} samples $\Rand\sim \distrand$, and samples $\Rand'$ uniformly conditioned on $\Lap(1/\eps;\Rand') = f(\Ds') - f(\Ds)$.

\begin{proof}
Consider the randomized map $G^*:\cZ\to \cY$ that inverts $g$ for inputs drawn from $Q$---namely, $G^*(z)$ returns a random element according to $Q|_{\set{y: g(y)=z}}$ and returns the default element $\bot$ with probability 1 if $g^{-1}(z)$ is empty. Running $G^*$ on $g(Y)$ returns a sample distributed identically to $Y$ conditioned on $g(Y)$. That is, 
$$ \Bparen{g(Y),\ G^*(g(Y))} \equiv \Bparen{g(Y),\  Y}. $$
Since the $\approx_{\eps,\delta}$ relation is preserved by processing, and since $f(X)\approx_{\eps,\delta} g(Y)$, we have 
$$ \Bparen{g(Y),\  G^*(g(Y))} \approx_{\eps,\delta}  \Bparen{f(X),\  G^*(f(X))}. $$
Now for every $x\in \X$, the random variable $G^*(f(x))$ is distributed as $Q_x$. Therefore, $G^*(f(X))$ is distributed as $Y'$ from the theorem statement and  $Y' \approx_{\eps,\delta} Y$ (the first requirement of the theorem). Furthermore, $g(Y')$ equals $f(X)$ except when $f(X)$ lies outside the image of $g$. Since $f(X) \approx_{\eps,\delta} g(Y)$,  the value $f(X)$ lies outside the image of $g$ with probability at most $\delta$. This establishes the  theorem's second requirement.
\end{proof}

\para{A computable analogue of Lemma~\ref{lem:approx_coupling}}

To demonstrate deletion-as-control compliance, we need a simulator to sample ideal-world randomness $\Rand'_\cont$ whose marginal distribution is close to the uniform distribution while also guaranteeing that $\s'_\cont = \s_\cont$. A natural strategy is to sample $\Rand'_\cont$ from distribution conditioned on the equality of the states. Applying Lemma~\ref{lem:approx_coupling}, the resulting distribution suffices for our purposes in this paper, where $X=\Rand_\cont$ and $Y = \Rand'_\cont$. 
But there is a technicality that needs to be dealt with. In full generality, the conditional distributions $Q_x$ in that lemma may not be sampleable in finite time. This will not be a problem for us, as our applications of the lemma will be restricted to functions that depend only a finite prefix of $\Rand_\cont$ and $\Rand'_\cont$.
For completeness, we state a computable analogue of Lemma~\ref{lem:approx_coupling}. Elsewhere, we will elide this technicality and invoke the simpler Lemma~\ref{lem:approx_coupling}.

\begin{lemma}[Computable Coupling Lemma]
\label{lem:approx_coupling_computable}
Let $P$, $Q$, $\X$, $\cY$, $\cZ$, $f$, $g$, and $Q_x$ as in Lemma~\ref{lem:approx_coupling}.
Let $\X = \bit{*} = \cY$.
Suppose that $g(y)$ depends only on a finite prefix of $y\in \bit{*}$.  Suppose also that for all $x \in \bit{*}$, either $f(x)$ depends only on a finite prefix of $x \in \bit{*}$, or  $g^{-1}(f(x)) =\emptyset.$
Then for any $x$, $Y\sim Q_x$ can be written as $Y_1 \| Y_2$ where $|Y_1|$ is finite and $Y_2$ is uniform in $\bit{*}$ and independent of $Y_1$.
\end{lemma}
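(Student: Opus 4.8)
The plan is to fix $x$ and analyze the conditional distribution $Q_x = Q|_{\{y : g(y) = f(x)\}}$ directly, exhibiting it as a discrete choice of a finite prefix followed by independent uniform bits. First I would dispose of the degenerate case: if $g^{-1}(f(x)) = \emptyset$ then $Q_x$ is the point mass on the default value $\bot$ and the claim holds trivially (take $Y_1 = \bot$ and $Y_2$ empty). So assume $g^{-1}(f(x)) \neq \emptyset$, in which case the hypothesis on $f$ guarantees that $z := f(x)$ is determined by, and computable from, a finite prefix of $x$; hence $z$ is a fixed element of $\cZ$.

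The key structural step uses that $g(y)$ depends only on a finite prefix of $y$. This means that for every $y$ with $g(y) = z$ there is a finite prefix $w$ of $y$ with $E_w \subseteq g^{-1}(z)$, so the event $B := g^{-1}(z)$ is a union of cylinder sets. I would then pass to the antichain $W$ of minimal such prefixes, $W = \{w \in \bit{*} : E_w \subseteq B \text{ and no proper prefix of } w \text{ satisfies this}\}$. Since distinct minimal prefixes are incomparable, the cylinders $\{E_w\}_{w \in W}$ are pairwise disjoint and $B = \bigsqcup_{w \in W} E_w$. Crucially, every $y \in B$ lies in exactly one $E_w$ with $w \in W$ finite, so the ``which cylinder'' index is finite for every point of $B$.

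Next I would compute the conditional measure. Under $Q = \distrand$ (i.i.d.\ fair bits), $Q(E_w) = 2^{-|w|}$ and $Q(B) = \sum_{w \in W} 2^{-|w|} > 0$. Sampling $Y \sim Q_x = Q(\cdot \mid B)$ therefore amounts to (i) drawing the unique index $W^* = w \in W$ with probability $2^{-|w|}/Q(B)$, and then (ii) drawing the bits in positions after $|w|$. By the product structure of the uniform measure on $\bit{\bbN}$, these trailing bits are uniform and independent of everything read so far. Setting $Y_1 := W^*$ (finite almost surely) and $Y_2 :=$ the trailing bits ($\sim \distrand$, independent of $Y_1$) yields exactly the claimed decomposition $Y = Y_1 \| Y_2$.

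The main obstacle is the measure-theoretic bookkeeping in this last step: verifying that conditioning the i.i.d.\ uniform measure on the cylinder union $B$ genuinely factors as (discrete finite prefix) $\times$ (independent uniform suffix), and in particular that $Y_2$ stays uniform and independent of $Y_1$. I would make explicit the two facts it rests on: the disjointness of the cylinders from the antichain property (so $W^*$ is well-defined), and the memorylessness of the uniform product measure (so the bits beyond a fixed-length prefix are unaffected by conditioning on that prefix). The finiteness of $|Y_1|$ is precisely where the hypothesis that $g$ depends only on a finite prefix is indispensable---without it, some $y \in B$ might require infinitely many bits to certify membership, and no finite $Y_1$ could suffice.
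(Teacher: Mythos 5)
Your proposal is correct, and in fact the paper states Lemma~\ref{lem:approx_coupling_computable} without giving any proof, so there is no authorial argument to compare against; your cylinder-set decomposition is the natural way to supply one. The core of your argument is sound: since $g$ is locally determined by finite prefixes, the event $B = g^{-1}(f(x))$ is a countable disjoint union of cylinders $E_w$ over the antichain $W$ of minimal certifying prefixes, and conditioning the i.i.d.\ uniform product measure on such a union factors exactly as (discrete choice of $w\in W$ with probability $2^{-|w|}/Q(B)$) followed by an independent uniform suffix --- the key point being that the conditional law of the suffix is the \emph{same} uniform law for every $w$, which is what gives independence of $Y_2$ from $Y_1$. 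Two small remarks, both traceable to looseness in the lemma statement rather than to your reasoning. First, you (correctly) read $Q$ as the uniform tape distribution $\distrand$; the statement only says ``as in Lemma~\ref{lem:approx_coupling},'' where $Q$ is arbitrary, but the conclusion is false for general $Q$ and the intended application ($Q$ the distribution of $\Rand'_\cont$) forces $Q = \distrand$, so this reading is the only tenable one. Second, in the degenerate case $g^{-1}(f(x)) = \emptyset$ the conclusion cannot literally hold (a point mass on $\bot$ is not a finite prefix followed by a uniform suffix); your disposal of it by fiat matches how the paper's own setup treats $\bot$ as a default value, but it is worth noting explicitly that the decomposition claim should be understood as applying only when $g^{-1}(f(x)) \neq \emptyset$. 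Finally, your observation that the finite-prefix hypothesis on $f$ is about computability of $z = f(x)$ rather than about the distributional claim (since $x$ is fixed) is accurate and worth keeping, as it explains why that hypothesis is phrased as a disjunction with $g^{-1}(f(x)) = \emptyset$.
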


In our applications, $X$ and $Y$ will be $\Rand_\cont$ and $\Rand'_\cont$ respectively. $f$ and $g$ will be functions of the real and ideal executions, respectively. By construction, an ideal execution is always finite, and therefore $g(Y)$ will only depend on a finite prefix of $\Rand'_\cont$. A real execution is either finite---in which case $f(X)$ only depends on a finite prefix of $\Rand_\cont$---or it is infinite. If it is infinite the real execution (and $f$) outputs $\bot$, whereas the ideal execution (and $g$) can never output $\bot$.

\begin{lemma}[Conditioning Lemma \citep{KasiviswanathanS08}]\label{lem:conditioning}
    Let $(A,B)$ and $(A',B')$ be pairs of random variables over the same domain $\cA\times \cB$ such that $(A,B)\approxequiv{\eps,\delta}(A',B')$. Then for all $\delta'> 0$, with probability at least $1-\gamma$ over $a\gets A$, we have 
    \[ 
        B|_{A=a} \approxequiv{\eps',\delta'} B'|_{A'=a} \,,
    \]
    where $\eps' = 3\eps$ and $\gamma = \frac{2\delta}{\delta'} + \frac{2\delta}{1 - \exp(-\eps)}$.
\end{lemma}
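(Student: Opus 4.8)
The plan is to reduce the statement to two ingredients: a pointwise ``privacy‑loss tail'' bound for $(\eps,\delta)$-indistinguishable pairs, and a Markov argument applied to the per‑value conditional slack. Throughout I will write $p,q$ for densities of $(A,B),(A',B')$ (with respect to a common dominating measure), $p_A,q_A$ for the marginals of $A,A'$, and $p_{B\mid A},q_{B\mid A}$ for the conditionals, and I use the standard reformulation that $P\approxequiv{\eps,\delta}Q$ is equivalent to the two ``hockey‑stick'' bounds $\int (p-e^{\eps}q)_+\le\delta$ and $\int(q-e^{\eps}p)_+\le\delta$. The key auxiliary fact I would prove first is: if $P\approxequiv{\eps,\delta}Q$ then $\Pr_{w\sim P}[\,p(w)>e^{2\eps}q(w)\,]\le \tfrac{\delta}{1-e^{-\eps}}$. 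This is immediate: letting $S=\{w:p(w)>e^{2\eps}q(w)\}$ we have $Q(S)<e^{-2\eps}P(S)$, so $P(S)\le e^{\eps}Q(S)+\delta< e^{-\eps}P(S)+\delta$, giving $P(S)(1-e^{-\eps})\le\delta$. I would also record that projecting out $B$ shows $A\approxequiv{\eps,\delta}A'$ (apply the joint guarantee to product events $E_A\times\cB$).

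First I would control the marginals. Define the \emph{good marginal set} $M=\{a: e^{-2\eps}\le p_A(a)/q_A(a)\le e^{2\eps}\}$. Applying the tail bound above to $A\approxequiv{\eps,\delta}A'$ in the direction $P=A$ bounds the $A$-mass of $\{p_A>e^{2\eps}q_A\}$ by $\tfrac{\delta}{1-e^{-\eps}}$; applying it in the direction $P=A'$ bounds the $A'$-mass of $\{q_A>e^{2\eps}p_A\}$ by $\tfrac{\delta}{1-e^{-\eps}}$, and on this latter set $p_A\le e^{-2\eps}q_A$, so its \emph{$A$-mass} is even smaller than $\tfrac{\delta}{1-e^{-\eps}}$. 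A union bound then yields $\Pr_{a\sim A}[a\notin M]\le \tfrac{2\delta}{1-e^{-\eps}}$, which is exactly the second term of $\gamma$.

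Next I would bound the expected conditional slack on $M$, where the exponent $3\eps=\eps+2\eps$ splits into a base $\eps$ from the joint guarantee and $2\eps$ absorbed from the marginal ratio. For the forward direction,
\[
\int_{M} p_A(a)\,\Big(\!\!\int_{\cB}\!\big(p_{B\mid A}(b\mid a)-e^{3\eps}q_{B\mid A}(b\mid a)\big)_+\,db\Big)\,da=\int_{M}\!\!\int_{\cB}\big(p(a,b)-e^{3\eps}\tfrac{p_A(a)}{q_A(a)}q(a,b)\big)_+\,db\,da,
\]
and since $e^{3\eps}\,p_A(a)/q_A(a)\ge e^{\eps}$ on $M$, each integrand is at most $(p-e^{\eps}q)_+$, so the whole quantity is at most $\int(p-e^{\eps}q)_+\le\delta$. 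The reverse direction is handled symmetrically (swapping the roles of the two pairs and again using that $p_A/q_A\in[e^{-2\eps},e^{2\eps}]$ on $M$), contributing the same order $\delta$ to the summed slack. Markov's inequality applied to $\E_{a\sim A}\big[\mathbf 1_M\cdot(\text{forward slack}+\text{reverse slack})\big]\le 2\delta$ then shows that the $A$-probability that some conditional slack exceeds $\delta'$ is at most $\tfrac{2\delta}{\delta'}$. Combining with the marginal bad set and noting that on good $a$ the marginal ratio bound upgrades the base $e^{\eps}$ to $e^{3\eps}$, I get that $B\mid_{A=a}\approxequiv{3\eps,\delta'}B'\mid_{A'=a}$ except for $a$ in a set of $A$-probability at most $\tfrac{2\delta}{\delta'}+\tfrac{2\delta}{1-e^{-\eps}}=\gamma$.

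The main obstacle I anticipate is the \emph{reverse direction under the single measure $A$}: the natural density manipulation reweights the reverse conditional divergence by $p_A/q_A$, which on $M$ is only bounded by $e^{2\eps}$ rather than $1$, so a literal execution produces a slack bound of the form $(1+e^{2\eps})\delta$ instead of a clean $2\delta$. Matching the exact constants in the statement therefore requires carefully choosing whether to measure the reverse slack under $A$ or under $A'$ and then transferring the resulting failure event back to $A$ using the marginal closeness; this bidirectional bookkeeping — rather than any conceptual difficulty — is the delicate part, and everything else (the tail bound, the projection, and the forward estimate) is routine.
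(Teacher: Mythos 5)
You should know at the outset that the paper contains no proof of Lemma~\ref{lem:conditioning}: it is imported by citation from \citep{KasiviswanathanS08}, so there is no in-paper argument to compare against and your proposal must be judged on its own. Your strategy---the privacy-loss tail bound, the good marginal set $M$, and Markov's inequality applied to the conditional hockey-stick slack---is the standard route and surely the intended one. The tail bound $\Pr_{w\sim P}[p(w)>e^{2\eps}q(w)]\le \delta/(1-e^{-\eps})$, the resulting estimate $\Pr_{a\sim A}[a\notin M]\le \frac{2\delta}{1-e^{-\eps}}$, and the forward-direction bound $\E_{a\sim A}\bigl[\mathbf{1}_M\cdot(\text{forward slack})\bigr]\le\delta$ are all correct as you present them.

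The genuine gap is the reverse direction, and contrary to what you suggest, it is not ``bidirectional bookkeeping'': no choice of measure and transfer recovers the stated constants. Measuring the reverse slack under $A$ costs the factor $p_A/q_A\le e^{2\eps}$, as you note; measuring it under $A'$ gives expected slack at most $\delta$, but the failure set $T\subseteq M$ then has only its $A'$-mass bounded by $\delta/\delta'$, and carrying it back to $A$ costs either the pointwise factor $e^{2\eps}$ (from $p_A\le e^{2\eps}q_A$ on $M$) or, using $A \approxequiv{\eps,\delta} A'$, yields $P_A(T)\le e^{\eps}\delta/\delta'+\delta$. Either way the first term of $\gamma$ comes out at least $(1+e^{\eps})\delta/\delta'$, not $2\delta/\delta'$. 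In fact this loss is intrinsic, so no completion of your argument can reach the constants as stated: take $\cB=\{0,1\}$ and three atoms of $\cA$---an atom $f$ with $p_A(f)=q_A(f)=\delta/\delta'$, where $B|_{A=f}$ puts mass just above $\delta'$ on $1$ while $B'|_{A'=f}$ is a point mass on $0$; an atom $r$ with $q_A(r)=\delta/\delta'$ and $p_A(r)=e^{\eps}(1-\delta')\,q_A(r)$, where $B|_{A=r}$ is a point mass on $0$ while $B'|_{A'=r}$ puts mass just above $\delta'$ on $1$; and a sink carrying the remaining mass with identical conditionals (take $\delta/\delta'$ small enough that all masses are nonnegative and the sink contributes to neither hockey stick). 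The two joints are then $(\eps,\delta)$-indistinguishable---$f$ exhausts the forward hockey stick, $r$ the reverse one, and the calibration $p_A(r)=e^{\eps}(1-\delta')q_A(r)$ makes $r$'s forward contribution vanish---yet $(3\eps,\delta')$-indistinguishability of the conditionals fails at both $f$ and $r$, a set of $A$-mass $(1+e^{\eps}(1-\delta'))\delta/\delta'$. This exceeds $\frac{2\delta}{\delta'}+\frac{2\delta}{1-e^{-\eps}}$ whenever $\delta'<\frac{(e^{\eps}-1)(1-e^{-\eps})}{1+e^{\eps}}$; for instance $\eps=0.1$, $\delta'=10^{-3}$, $\delta=3\cdot10^{-4}$ gives failure mass $\approx 0.62$ against $\gamma\approx 0.61$. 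What your argument does prove---and what should be recorded, and is all that the paper actually needs in Proposition~\ref{prop:AHI_MUL}, where only the shape of the bound matters---is the lemma with $\gamma=\frac{(1+e^{2\eps})\delta}{\delta'}+\frac{2\delta}{1-e^{-\eps}}$, or $\gamma=\frac{(1+e^{\eps})\delta}{\delta'}+\delta+\frac{2\delta}{1-e^{-\eps}}$ via the $A'$-side Markov argument.
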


\section{Differential Privacy Preliminaries}
\label{app:dp-prelims}

A dataset $\Ds = (\ds_1, \ldots, \ds_n) \in \universe^n$ is a vector of elements from universe \universe. Two datasets are {\em neighbors} if they differ in at most one coordinate. Informally, differential privacy requires that an algorithm's output distributions are similar on all pairs of neighboring datasets. We use two different variants of differential privacy. The first one (and the main one used in this paper) is the standard definition of differential privacy.

\begin{definition}[Differential Privacy~\citep{DMNS16,DworkKMMN06}]\label{def:differentially private} A randomized algorithm $\mech: \universe^n \rightarrow \mathcal{Y}$ is {\em $(\eps, \delta)$-differentially private} if, for every pair of neighboring datasets $\Ds, \Ds'\in \universe^n$, 
the distributions of $\mech(\Ds)$ and $\mech(\Ds')$ are defined for the same $\sigma$-algebra $\Sigma_\cY$ and
$$\mech(\Ds) \approxequiv{\eps,\delta} \mech(\Ds') \, .$$
 \end{definition}

Differential privacy protects the privacy of groups of individuals, and is closed under post-processing. Moreover, it is closed under adaptive composition. For a fixed dataset \Ds, {\em adaptive composition} states that the results of a sequence of computations satisfies differential privacy even when the chosen computation $\mech_t(\cdot)$ at time $t$ depends on the outcomes of previous computations $\mech_1(\Ds), \ldots, \mech_{t-1}(\Ds)$. Under adaptive composition, the privacy parameters add up.

\begin{lemma}[Post-Processing]\label{prelim:postprocess} If $\mech: \universe^n \rightarrow \mathcal{Y}$ is $(\eps, \delta)$-differentially private, and $\mathcal{B} : \mathcal{Y} \rightarrow \mathcal{Z}$ is any randomized function, then the algorithm $\mathcal{B} \circ \mech$ is $(\eps, \delta)$-differentially private.
\end{lemma}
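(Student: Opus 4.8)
The plan is to verify the two defining inequalities of $(\eps,\delta)$-differential privacy for the composed map $\cB \circ \mech$ directly from those for $\mech$, handling a deterministic $\cB$ first and then lifting to randomized $\cB$ by averaging.

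First I would fix an arbitrary pair of neighboring datasets $\Ds,\Ds' \in \universe^n$. By hypothesis $\mech(\Ds) \approxequiv{\eps,\delta} \mech(\Ds')$, meaning that for every event $F \in \Sigma_\cY$ we have $\Pr[\mech(\Ds)\in F] \le e^\eps \Pr[\mech(\Ds')\in F] + \delta$, and symmetrically. The goal is to establish the analogous pair of inequalities for the distributions of $\cB(\mech(\Ds))$ and $\cB(\mech(\Ds'))$ over every event $E \in \Sigma_\cZ$.

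Next I would treat a deterministic measurable $\cB : \cY \to \cZ$. For any $E \in \Sigma_\cZ$ the preimage $F := \cB^{-1}(E)$ lies in $\Sigma_\cY$, and $\Pr[\cB(\mech(\Ds)) \in E] = \Pr[\mech(\Ds) \in F]$. Applying the privacy inequality of $\mech$ to the event $F$ and then rewriting $\Pr[\mech(\Ds')\in F] = \Pr[\cB(\mech(\Ds'))\in E]$ yields $\Pr[\cB(\mech(\Ds))\in E] \le e^\eps \Pr[\cB(\mech(\Ds'))\in E] + \delta$; the symmetric bound follows identically. Hence $\cB \circ \mech$ is $(\eps,\delta)$-DP whenever $\cB$ is deterministic.

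Finally I would handle randomized $\cB$ by viewing its internal coins $r$, drawn from a distribution $\mu$ independent of $\mech$'s coins, as selecting a deterministic map $\cB_r$, so that the output distribution of $\cB(\mech(\Ds))$ is the mixture $\int \cB_r(\mech(\Ds))\, d\mu(r)$. By the deterministic case, for each fixed $r$ and each $E$ we have $\Pr[\cB_r(\mech(\Ds))\in E] \le e^\eps \Pr[\cB_r(\mech(\Ds'))\in E] + \delta$. Integrating this inequality against $\mu$ and using $\int d\mu(r) = 1$ (so the additive $\delta$ survives unchanged) gives the same bound for the two mixtures, and symmetrically, completing the proof. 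The step demanding the most care is this last one: one must represent $\cB$ as a measurable family $\{\cB_r\}$ of deterministic maps with $r$ independent of $\mech$, and justify the termwise integration of the two $(\eps,\delta)$ inequalities. This is exactly the closure of $\approxequiv{\eps,\delta}$ under convex combinations already invoked in Lemma~\ref{lem:derandomize}, and it is what upgrades deterministic post-processing to the randomized statement.
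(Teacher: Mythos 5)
Your proof is correct and is the standard argument for post-processing: reduce to the deterministic case via preimages of events, then lift to randomized $\mathcal{B}$ by conditioning on its coins and integrating the two $(\eps,\delta)$ inequalities, using that the additive $\delta$ is preserved under convex combination. The paper states this lemma as a known preliminary (attributed to the differential privacy literature) and does not include a proof, so there is nothing to diverge from; your argument is the canonical one, and your closing remark correctly identifies the convexity of $\approxequiv{\eps,\delta}$ as the only step requiring care, the same property the paper invokes in Lemma~\ref{lem:derandomize}.
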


\begin{lemma}[Group Privacy]\label{prelim:group_privacy} Every $(\eps, \delta)$-differentially private algorithm \mech is $\left(k\eps, \delta\frac{e^{k\eps} -1}{e^\eps-1}\right)$-differentially private for groups of size $k$. That is, for all datasets $\Ds, \Ds'$ such that $\|\Ds - \Ds' \|_0 \leq k$ and all subsets $Y \subseteq \mathcal{Y}$,
\begin{equation*}
    \Pr[\mech(\Ds) \in Y] \leq e^{k\eps} \cdot \Pr[\mech(\Ds') \in Y] + \delta \cdot \frac{e^{k\eps} -1}{e^\eps-1}.
\end{equation*}
\end{lemma}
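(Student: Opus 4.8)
The plan is to prove the statement by induction on the group size $k$, reducing the claim about datasets at Hamming distance $k$ to $k$ successive applications of the single-coordinate guarantee. First I would fix $\Ds, \Ds'$ with $\|\Ds - \Ds'\|_0 \leq k$ and interpolate between them: construct a chain $\Ds = \Ds_0, \Ds_1, \ldots, \Ds_k = \Ds'$ in which each consecutive pair $\Ds_i, \Ds_{i+1}$ differs in at most one coordinate (replacing the differing coordinates of $\Ds$ by those of $\Ds'$ one at a time). Every such pair is a pair of neighbors, so the base-level $(\eps,\delta)$-DP guarantee of $\mech$ applies to each link of the chain.

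The base case $k=1$ is exactly the definition of $(\eps,\delta)$-differential privacy (Definition~\ref{def:differentially private}). For the inductive step, I assume the bound holds for groups of size $k-1$ and apply it to the pair $\Ds_0, \Ds_{k-1}$ (which differ in at most $k-1$ coordinates), then combine it with the single-step guarantee for the neighbors $\Ds_{k-1}, \Ds_k = \Ds'$. Writing $p_i = \Pr[\mech(\Ds_i) \in Y]$, this chains as
\begin{align*}
p_0 &\leq e^{(k-1)\eps}\, p_{k-1} + \delta\,\frac{e^{(k-1)\eps} - 1}{e^\eps - 1} \\
&\leq e^{(k-1)\eps}\paren{e^\eps p_k + \delta} + \delta\,\frac{e^{(k-1)\eps} - 1}{e^\eps - 1}.
\end{align*}
The key calculation is verifying that the accumulated additive error collapses to the claimed closed form, namely $e^{(k-1)\eps} + \frac{e^{(k-1)\eps} - 1}{e^\eps - 1} = \frac{e^{k\eps} - 1}{e^\eps - 1}$, which is a one-line manipulation after clearing the denominator $e^\eps - 1$ (the $\pm e^{(k-1)\eps}$ terms cancel in the numerator). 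Substituting yields $p_0 \leq e^{k\eps}\, p_k + \delta\,\frac{e^{k\eps} - 1}{e^\eps - 1}$, completing the induction.

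Since the neighboring relation is symmetric and the chain construction is symmetric in $\Ds$ and $\Ds'$, the reverse inequality follows identically, so the two output distributions are in fact $\bigl(k\eps,\ \delta\frac{e^{k\eps}-1}{e^\eps-1}\bigr)$-indistinguishable in the sense of $\approxequiv{k\eps,\ \delta(e^{k\eps}-1)/(e^\eps-1)}$. I do not anticipate a genuine obstacle: this is a classical fact, and the only point requiring care is the bookkeeping of the additive $\delta$ terms — in particular recognizing that each passage through the induction hypothesis scales the previously accumulated error by the multiplicative factor $e^{(k-1)\eps}$ rather than adding it naively, which is precisely what produces the geometric-series identity highlighted above.
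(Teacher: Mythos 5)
Your proof is correct: the hybrid chain $\Ds = \Ds_0, \ldots, \Ds_k = \Ds'$, the induction on $k$, and the geometric-series identity $e^{(k-1)\eps} + \frac{e^{(k-1)\eps}-1}{e^\eps-1} = \frac{e^{k\eps}-1}{e^\eps-1}$ all check out, and this is exactly the standard argument the lemma rests on. The paper itself states this as a known preliminary (Lemma~\ref{prelim:group_privacy}) without giving a proof, so there is nothing to diverge from; your write-up supplies the omitted standard derivation correctly, including the key point that the accumulated additive error is scaled by $e^{(k-1)\eps}$ at each step rather than added naively.
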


\begin{definition}[Composition of $(\eps, \delta)$-differential privacy]\label{prelim:composition} Suppose $\mech$ is an adaptive composition of differentially private algorithms $\mech_1, \ldots, \mech_T$. If for each $t \in [T]$, algorithm $\mech_t$ is $(\eps_t, \delta_t)$-differentially private, then \mech is $\left(\sum_t \eps_t, \sum_t \delta_t\right)$-differentially private.
\end{definition}

\section{Strong History Independence and Deletion-as-control}
\label{app:perfect_shi_del}

We show that strong history independence implies deletion-as-control. Throughout, we use SHI to mean $(0,0)$-SHI.

A foundational result in the study of history independent data structures  characterizes the behavior of SHI implementations of a large class of ADTs called reversible ADTs.

\begin{definition}[Reversible ADT] 
An ADT is \emph{reversible} if, for all logical states $A$ and $B$ reachable from the initial state, there exists some finite sequence of operations $\seq$ such that $\adt(A, \seq).\s = B$. That is, all reachable states are \emph{mutually reachable}.
\end{definition}

\begin{theorem}[\cite{hartline2005characterizing}]
\label{thm:shi_canonical} 
A reversible data structure is SHI if and only if each logical state has a single canonical memory representation for every setting of the implementation's randomness tape.
\end{theorem}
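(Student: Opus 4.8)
The plan is to prove both directions of the equivalence, with the reverse implication (canonical representation $\Rightarrow$ SHI) essentially immediate and the forward implication (SHI $\Rightarrow$ canonical) carrying all the weight. For the reverse direction, I would suppose that for every reachable logical state $\stateLog$ and every setting $\rand$ of the randomness tape there is a single representation $R(\stateLog,\rand)$, independent of the operation sequence used to reach $\stateLog$. Then any two logically equivalent sequences $\seq \logequiv \seq'$ reach the same logical state $\stateLog$, so $\impl(\seq;\rand).\s = R(\stateLog,\rand) = \impl(\seq';\rand).\s$ for every $\rand$. This yields not only equality of distributions, which is exactly \eqref{eqn:SHI:perfect}, but the stronger pointwise-in-$\rand$ agreement; the quantification over initial physical states is handled identically by relativizing $R$ to the starting state.

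For the forward direction I would first exploit reversibility to fix a canonical operation sequence $\canon{\stateLog}$ reaching each logical state $\stateLog$ from the initial state, and take $R(\stateLog,\cdot)$ to be the representation produced by running $\impl$ on $\canon{\stateLog}$. The goal is then to show that an arbitrary sequence $\seq$ with logical state $\stateLog$ produces exactly $R(\stateLog,\rand)$ for every $\rand$, not merely in distribution. The main tool is to append \emph{return paths}: for sequences $\seq \logequiv \seq'$ both reaching $\stateLog$, and any operation sequence $\pi$ taking $\stateLog$ to some other reachable state, SHI applied to $\seq\|\pi \logequiv \seq'\|\pi$ shows that the representation distributions at the new state coincide. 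Because the ADT is reversible, each operation acts invertibly on physical representations (there is a sequence undoing it), so the induced update maps are injective; ranging $\pi$ over enough return paths then forces the two original representations to agree rather than merely match in distribution.

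The hard part will be upgrading SHI's distribution-level guarantee to the functional, randomness-pointwise statement demanded by the canonical form. Distributional equality of representations at $\stateLog$ says only that the multiset of representations over random tapes matches, not that a fixed tape yields a fixed representation independent of history. Reversibility is exactly what bridges this gap: the invertibility of operations lets one track a single random tape through a loop of operations returning to $\stateLog$, and the read-once (committed) randomness of the model ensures that the bits consumed along the way are accounted for consistently, so that any history dependence of the representation would survive as a detectable difference in some appended distribution, contradicting SHI. I expect the bulk of the formal argument to be this bookkeeping, establishing that the only way for all the appended-distribution equalities to hold simultaneously is for the representation to be a deterministic function of $\stateLog$ and $\rand$.
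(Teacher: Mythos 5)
This theorem is not proved in the paper at all: it is imported verbatim from \cite{hartline2005characterizing} and used as a black box (in the proofs of Theorem~\ref{thm:perfect_shi_del} and Proposition~\ref{prop:one-shot-DP}), so there is no in-paper proof to compare against. Your reverse direction (canonical representation $\Rightarrow$ SHI) is correct and is indeed immediate. The forward direction, however, has two concrete gaps. First, you assert that reversibility makes the induced update maps on \emph{physical} representations injective ``because there is a sequence undoing each operation.'' Reversibility is a property of logical states only; a return sequence is only guaranteed to restore the logical state, not the memory representation, so injectivity on representations is not free. It has to be extracted from SHI itself, via a counting/measure-preservation argument showing that transitions between the representation supports of two logical states are measure-preserving bijections. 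You gesture at this but treat it as following from reversibility alone.

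Second, and more fundamentally, the bridge from distributional equality to pointwise-in-tape equality is not closed by appending return paths, and in fact cannot be closed for the statement read literally. Consider a treap implementing a dictionary in which each insertion draws a fresh priority from the randomness tape: it is distributionally SHI (the priorities are exchangeable i.i.d.), the ADT is reversible, yet a fixed logical state has no single representation per fixed tape setting, because histories of different lengths consume different tape positions before assigning a given element its priority. The result of \cite{hartline2005characterizing} is correspondingly qualified --- canonical representations hold \emph{up to the randomness chosen at initialization} (as this paper itself notes in the footnote to the proof of Proposition~\ref{prop:one-shot-DP}) --- and their proof proceeds by analyzing the reachability relation among representations of a common logical state and showing that SHI plus reversibility forces each transition to act as a fixed bijection once the initial randomness is set. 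Your sketch correctly locates the difficulty in ``upgrading'' the distributional guarantee, but the proposed bookkeeping over a single shared tape $\rand$ does not account for the misalignment of consumed randomness across logically equivalent sequences of different lengths, which is exactly where the unqualified pointwise claim breaks.
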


\begin{claim}\label{clm:rev_not_del} If an ADT supports logical deletion, then it is reversible. The converse is not true.
\end{claim}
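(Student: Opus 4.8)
The plan is to prove the two directions separately: the forward implication by constructing, for any pair of reachable logical states, an explicit operation sequence connecting them, and the converse by a small counterexample. Throughout I rely on the fact (stated in Section~\ref{sec:HIbasic}) that the post-operation logical state $\stateLog'$ is a \emph{deterministic} function of $\stateLog$ and $\op$.

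For the forward direction, suppose the ADT supports logical deletion via an operation $\delete$, and let $A$ and $B$ be reachable from the initial state $\stateLog^0$, witnessed by finite sequences $\seq_A, \seq_B$ with $\adt(\seq_A).\s = A$ and $\adt(\seq_B).\s = B$. Let $\{a_1,\dots,a_k\}$ be the finite set of identifiers tagging the operations of $\seq_A$, and set $\seq_{\mathrm{del}} = \del{a_1}\|\cdots\|\del{a_k}$. First I would record the elementary congruence property: because transitions are deterministic, $\seq\logequiv\seq'$ implies $\seq\|\rho \logequiv \seq'\|\rho$ for every continuation $\rho$. Applying the logical-deletion identity once for each $a_i$ and chaining these congruences gives
\[
\seq_A \| \seq_{\mathrm{del}} \;\logequiv\; \big(\cdots((\seq_A)_{-a_1})_{-a_2}\cdots\big)_{-a_k} = \emptystring,
\]
the empty sequence, since every operation of $\seq_A$ carries some tag $a_i$ and is therefore eventually removed. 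Evaluating both sides from $\stateLog^0$ and using determinism to split off the prefix $\seq_A$ yields $\adt(A, \seq_{\mathrm{del}}).\s = \stateLog^0$; composing with $\seq_B$ then gives $\adt(A, \seq_{\mathrm{del}}\|\seq_B).\s = B$. Hence $A$ and $B$ are mutually reachable and the ADT is reversible.

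For the converse, I would exhibit a reversible ADT admitting no logical-deletion operation. Take the single-bit toggle: the logical state is $\stateLog \in \{0,1\}$ initialized to $0$, with the sole operation $\mathsf{flip}(\id)$ setting $\stateLog \gets \stateLog \oplus 1$ irrespective of $\id$. Both reachable states are connected by a single $\mathsf{flip}$, so the ADT is reversible. Suppose toward a contradiction that some $\delete$ satisfies Definition~\ref{def:logical_deletion}. Instantiating the defining equivalence at $\seq = \mathsf{flip}(a)$, $\id^* = a$ gives $(\mathsf{flip}(a)\|\del{a}) \logequiv \emptystring$, forcing $\adt(1, \del{a}).\s = 0$; instantiating it at $\seq = \mathsf{flip}(b)$ with $b \neq a$, $\id^* = a$ gives $(\mathsf{flip}(b)\|\del{a}) \logequiv \mathsf{flip}(b)$, forcing $\adt(1, \del{a}).\s = 1$. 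Since $\adt(1, \del{a}).\s$ is a single deterministic value, this is a contradiction, so no such operation exists.

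The only place demanding care — the main obstacle — is the forward direction's claim that deleting every identifier appearing in $\seq_A$ restores $\stateLog^0$. This depends on (i) the convention that every operation is tagged by an identifier, so that $\seq_{\mathrm{del}}$ eliminates all of $\seq_A$, and (ii) the congruence property of $\logequiv$, which is precisely what licenses applying logical deletion to the successively shortened sequences rather than to $\seq_A$ alone. Both reduce to determinism of the transition map, so I expect only bookkeeping rather than genuine difficulty.
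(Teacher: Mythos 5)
Your proof is correct and follows essentially the same route as the paper: the forward direction deletes every identifier appearing in a witnessing sequence to return to the initial state (exactly the paper's argument, which you merely spell out more carefully via the congruence of $\logequiv$ under appending and the subsequent composition with $\seq_B$), and the converse is refuted by a state-forgetting counterexample. The only difference is the concrete instance---you use a single-bit toggle where the paper uses a ``most recent ID'' ADT---but both derive the same contradiction by exhibiting two sequences that collide on the same logical state yet whose logical deletions must yield different states, violating determinism of the transition map.
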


\begin{proof}
(Logical deletion $\implies$ reversible): To prove reversibility, it is sufficient to give a sequence of operations that takes the ADT back to its initial state. Fix a state $A$ that is reachable from the initial state. Pick any sequence $\seq$ that takes the initial state to $A$. Delete all of the $\id$'s in $\seq$ one by one. By logical deletion, the resulting state is the initial state.

(Reversible $\centernot\implies$ logical deletion): 
Consider the ADT where the set of (non-delete) operations is the space of IDs and the logical state consists of the most recent ID. Suppose also that $\outLog = \bot$ always. 
This ADT is reversible: for any IDs $A$ and $B$, $\adt((A,B)).\s = B$. 
(We note that this ADT trivially admits a SHI implementation.)

Consider two sequences of IDs $(A,C)$ and $(B,C)$, where $A\neq B$. If this ADT supports logical deletion, then we derive a contradiction: 

\begin{align*}
&(A,\bot) = \adt((A,C,\delete(C))) = \adt((C,\delete_C)) \\
&= \adt((B,C,\delete(C))) = (B,\bot),
\end{align*}
where the first and last equalities follow from logical deletion and the second and third follow by construction.
This counter example illustrates that deletion may require the data controller to keep sufficient records of all users to update the state. 
\end{proof}

We now prove that SHI implies deletion-as-control. We restate the theorem for convenience.
\SHIDEL*

\begin{proof}[Proof of Theorem~\ref{thm:perfect_shi_del}]
Fix \cont and any \env and \subj in the deletion-as-control game. By the definition of logical deletion, the sequence \queries in the real world and $\queries'$ in the ideal world are logically equivalent. 
Let $\Sim$ be the simulator that always outputs $\Rand'_\cont = \Rand_\cont$. 
By Theorem~\ref{thm:shi_canonical}, the real-world and ideal-world states are identical (since the randomness and the logical states are identical). So, we satisfy the first condition of deletion-as-control. Furthermore, since the value of $\Rand'_\cont$ equals the value of $\Rand_\cont$, they are identically distributed. Thus, this controller satisfies $(0,0)$-deletion-as-control.
\end{proof}

\end{document}